\newcommand{\ignore}[1]{}
\newcommand{\nop}[1]{}
\newcommand{\eat}[1]{}
\newtheorem{definition}{Definition}
\newtheorem{lemma}{Lemma}
\newtheorem{theorem}{Theorem}
\newtheorem{example}{Example}
\newtheorem{property}{Property}
\newtheorem{remark}{Remark}
\newtheorem{observation}{Observation}
\newcommand{\squishlisttight}{
 \begin{list}{$\bullet$}
  { \setlength{\itemsep}{2pt}
    \setlength{\parsep}{0pt}
    \setlength{\topsep}{2pt}
    \setlength{\partopsep}{0pt}
    \setlength{\leftmargin}{2em}
    \setlength{\labelwidth}{1.5em}
    \setlength{\labelsep}{0.5em}
} }
\newcommand{\squishnumlist} {
\newcounter{qcounter}
\begin{list}{\arabic{qcounter}.~}{\usecounter{qcounter}} 
{  \setlength{\itemsep}{0pt}
    \setlength{\parsep}{0pt}
    \setlength{\topsep}{0pt}
    \setlength{\partopsep}{0pt}
    \setlength{\leftmargin}{2em}
    \setlength{\labelwidth}{1.5em}
    \setlength{\labelsep}{0.5em}
}}
\newcommand{\squishend}{
  \end{list}
}
\mathchardef\mhyphen="2D
\newcommand{\kw}[1]{{\ensuremath {\mathsf{#1}}}\xspace}
\newcommand{\kwnospace}[1]{{\ensuremath {\mathsf{#1}}}}
\newcommand{\jbhuang}[1]{{\color{black}{#1}}}
\newcommand{\jinbin}[1]{{\color{black}{#1}}}
\newcommand{\sigmodreview}[1]{{\color{green}{#1}}}
\newcommand{\ego}{\kwnospace{ego}-\kw{network}}
\newcommand{\egos}{\kwnospace{ego}-\kw{networks}}
\newcommand{\score}{\kw{score}}
\newcommand{\context}{\kw{SC}}
\newcommand{\TSD}{\kw{TSD}}
\newcommand{\baseline}{\kw{baseline}}
\newcommand{\bound}{\kw{bound}}
\newcommand{\Naive}{\kw{Hybrid}}
\newcommand{\ADV}{\kw{GCT}}
\newcommand{\ADVindex}{\kwnospace{GCT}-\kw{index}}
\newcommand{\GCT}{\kw{GCT}}
\newcommand{\TSDindex}{\kwnospace{TSD}-\kw{index}}
\newcommand{\random}{\kw{Random}}
\newcommand{\bitmap}{\kw{Bits}}
\newcommand{\component}{\kwnospace{Comp}-\kw{Div}}
\newcommand{\core}{\kwnospace{Core}-\kw{Div}}
\newcommand{\trussdiv}{\kwnospace{Truss}-\kw{Div}}
\newcommand{\stitle}[1]{\vspace{1ex} \noindent{\bf #1}}
\begin{document}
\title{Truss-based Structural Diversity Search in Large Graphs}


\author{Jinbin Huang,
        Xin Huang,
        and Jianliang Xu
\IEEEcompsocitemizethanks{
\IEEEcompsocthanksitem J. Huang, X. Huang and J. Xu are with Hong Kong Baptist University.\protect\\
E-mail: \{jbhuang, xinhuang, xujl\}@comp.hkbu.edu.hk}
}




\IEEEtitleabstractindextext{%
\eat{
\begin{abstract}
Nowadays, information spreads quickly on  online social networks. Social decisions made by individuals are easily influenced by information from their social neighborhoods. A key predictor of social contagion is the multiplicity of social contexts inside the individual's contact neighborhood, which is termed \emph{structural diversity}. 
However, the existing models of component-based and core-based structural diversity have limited decomposability for analyzing large-scale networks, and suffer from the inaccurate reflection of social context diversity. 
In this paper, we propose a new truss-based structural diversity model to overcome the weak decomposability. It regards each maximal connected $k$-truss as a distinct social context, by breaking up  weak-tied components.
Based on this model, we study a novel problem of truss-based structural diversity search in a graph $G$, that is, to find the $r$ vertices with the highest truss-based structural diversity and return their social contexts. 
To tackle this problem, we propose an online structural diversity search algorithm in $O(\rho(m+\mathcal{T}))$ time, where $\rho$, $m$, and $\mathcal{T}$ are respectively the arboricity, the number of edges, and the number of triangles in $G$. To improve the efficiency, we develop efficient techniques of graph sparsification and a diversity bound for pruning.
We also design an elegant and compact index, called \TSDindex, for further expediting the search process. Furthermore, we propose an algorithm of \ADVindex-based structural diversity search, which utilizes global triangle information for fast ego-network truss decomposition and a highly compressed \ADVindex. 
Extensive experiments on real datasets demonstrate the effectiveness and efficiency of our model and algorithms, against state-of-the-art methods.

\end{abstract}
}

\begin{abstract}
Social decisions made by individuals are easily influenced by information from their social neighborhoods. A key predictor of social contagion is the multiplicity of social contexts inside the individual’s contact neighborhood, which is termed structural diversity. However, the existing models have limited decomposability for analyzing large-scale networks, and suffer from the inaccurate reflection of social context diversity. In this paper, we propose a new truss-based structural diversity model to overcome the weak decomposability. Based on this model, we study a novel problem of truss-based structural diversity search in a graph G, that is, to find the r vertices with the highest truss-based structural diversity and return their social contexts. o tackle this problem, we propose an online structural diversity search algorithm in $O(\rho(m+\mathcal{T}))$ time, where $\rho$, $m$, and $\mathcal{T}$ are respectively the arboricity, the number of edges, and the number of triangles in $G$.  To improve the efficiency, we design an elegant and compact index, called TSD-index, for further expediting the search process. We further optimize the structure of TSD-index into a highly compressed GCT-index. 
Our GCT-index-based structural diversity search utilizes the global triangle information for fast index construction and finds answers in $O(m)$ time. 
Extensive experiments demonstrate the effectiveness and efficiency of our proposed model and algorithms, against state-of-the-art methods.
\end{abstract}
\begin{IEEEkeywords}
Structural Diversity, Top-$k$ Search, TSD-index, $k$-truss Mining
\end{IEEEkeywords}
}

\maketitle


\IEEEdisplaynontitleabstractindextext

%
\IEEEpeerreviewmaketitle

\IEEEraisesectionheading{\section{Introduction}\label{sec:introduction}}

\IEEEPARstart{O}{nline} social networks (Twitter, Facebook, Instagram, etc.) have been
important platforms for individuals to exchange information with their friends.
Social contagion \cite{burt1987social,DBLP:conf/kdd/KempeKT03,pastor2001epidemic,UganderBMK12} is a phenomenon that individuals are influenced by the information received from their social neighborhoods, e.g., acting the same as friends in sharing posts or adopting political opinions. 
Social decisions made by individuals often depend on the \emph{multiplicity of distinct social contexts} inside his/her contact neighborhood, which is termed  \emph{structural diversity} \cite{UganderBMK12,huang2013top,chang2017scalable}.  
Many studies on Facebook \cite{UganderBMK12} show that users are much more likely to join Facebook and become engaged if they have a larger structural diversity, i.e., a larger number of distinct social contexts. Given the important role of structural diversity, a fundamental problem of structural diversity search is to find the $r$ users with the highest structural diversity in graphs \cite{huang2013top,chang2017scalable}, which can be beneficial to political campaigns \cite{huckfeldt1995citizens}, viral marketing \cite{DBLP:conf/kdd/KempeKT03}, promotion of health practices \cite{UganderBMK12}, cooperation in social dilemmas \cite{qin2018neighborhood}, and so on.

The problem of structural diversity search has been recently studied based on two structural diversity models of $k$-sized component \cite{huang2013top,chang2017scalable} and $k$-core \cite{XHuang15}. 
However, one significant limitation of both models  is their limited decomposability for analyzing large-scale networks, 
which may lead to inaccurate reflection of social context diversity. To address this issue, in this paper, 
we propose a new structural diversity model based on $k$-truss.
A $k$-truss requires that every edge is contained in at least ($k$-2) triangles in the $k$-truss  \cite{cohen2008}. Intuitively, a $k$-truss signifies strong social ties among the members in this social group, while tending to break up weak-tied social groups and discard tree-like components. Our model treats each maximal connected $k$-truss as a distinct social context. 
As we will demonstrate, our model has several major advantages. First, thanks to $k$-truss, our model has a strong decomposability for analyzing large-scale networks at different levels of granularity. Second, a compact and elegant  index can be designed for efficient truss-based structural diversity search in a linear cost w.r.t. graph size. 
Third, when compared with other models, our model shows superiority in the evaluation of influence propagation on real-world networks.




\begin{figure}[t]
\centering \mbox{
\subfigure[Graph $G$]{\includegraphics[width=0.45\linewidth]{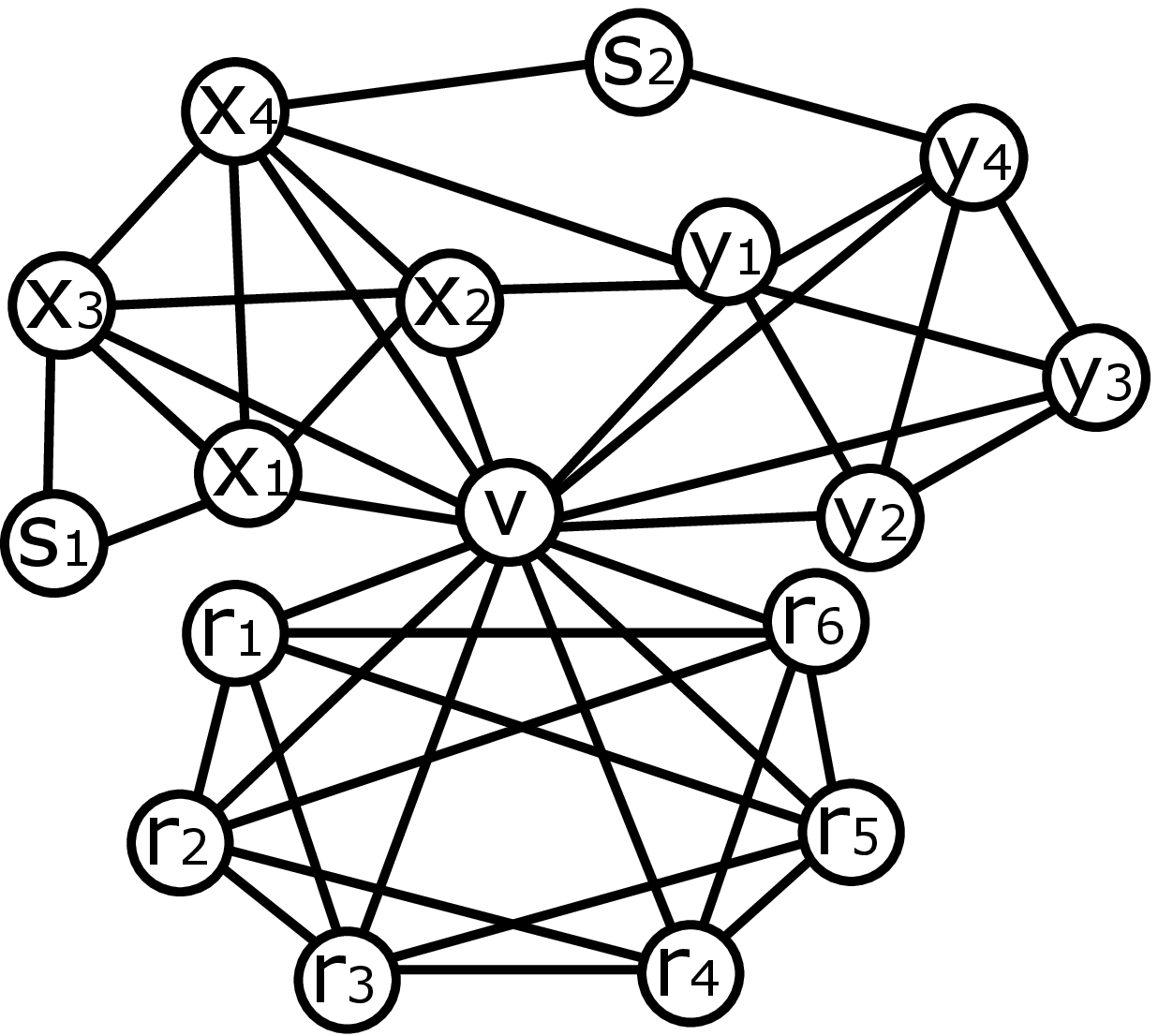}} \quad
\subfigure[Ego-Network $G_{N(v)}$]{\includegraphics[width=0.50\linewidth]{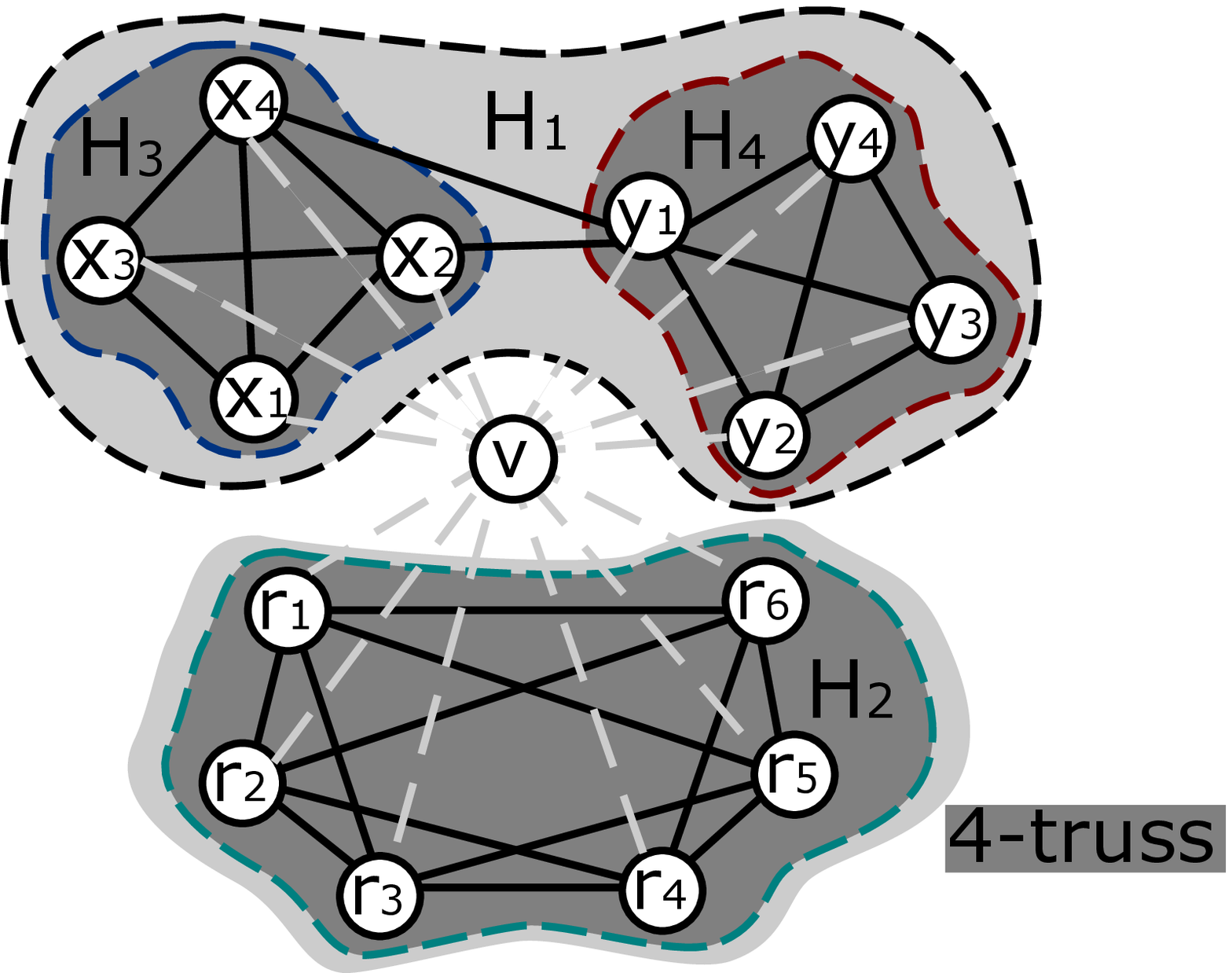}} }
\vspace*{-0.3cm}
\caption{A running example}
\label{fig.intro_example}
\end{figure}

\stitle{\emph{Motivating Example.}} Consider a social network $G$ in Figure~\ref{fig.intro_example}(a). The \ego of an individual $v$ is a subgraph of $G$ formed by all $v$'s neighbors as shown in the light gray region (excluding 
vertex $v$) in Figure~\ref{fig.intro_example}(b). To analyze the social contexts in Figure~\ref{fig.intro_example}(b), different structural diversity models have substantial differences:

\squishlisttight
\item 
\emph{Component-based structural diversity model} regards each connected component of vertex size at least $k$ as a social context \cite{huang2013top,chang2017scalable}. The component $H_1$ having 8 vertices is regarded as one social context. However, in terms of graph structure, two subgraphs $H_3$ and $H_4$ shown 
in Figure~\ref{fig.intro_example}(b) are loosely connected through edges $(x_2,y_1)$ and $(x_4,y_1)$, and vertices ($x_1$ and $x_3$) span long distances to vertices ($y_2$, $y_3$ and $y_4$). Thus, $H_3$ and $H_4$ can be reasonably treated as two different social contexts.
Unfortunately, the attempt of adjusting parameter \emph{$k$ using any value} does not help 
the decomposition of $H_1$.

\item 
\emph{Core-based structural diversity model} regards a maximal connected $k$-core as a social context \cite{UganderBMK12, XHuang15}. A $k$-core requires that every vertex has degree at least $k$ within the $k$-core. For \emph{$1\leq k\leq 3$}, $H_1$ is regarded as one maximal connected $k$-core, which cannot be decomposed into disjoint components; for \emph{$k\geq 4$}, $H_1$ is no longer counted as a feasible social context. 

\item 
\emph{Our truss-based structural diversity model} treats each maximal connected $k$-truss as a distinct social context. For $k=4$, $H_1$ is decomposed into two maximal connected 4-trusses $H_3$ and $H_4$ in Figure~\ref{fig.intro_example}(b), where each edge has at least two triangles. As a result, $H_2$, $H_3$ and $H_4$ are regarded as three distinct social contexts in the \ego of $v$, and the truss-based structural diversity of $v$ is 3.  
\end{list}

\jinbin{In light of the above example, truss-based structural diversity search is a pressing need. However, to the best of our knowledge, the problem of truss-based structural diversity search over graphs, has not been studied yet. In this paper, we invetigate the problem to find the $r$ vertices with the highest truss-based structural diversity and return their social contexts. We propose efficient algorithms for truss-based structural diversity search.}

\jinbin{However, efficient computation of truss-based structural diversity search raises significant challenges. A straightforward online search algorithm is to compute the structural diversity for all vertices and return the top-$r$ vertices, which is inefficient. 
Because it is costly 
to compute the structural diversity for all vertices in large graphs, from scratch without any pruning. The subgraph extraction of an \ego needs the costly operation of triangle listing \cite{Latapy08}, not even talking about the truss decompostion \cite{WangC12} 
for finding all maximal connected $k$-trusses. 
On the other hand, developing 
a diversity bound for pruning search space
is also difficult. Unlike the symmetry structure of \egos in the component-based model \cite{huang2013top,chang2017scalable}, non-symmetry structural properties 
restrict
our truss-based model to derive an efficient pruning bound. 
Therefore, existing structural diversity algorithms for component-based and core-based models \cite{huang2013top,chang2017scalable,XHuang15}  do not work for our 
truss-based model.   

}

\eat{Motivated by this example, in this paper, we study the problem of truss-based structural diversity search over graphs, that is to find the $r$ vertices with the highest truss-based structural diversity and return their social contexts. To tackle this problem, we first propose an online search algorithm to compute the structural diversity for all vertices and return  the top-$r$ vertices. Clearly, it is costly  to compute the structural diversity for all vertices from scratch without any pruning.
}


\jinbin{Fortunately, truss-based structural diversity has many desirable features for developing efficient indexes and algorithms. }
To  improve the efficiency of truss-based structural diversity search, we propose several 
useful optimization techniques. We develop an efficient top-$r$ search framework  
to prune vertices for avoiding structural diversity computation.
The heart of our framework is to exploit two important pruning techniques: (1) graph sparsification and (2) a diversity bound. Specifically, we first make use of structural properties of $k$-truss and propose graph sparsification to remove from the graph unqualified edges and nodes that will not be in any $k$-truss. Second, we develop an upper bound of diversity for pruning unqualified answers, leading to an early termination of our top-$r$ search. Furthermore, we develop a novel truss-based structural diversity index, called \TSDindex, which is a compact and elegant tree structure to keep the structural information for all \egos in $G$. Based on the \TSDindex, we propose an index-based top-$r$ search algorithm to quickly find answers. 
Furthermore, to explore the sharing computation across vertices, we utilize the global triangle listing one-shot for fast ego-network extraction and develop a fast bitmap technique for ego-network decomposition. Leveraging a new data structure of \ADVindex compressed from \TSDindex, we propose  \ADV for truss-based structural diversity search, which achieves a smaller index size and a faster query time.

\eat{
\jinbin{
To  improve the efficiency of truss-based structural diversity search, we develop two important pruning techniques via the analysis of the properties of $k$-truss: (1) graph sparsification for pruning the unqualified edges and nodes before searching and (2) a diversity bound for pruning unqualified answers to early stop the search. Furthermore, we develop a novel truss-based structural diversity index, called \TSDindex to keep the structural information for all \egos in $G$. Based on the \TSDindex, we propose an index-based top-$r$ search algorithm to quickly find answers. 
}
}

\eat{
We conduct extensive experimental studies on large real-world networks and highlight the following findings. First, 
our \TSDindex-based search algorithm is highly efficient on all networks, which takes only a few seconds in running time. 
Second, 
we design two tests of influence propagation  \cite{DBLP:conf/kdd/KempeKT03,goyal2011celf++} over the whole network and \egos, respectively.
Experiments show the superiority of our method against other competing models, by achieving a higher contagion probability. Moreover, we conduct a case study on a DBLP collaboration network. The results in a visualized graph shows a stronger decomposability  of truss-based model over other structural diversity models. 
On the whole, we make the following contributions:
}


%
To summarize, we make the following contributions:

\squishlisttight
\item We use a maximal connected $k$-truss to model a neighborhood 
social context in the \ego. We 
define the truss-based structural diversity and then 
formulate a new problem of truss-based structural diversity search over graphs. (Section~\ref{sec.problemdef})

\item We present a method of computing truss-based structural diversity using truss decomposition. Based on this, we develop an online search algorithm to tackle our problem, and 
give a comprehensive theoretical analysis of algorithm complexity. (Section~\ref{sec.baseline})

\item We analyze the structural properties of truss-based social contexts, and develop two useful pruning techniques of graph sparsification and a diversity bound. 
Equipped with them, we develop an efficient framework for structural diversity search with an early termination mechanism. (Section~\ref{sec.bound-search})

\item We design a space-efficient truss-based structural diversity index (\TSDindex) 
to keep the structural diversity information for all \egos. We propose a \TSDindex-based search algorithm to quickly find answers in a linear cost w.r.t. graph size. (Section~\ref{sec.indexbased})

\item 
 We propose  \ADV for truss-based structural diversity search based on the efficient techniques of fast ego-network truss decompostion and a compressed \ADVindex.
(Section~\ref{sec.bitmap})

\item 
We validate the efficiency and effectiveness of our proposed methods through extensive experiments. 
(Section~\ref{sec.exp})
\end{list} 

We discuss related work in Section~\ref{sec.related}, and conclude the paper with a summary in Section~\ref{sec.con}.

\section{Problem Definition}
\label{sec.problemdef}

We consider an undirected and unweighted simple graph $G=(V, E)$ with $n=|V|$ vertices and $m=|E|$ edges. We define $N(v)=\{u\in V: (v,u)\in E\}$ as the set of neighbors of a vertex $v$, and $d(v)=|N(v)|$ as the degree of $v$ in $G$. Let $d_{max}$ represent the maximum degree in $G$. For a set of vertices $S\subseteq V$, the induced subgraph of $G$ by $S$ is denoted by $G_S$, where the vertex set is $V(G_S) = S$ and the edge set is $E(G_S) = \{(v,u)\in E: v, u\in S\}$.  W.l.o.g. we assume that the considered graph $G$ is connected, indicating that $m \geq n - 1$ and  $n\in O(m)$. The assumption 
is similarly made in \cite{Latapy08,XHuang15}.  

\subsection{Ego-Network}  

We define an \ego~\cite{DBLP:journals/apin/DingZ18,mcauley2014discovering}  in the following.

\begin{definition}
\label{def.ego}
[Ego-Network] Given a vertex $v\in V$, the \ego of $v$, is a subgraph of $G$ induced by the vertex set $N(v)$, denoted by $G_{N(v)}$, where the vertex set $V(G_{N(v)})=N(v)$ and the edge set $E(G_{N(v)})$ $= \{(u, w)\in E: $ $u, w\in N(v)\}$.
\end{definition}

In the literature, the term ``neighborhood induced subgraph of $v$'' \cite{XHuang15} has also been used
to indicate the \ego of $v$, since the \ego is formed by all neighbors of $v$. 
For example, consider the graph $G$ in Figure~\ref{fig.intro_example}(a) and the vertex $v\in V$, the \ego of $v$ is shown as the gray region in Figure~\ref{fig.intro_example}(b), which is formed by the induced subgraph of $G$ by vertices $N(v)=\{x_1, \ldots, $ $ x_4, y_1, \ldots , $ $y_4, r_1,\ldots, r_6\}$
, \eat{Note that the \ego $G_{N(v)}$ excludes the center vertex $v$ with its incident edges (e.g., $(v, r_1)$) and also other irrelevant edges (e.g., $(s_2, x_4))$.} excluding the center vertex $v$ with its incident edges.
\eat{}

\subsection{Truss-based Social Context and Structural Diversity}

A triangle in $G$ is a cycle of length 3. Given three vertices $u, v, w \in V$, the triangle formed by $u, v, w$ is denoted by $\triangle_{uvw}$. Given a subgraph $H\subseteq G$, the support of an edge $e=(u,v)\in E(H)$ is defined as the number of triangles containing edge $e$ in $H$, i.e.,  $\sup_H(e)=|\{\triangle_{uvw} : (u, w), (v, w) \in E(H)\}|$. Figure~\ref{fig.trussandsupport}(a) shows the support of each edge in graph $H_1$.  There exists only one triangle $\triangle_{x_2x_4y_1}$ containing $(x_2,y_1)$, and $\sup_{H_1}(x_2, y_1) =1$.
We drop the subscript and denote the support as $\sup(e)$, when the context is obvious.

\eat{
\begin{definition}
[Support] Given a subgraph $H\subseteq G$, the support of an edge $e=(u,v)\in E(H)$, denoted by $\sup_H(e)$ is defined as the number of triangles containing edge $e$ in $H$, i.e.,  $\sup_H(e)=|\{\triangle_{uvw} : (u, w), (v, w) \in E(H)\}|$. 
\label{def.support}
\end{definition}
}



\eat{
\begin{example}
Figure~\ref{fig.trussandsupport}(a) shows the support of each edge of graph $H_1$, where $H_1$ is a subgraph of $G$ in Figure~\ref{fig.intro_example}(a). The support of edge $(x_2,y_1)$ is 1 in $H_1$, i.e., $\sup_{H_1}(x_2, y_1) =1$, since only one triangle $\triangle_{x_2x_4y_1}$ contains $(x_2,y_1)$. However, consider the graph $G$ in Figure~\ref{fig.intro_example}(a), the edge $(x_2,y_1)$ is contained in two triangles $\triangle_{x_2x_4y_1}$  and $\triangle_{vx_2y_1}$ in $G$. Thus, the support of edge $(x_2,y_1)$ in graph $G$ is 2, i.e., $\sup_{G}(x_2, y_1) =2$.  
\end{example}
}


 A $k$-truss of graph $G$ is defined as the \text{largest} subgraph of $G$ such that every edge has support of at least $k-2$ in this subgraph \cite{WangC12,huang2014querying}. 
For a given $k \geq 2$, the $k$-truss of a graph $G$ is unique, which may be disconnected with multiple components. In our truss-based structural diversity model, we treat each connected component of the $k$-truss as a distinct \emph{social context}. The definition of social contexts in an \ego   is given below.  

\begin{definition}
[Social Contexts] Given a vertex $v$ and an integer $k\geq 2$, each connected component of the $k$-truss in  $G_{N(v)}$ is called a social context. Thus, the social contexts of $v$ are represented by all vertex sets of components, denoted by $\context(v) = \{ V(H): H $ is a connected component of the $k$-truss in $G_{N(v)} \}$.
\label{def.socialcontext}
\end{definition}

By Def.~\ref{def.socialcontext}, each social context is a component of $k$-truss, which is connected and also the maximal subgraph of the $k$-truss. Therefore, as an alternative,  we also call a social context as a \emph{maximal connected $k$-truss} throughout the paper. For example, consider an \ego $G_{N(v)}$ in Figure~\ref{fig.intro_example}(b) and $k=4$. The $4$-truss of $G_{N(v)}$ is presented by the darker gray region. We regard a connected component $H_3$ as a neighborhood social context in $G_{N(v)}$, which is represented by $V(H_3)=\{x_1, x_2, x_3, x_4\}$. Thus, the social contexts of $v$ have $\context(v) = \{\{x_1, x_2, $ $x_3, x_4\},\{y_1, y_2, y_3, y_4\}, $ $\{r_1, r_2, r_3, r_4, r_5, r_6\}\}$. 

Based on the definition of social contexts, we can define our key concept of \emph{truss-based structural diversity} as follows.
 



\eat{
 We give a definition of connected $k$-truss below. 

\begin{definition}
[Connected $k$-Truss] Given an integer $k\geq 2$, a subgraph $H$ of $G$ is a connected $k$-truss, if 1) $H$ is connected and 2) every edge $e$ of $H$ has support at least $k-2$ in $H$, i.e., $\sup_{H}(e)\geq k-2$. 
\label{def.truss}
\end{definition}

Several connected $k$-trusses can overlap with multiple vertices in the graph, which may be redundant to depict a distinct social context. To get around this issue, we propose a definition of maximal connected $k$-truss below. 

\begin{definition}
[Maximal Connected $k$-Truss] Given a connected $k$-truss $H$, we say that $H$ is a maximal connected $k$-truss if there exists no connected $k$-truss $H'$ such that $H\subset H'\subseteq G$. 
\label{def.truss}
\end{definition}
}


\eat{
\sigmodreview{
In some recent study working on the $k$-truss community search problem \cite{huang2014querying}, triangle connectivity between edges is also required when defining maximal connected $k$-truss. In our maiximal connected $k$-truss definition, the triangle connectivity is not considered because of the difference between research problems.
}
}



\eat{
\begin{example}
 The graph $G_{N(v)}$ shown in Figure~\ref{fig.intro_example}(b) is 3-truss, since each edge has at least one triangle in $G_{N(v)}$. $G_{N(v)}$ is not a connected 3-truss, because $G_{N(v)}$ is disconnected with two components of $H_1$ and $H_2$. Subgraphs $H_1$, $H_2$, $H_3$, and $H_4$ are connected 3-trusses. Moreover, $H_1$ and $H_2$ are maximal connected 3-trusses.
\end{example}
}

\eat{
\begin{figure}[t]
\centering 
\includegraphics[width=0.5\linewidth]{figure/def_struct_div.eps}
\caption{Example of structural diversity}
\label{fig.truss_div}
\end{figure}
}

\eat{

\jinbin{

\begin{table}[t]
\begin{center}\vspace*{-0.4cm}
\scriptsize
\caption[]{\textbf{Frequently Used Notations}}\label{tab:notations}
\begin{tabular}{|c|c|}
\hline
Notation & Description \\ \hline \hline
$G = (V, E)$ &  An undirected simple graph $G$\\ \hline
$n;m$ & The number of vertices/edges in $G$ \\ \hline
$N(v)$	& The set of neighbors of $v$ in $G$\\ \hline
$deg(v)$	& Degree of $v$ in $G$\\ \hline
$G_{N(v)}$	& An \ego of $v$\\ \hline
$n_v; m_v$ & The number of vertices/edges in $G_{N(v)}$ \\ \hline
$\triangle_{uvw}$	& Triangle formed by vertices $u, v, w$\\ \hline
$\mathcal{T}$ & The total number of triangle in $G$ \\ \hline
$\sup_H(e)$	& \emph{Support} of edge $e$ in $H$ \\ \hline
$\tau(H)$	& Trussness of graph $H$  \\ \hline
$\tau_H(e)$	& Trussness of edge $e$ in $H$\\ \hline
$score(v)$	& Truss-based Structural diversity of $v$ \\ \hline
$\TSD_v$    & \TSDindex of $v$ \\ \hline
\end{tabular}
\end{center}
\end{table}
}
}

\begin{definition}
[Truss-based Structural Diversity]  Given a vertex $v$ and an integer $k\geq 2$, the truss-based structural diversity of $v$ is the multiplicity of social contexts $\context(v)$, denoted by $score(v) =|\context(v)|$.
\label{def.trussbased}
\end{definition}
The truss-based structural diversity is exactly the number of connected components of the $k$-trusses in the \ego.  Consider the \ego $G_{N(v)}$ in Figure~\ref{fig.intro_example}(b) and $k=4$, the $4$-truss of$G_{N(v)}$ has three connected components $H_2$, $H_3$, and $H_4$, thus $\score(v)=3$.

\eat{
\begin{example}
 Consider the \ego $G_{N(v)}$ of vertex $v$ in Figure~\ref{fig.intro_example}(b). Assuming that $k=3$, the \ego $G_{N(v)}$ has two maximal connected 3-trusses of $H_1$ and $H_2$; thus the truss-based structural diversity of vertex $v$ has $\score(v)=2$; For $k=4$, the \ego $G_{N(v)}$ has three maximal connected 4-trusses $H_2$, $H_3$, and $H_4$; thus $\score(v)=3$. 
\end{example}
}


\begin{figure}[t]
\centering \mbox{
\subfigure[Support]{\includegraphics[width=0.45\linewidth]{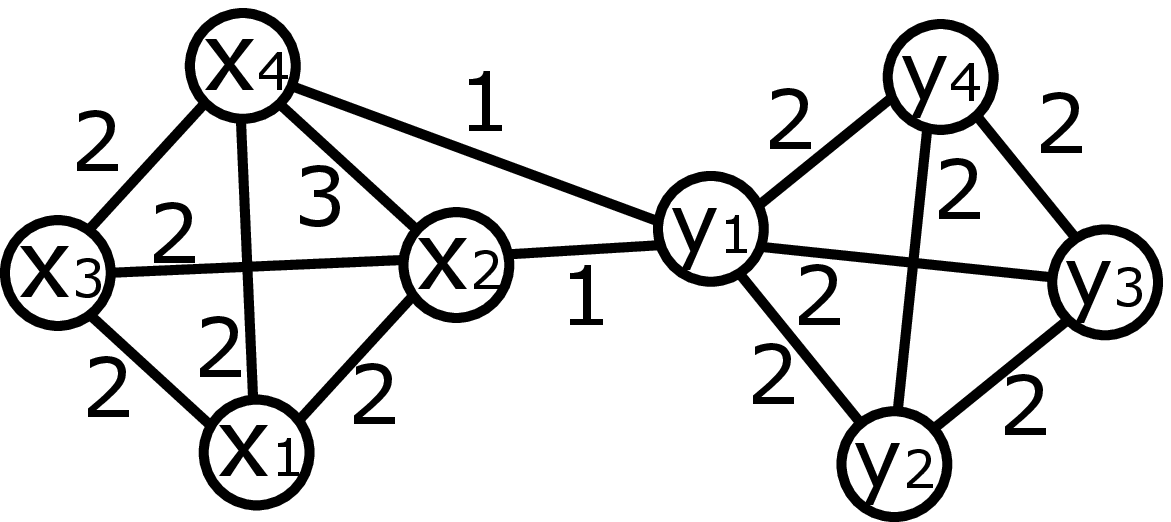}} 
\quad
\quad
\subfigure[Trussness]{\includegraphics[width=0.45\linewidth]{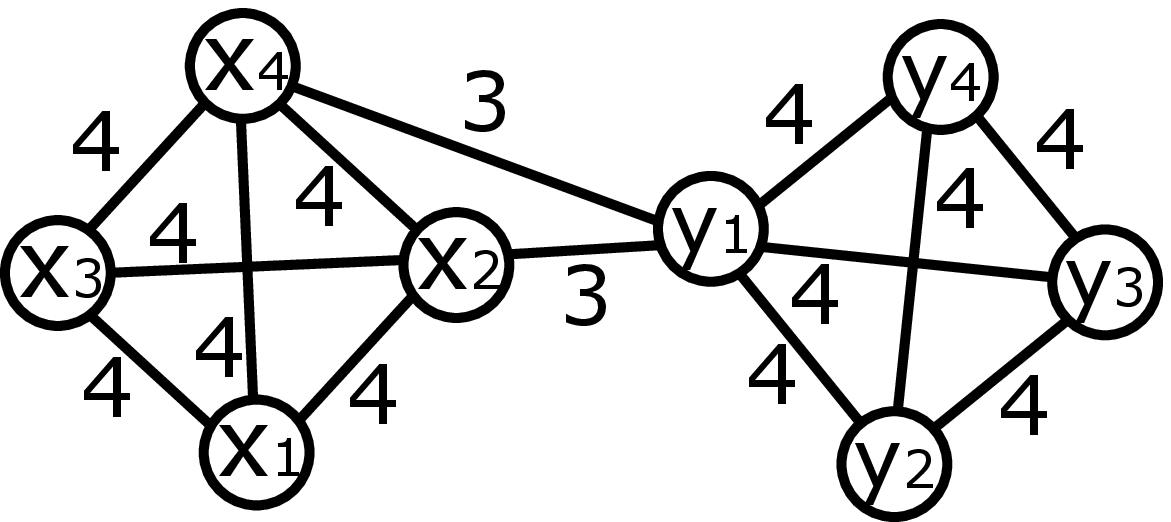}} 
}
\caption{The support and trussness of edges in $H_1$}
\label{fig.trussandsupport}
\end{figure}

\eat{
\begin{remark}
\cite{UganderBMK12}, \cite{XHuang15} propose two definitions of structural diversity respectively based on  $k$-sized component and maximal connected $k$-core. However, these two definition has some limitations in the analysis of structural diversity for large-scale \egos, due to their significantly limited power of graph decomposition. Consider the above example of \ego $G_{N(v)}$ in Figure~\ref{fig.intro_example}(b). For the $k$-sized component method, it counts one component with at least $k$ vertices by one score for the structural diversity. As we can see, the components $H_1$ and $H_2$ are respectively regarded as one score for any $k\leq 6$. No matter how parameter $k$ changes, the subgraph $H_1$ cannot be decomposed into two components. However, in terms of structural connections, it's obvious that two subgraphs $H_2$ and $H_3$ are loosely connected through edges $(x_2,y_1)$ and $(x_4,y_1)$. Thus, they can be reasonably treated as two different social circles under strong connectivity constraints. Applying our definition of truss-based structural diversity with $k=4$, we can decompose $H_1$ into two maximal connected 4-trusses $H_3$ and $H_4$. Similarly, for the $k$-core-based method, it counts one maximal connected $k$-core, in which each vertex has at least $k$ neighbors, by one score for the structural diversity. For $1\leq k\leq 3$, $H_1$ will be regarded as one maximal connected $k$-core, which can't be decomposed into disjoint components; For $k\geq 4$, $H_1$ will not be counted as a feasible social circle any more. Overall, as we can see, the definition of truss-based structural diversity is more effective in analyzing the structural diversity in \egos.
\end{remark}
}

\subsection{Problem Statement}
The problem of 
truss-based structural diversity search studied in this paper is formulated as follows.

\stitle{Problem statement}: Given a graph $G$ and two integers $r$ and $k$ where $1\leq r\leq n$  and $ k \geq 2$, the goal of top-$r$ truss-based structural diversity search  is to find a set of $r$ vertices in $G$  having the highest scores of truss-based structural diversity w.r.t.\ the trussness threshold $k$, and return their social contexts.

Consider the graph $G$ in Figure~\ref{fig.intro_example} with $r=1$ and $k=4$, the answer of our problem is the vertex $v$, which has the highest structural diversity $\score(v)= 3$ and its social contexts $\context(v) = \{\{x_1, x_2, x_3, x_4\},\{y_1, $ $y_2, y_3, y_4\},\{r_1, r_2, r_3, r_4, r_5, r_6\}\}$.




\section{Online Search Algorithm}\label{sec.baseline}
In this section, we develop an online search algorithm for top-$r$ truss-based structural diversity search. The idea of our method is intuitively simple. The algorithm first computes the structural diversity score for each vertex in $G$, and then returns an answer of $r$ vertices having the highest scores and their social contexts. 
In the following, we first introduce the truss decomposition for finding all $k$-trusses in a graph. Leveraging truss decomposition, we then present a procedure for structural diversity score computation. Finally, we present our online search algorithm and analyze the algorithm complexity.


\subsection{Truss Decomposition}

\stitle{Trussness}. 
We start with a useful definition of  trussness below.

\begin{definition}
[Trussness] Given a subgraph $H\subseteq G$, the trussness of $H$ is defined as the minimum support of edges in $H$ plus 2, denoted by $\tau(H) =\min_{e\in E(H)}$ $\{\sup_H(e)+2\}$. The trussness of an edge $e\in H$ denoted by $\tau_H(e)$ is defined as the largest number $k$ such that there exists a connected $k$-truss $H'\subseteq H$ containing $e$, i.e., $$\tau_H(e) = \max_{H'\subseteq H, e\in E(H')} \tau(H').$$
\vspace*{-0.2cm}
\label{def.edgetruss}
\end{definition}
Similar to the notation of support, we drop the subscript and denote the trussness $\tau_H(e)$ as $\tau(e)$ when the context is obvious. Also we can define the trussness of a vertex $v$ in the similar way, i.e., $\tau_H(v) = \max_{H'\subseteq H, v\in V(H')} \tau(H')$.

\begin{example}
Figure~\ref{fig.trussandsupport}(b) shows the trussness of each edge in graph $H_1$. First, according to the edge support in Figure~\ref{fig.trussandsupport}(a),  the trussness of subgraph $H_1$ is $\tau(H_1) = \min_{e\in E(H_1)}$ $\{\sup_{H_1}(e)+2\} = 1+2=3$. Thus, we have $\tau_{H_1}(x_2, y_1)= $ $\max_{H'\subseteq H_1, e\in E(H')}$ $ \tau(H') = 3$. 
\eat{
For an edge $(x_2, x_4)$, since a subgraph $H_3$ of $H_1$ in Figure~\ref{fig.intro_example}(b) is a 4-truss containing it, the trussness $\tau_{H_1}(x_2, x_4) =4$. There exists no 5-truss subgraph of $H_1$ containing $(x_2, x_4)$.  
Even if $\sup_{H_1}(x_2, x_4)=3$, the trussness  $\tau_{H_1}(x_2, x_4) =4$ could be less than  $\sup_{H_1}(x_2, x_4)+2 = 5$.

} 
\end{example}

\stitle{Algorithm of truss decomposition.}  Truss decomposition on graph $G$ is to find the $k$-trusses of $G$ for all possible $k$'s. Given any number $k$, the $k$-truss of $G$ is the union of all edges with trussness at least $k$. Equally, truss decomposition on graph $G$ is to compute the trussness of each edge in $G$. 

For the self-completeness of our techniques and reproducibility, the detailed algorithm of truss decomposition \cite{WangC12} is presented in Algorithm~\ref{algo:truss-decomp}. \jinbin{The algorithm starts from the computation of the support $\sup_G(e)$ for each edge $e\in E$, using the technique of triangle listing (line 1). It sorts all edges in the ascending order of their support, using the efficient technique of bin sort \cite{cormen2009introduction} (line 2). Let $k$ start from 2. The algorithm iteratively removes from graph $G$ an edge $e$ with the lowest support of $\sup_G(e) \leq k-2$, and assigns the trussness $\tau_G(e) = k$ (lines 5-6 and 11). Meanwhile, it updates the support of other affected edges due to the deletion of edge $e$ (lines 7-10). The algorithm terminates when the remaining graph $G$ is empty; Otherwise, it increases the number $k$ by 1 and repeats the above process of edge removal. Finally, it computes the trussness of each edge $e$ in $G$.
}

\begin{algorithm}[t]
\small
\caption{Truss Decomposition \protect\cite{WangC12}} \label{algo:truss-decomp}
\begin{flushleft} 
\textbf{Input:} $G=(V, E)$\\
\textbf{Output:} $\tau(e)$ for each $e\in E$\\
\end{flushleft}
\vspace*{-0.3cm}
\
\begin{algorithmic}[1]

\STATE Compute $sup_G(e)$ for each edge $e\in E$;

\STATE Sort all the edges in ascending order of their support;

\STATE $k \leftarrow 2$;

\STATE \textbf{while} ($\exists e$ such that $sup_G(e) \leq (k-2)$)

\STATE \hspace{0.3cm} Let $e=(u,v)$ be the edge with the lowest support;

\STATE \hspace{0.3cm} Assume, w.l.o.g, $d(u) \leq d(v)$;

\STATE \hspace{0.3cm} \textbf{for} (each $w\in N(u)$ \textbf{and} $(u,v)\in E$ \textbf{do})

\STATE \hspace{0.6cm} $sup_G((u,w)) \leftarrow sup_G((u,w))-1$;

\STATE \hspace{0.6cm} $sup_G((v,w)) \leftarrow sup_G((v,w))-1$;

\STATE \hspace{0.6cm} Reorder $(u,w)$ and $(v,w)$ according to their new support;

\STATE \hspace{0.3cm} $\tau_G(e) \leftarrow k$, remove $e$ from $G$;

\STATE if(not all edges in $G$ are removed)

\STATE \hspace{0.3cm} $k \leftarrow k+1$;

\STATE \hspace{0.3cm} \textbf{Goto} Step 4;

\STATE return $\{\tau_G(e)| e\in E\}$;

\end{algorithmic}
\end{algorithm}

\eat{
\begin{algorithm}[t]
\small
\caption{Truss Decomposition \cite{WangC12}} \label{algo:truss-decomp}
\begin{flushleft} 
\textbf{Input:} $G=(V, E)$\\
\textbf{Output:} $\tau(e)$ for each $e\in E$\\
\end{flushleft}
\vspace*{-0.3cm}
\
\begin{algorithmic}[1]

\STATE Compute $sup_G(e)$ for each edge $e\in E$;

\STATE Sort all the edges in ascending order of their support;

\STATE $k \leftarrow 2$;

\STATE \textbf{while} ($\exists e$ such that $sup_G(e) \leq (k-2)$)

\STATE \hspace{0.3cm} Let $e=(u,v)$ be the edge with the lowest support;

\STATE \hspace{0.3cm} Assume, w.l.o.g, $deg(u) \leq deg(v)$;

\STATE \hspace{0.3cm} \textbf{for} (each $w\in N(u)$ \textbf{and} $(u,v)\in E$ \textbf{do})

\STATE \hspace{0.6cm} $sup_G((u,w)) \leftarrow sup_G((u,w))-1$;

\STATE \hspace{0.6cm} $sup_G((v,w)) \leftarrow sup_G((v,w))-1$;

\STATE \hspace{0.6cm} Reorder $(u,w)$ and $(v,w)$ according to their new support;

\STATE \hspace{0.3cm} $\tau_G(e) \leftarrow k$, remove $e$ from $G$;

\STATE if(not all edges in $G$ are removed)

\STATE \hspace{0.3cm} $k \leftarrow k+1$;

\STATE \hspace{0.3cm} \textbf{Goto} Step 4;

\STATE return $\{\tau_G(e)| e\in E\}$;

\end{algorithmic}
\end{algorithm}

}

\eat{
The algorithm starts from the computation of the support $\sup_G(e)$ for each edge $e\in E$, using the technique of triangle listing (line 1). It sorts all edges in the ascending order of their support, using the efficient technique of bin sort \cite{cormen2009introduction} (line 2). Let $k$ start from 2. The algorithm iteratively removes from graph $G$ an edge $e$ with the lowest support of $\sup_G(e) \leq k-2$, and assigns the trussness $\tau_G(e) = k$ (lines 5-6 and 11). Meanwhile, it updates the support of other affected edges due to the deletion of edge $e$ (lines 7-10). The algorithm terminates when the remaining graph $G$ is empty; Otherwise, it increases the number $k$ by 1 and repeats the above process of edge removal. Finally, it computes the trussness of each edge $e$ in $G$.
}

\subsection{Computing $score(v)$}
Algorithm~\ref{algo:comp-score} presents a procedure of computing $score(v)$, which calculates the number of maximal connected $k$-trusses in the \ego $G_{N(v)}$. The algorithm first extracts $G_{N(v)}$ from graph $G$ (line 1), and then applies the truss decomposition in Algorithm~\ref{algo:truss-decomp} on $G_{N(v)}$ (line 2). After obtaining the trussness of all edges, it removes all the edges $e$ with $\tau_{G_{N(v)}}(e) < k$ from $G_{N(v)}$ (line 3). The remaining graph $G_{N(v)}$ is the union of all maximal connected $k$-trusses. Applying the breadth-first-search, all connected components are identified as the social contexts $\context(v) =  \{ V(H): H $ is a maximal connected $k$-truss in $G_{N(v)} \}$ (line 4).  Algorithm~\ref{algo:comp-score} finally returns the structural diversity $score(v) = |\context(v)|$ (lines 5-6).

\begin{algorithm}[t]
\small
\caption{Computing $score(v)$} \label{algo:comp-score}
\begin{flushleft} 
\textbf{Input:} $G=(V, E)$, a vertex $v$,  the trussness threshold $k$\\
\textbf{Output:} $score(v)$\\
\end{flushleft}
\vspace*{-0.3cm}
\
\begin{algorithmic}[1]

\STATE Extract an \ego of $v$ as $G_{N(v)}$ from $G$ by Def.~\ref{def.ego};

\STATE Apply the truss decomposition on $G_{N(v)}$ using Algorithm~\ref{algo:truss-decomp};

\STATE Remove all edges $e$ with $\tau_{G_{N(v)}}(e) < k$ from $G_{N(v)}$;

\STATE 
Identify all connected components in $G_{N(v)}$ as 
the social contexts $\context(v) =  \{ V(H): H $ is a maximal connected $k$-truss in $G_{N(v)} \}$;

\STATE $score(v) \leftarrow |\context(v)|$; 

\STATE \textbf{return} $score(v)$;
\end{algorithmic}
\end{algorithm}

\subsection{Online Search Algorithm}
\jinbin{Equipped with the procedure of computing $score(v)$, we present an online search algorithm to address the problem of top-$r$ structural diversity search, as shown in Algorithm~\ref{algo:baseline-alg}. }
It computes the structural diversity for all vertices in graph $G$ from scratch. Algorithm~\ref{algo:baseline-alg} first initializes an answer set $\mathcal{S}$ as empty (line 1). Then, each vertex $v \in V$ is enumerated to compute the structural diversity using Algorithm~\ref{algo:comp-score} (lines 2-3). The algorithm compares $score(v)$ with the smallest structural diversity in the answer set $\mathcal{S}$, and 
checks whether $v$ should be added into answer set $\mathcal{S}$ 
(lines 4-7). Finally, Algorithm~\ref{algo:baseline-alg} terminates by returning the answer set $\mathcal{S}$ and their social contexts $\context(v)$ for $v\in \mathcal{S}$ (line 8). 

\begin{example}
We apply  Algorithm~\ref{algo:baseline-alg} on graph $G$ in Figure~\ref{fig.intro_example} with $k=4$ and $r=1$. Accordingly, it computes the structural diversity for each vertex in $G$ and invokes Algorithm~\ref{algo:comp-score} in total of $|V|=17$ times. Finally, we obtain the top-1 structural diversity result of vertex $v$ with $score(v)=3$. 
\end{example}

\begin{algorithm}[t]
\small
\caption{Online Search Algorithm} \label{algo:baseline-alg}
\begin{flushleft} 
\textbf{Input:} $G=(V, E)$, an integer $r$, the trussness threshold $k$\\
\textbf{Output:} Top-$r$ truss-based structural diversity results\\
\end{flushleft}
\vspace*{-0.3cm}
\
\begin{algorithmic}[1]

\STATE Let an answer set $\mathcal{S} \leftarrow \emptyset$;

\STATE \textbf{for} each vertex $v\in V$

\STATE  \hspace{0.3cm}  Computing $score(v)$ using Algorithm~\ref{algo:comp-score};

\STATE  \hspace{0.3cm}  \textbf{if} $|\mathcal{S}|<r$ \textbf{then} $\mathcal{S} \leftarrow \mathcal{S} \cup \{v\}$;

\STATE  \hspace{0.3cm}  \textbf{else if} $score(v) > \min_{v'\in \mathcal{S}}score(v')$ \textbf{then}

\STATE  \hspace{0.3cm}  \hspace{0.3cm}  $u\leftarrow \arg\min_{v'\in \mathcal{S}}score(v')$;

\STATE  \hspace{0.3cm}  \hspace{0.3cm}  $\mathcal{S} \leftarrow (\mathcal{S}- \{u\} ) \cup \{v\}$;

\STATE \textbf{return} $\mathcal{S}$ and their social contexts $\context(v)$ for $v\in \mathcal{S}$; 

\end{algorithmic}
\end{algorithm}

\eat{
\stitle{Complexity Analysis. }
Algorithm~\ref{algo:baseline-alg} runs on graph $G$ taking $O(\rho (m+\mathcal{T}))$ time and $O(m)$ space, where $\mathcal{T}$ is the number of triangles in $G$, $\rho$ is the arboricity  \cite{ChibaN85} of $G$, and $\rho \leq \min\{\lfloor \sqrt{m} \rfloor, d_{max}\}$. Please refer to Appendix B for the detailed proof.
}

\jinbin{

\subsection{Complexity Analysis}

\begin{lemma}\label{lemma.score}
Algorithm~\ref{algo:comp-score} computes $score(v)$ for $v$ in $O(\sum_{u\in N(v)}$ $\min\{d(u), d(v)\} + \sum_{(u, w )\in E(G_{N(v)})}\min\{d(u), d(w)\})$ time and $O(m)$ space.
\end{lemma}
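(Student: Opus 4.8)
The plan is to bound the running time of Algorithm~\ref{algo:comp-score} step by step and to match the two summation terms in the claim to its two dominant operations: the ego-network extraction of line~1 and the truss decomposition of line~2. The remaining steps---the edge removal of line~3, the connected-component identification of line~4, and the final bookkeeping of lines~5--6---each cost only $O(n_v + m_v)$, where $n_v = |N(v)|$ and $m_v = |E(G_{N(v)})|$, and I would argue these are absorbed by the two dominant terms since $m_v$ is itself bounded by the first sum.

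For line~1, I would extract $G_{N(v)}$ by first marking every vertex of $N(v)$ in an indicator array, and then, for each neighbor $u\in N(v)$, computing the common neighbors of $u$ and $v$ via the standard smaller-degree intersection trick: scan the smaller of $N(u)$ and $N(v)$ and test membership in the other against the marking array. Each detected common neighbor $w$ yields an ego-edge $(u,w)\in E(G_{N(v)})$. This costs $O(\min\{d(u),d(v)\})$ per neighbor $u$, so summing over $u\in N(v)$ gives exactly the first term $\sum_{u\in N(v)}\min\{d(u),d(v)\}$. Each ego-edge is discovered a constant number of times, which also certifies $m_v = O(\sum_{u\in N(v)}\min\{d(u),d(v)\})$.

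For line~2, I would note that the cost of truss decomposition (Algorithm~\ref{algo:truss-decomp}) on $G_{N(v)}$ is dominated by its support-computation step, which enumerates the triangles of $G_{N(v)}$. Computing $\sup(e)$ for an edge $e=(u,w)$ costs $O(\min\{d_{G_{N(v)}}(u), d_{G_{N(v)}}(w)\})$, and since a vertex's degree inside the ego-network never exceeds its degree in $G$, this is bounded by $O(\min\{d(u),d(w)\})$. Summing over all ego-edges yields the second term. I would then argue that the subsequent bin-sort and the iterative edge-peeling loop are dominated by this: each support decrement in lines~7--10 of Algorithm~\ref{algo:truss-decomp} can be charged to a triangle already counted during support computation, so the peeling cost stays within $O(\sum_{(u,w)\in E(G_{N(v)})}\min\{d(u),d(w)\})$.

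Combining the two dominant terms and absorbing the $O(n_v+m_v)$ lower-order steps gives the claimed time bound. For the space bound, $G_{N(v)}$ has at most $m$ edges, and the marking array, bin-sort buckets, trussness labels, and component labels are all $O(n+m)=O(m)$, so the algorithm runs in $O(m)$ space. The hard part will be the charging argument for the peeling phase of the truss decomposition: I must verify that every support update can be amortized against a triangle counted in the support-computation step, so that the entire cost of line~2 collapses into the single sum $\sum_{(u,w)\in E(G_{N(v)})}\min\{d(u),d(w)\}$ rather than inflating it.
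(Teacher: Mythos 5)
Your proposal is correct and follows essentially the same structure as the paper's proof: the first summation is charged to the ego-network extraction via smaller-degree neighbor intersection, the second to the truss decomposition, the edge-removal and component-identification steps are absorbed as lower-order $O(m_v)$ costs, and the $O(m)$ space bound follows from $G_{N(v)}\subseteq G$. The only difference is that the paper simply cites prior work (\cite{huang2014querying}) for the bound $O(\sum_{(u,w)\in E(G_{N(v)})}\min\{d(u),d(w)\})$ on the truss decomposition of the ego-network, whereas you re-derive it inline via the support-computation and peeling charging argument, which is a valid and more self-contained treatment of the same step.
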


\begin{proof}
The algorithm obtains \ego $G_{N(v)}$ from $G$ (line 1 of  Algorithm~\ref{algo:comp-score})  taking $O(\sum_{u\in N(v)}\min\{d(u), d(v)\})$ time, since it needs to list all triangles $\triangle_{vuw}$ containing $v$ to enumerate the edges $(u,w)\in E(G_{N(v)})$ \cite{tsourakakis2009doulion}. Second, for $G_{N(v)}$ associated with the edge set  $E(G_{N(v)})$, the step of applying truss decomposition on $G_{N(v)}$ (line 2 of  Algorithm~\ref{algo:comp-score}) takes $O(\sum_{(u, w )\in E(G_{N(v)})}$ $\min\{d(u), d(w)\}$ time \cite{huang2014querying}. In addition, the other two steps of edge removal and component identification both take $O(|E(G_{N(v)})|) $ $ \subseteq$ $ O(\sum_{(u,w)\in E(G_{N(v)})} 1)  $ time. Overall, the time complexity of  Algorithm~\ref{algo:comp-score} is $O(\sum_{u\in N(v)} \min\{d(u),$ $d(v)\} + $ $\sum_{(u,w) \in E(G_{N(v)})} $ $ \min\{d(u), d(w)\})$. 

We analyze the space complexity. Because of $G_{N(v)} \subseteq G$, an \ego  $G_{N(v)}$ takes $O(n+m)$ space. The social contexts $\context(v)$ take  $O(n)$ space. Hence, the space complexity of Algorithm~\ref{algo:comp-score} is $O(n+m)\subseteq O(m)$, due to $n\in O(m)$ by our assumption of graph connectivity.
\end{proof}

\begin{theorem}\label{theorem.baseline}
Algorithm~\ref{algo:baseline-alg} runs on graph $G$ taking 
$$O( \sum_{v\in V} \{
 \sum_{u\in N(v)} \min\{d(u), d(v)\} 
 + \sum_{(u,w) \in E(G_{N(v)})} \min\{d(u),$$$$ d(w)\}
 \} )$$
 time and $O(m)$ space.
\end{theorem}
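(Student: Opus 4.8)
The plan is to reduce the statement directly to Lemma~\ref{lemma.score} by summing the per-vertex cost over all $v\in V$, since Algorithm~\ref{algo:baseline-alg} is nothing more than a single pass over the vertex set that invokes Algorithm~\ref{algo:comp-score} once per vertex plus a small amount of bookkeeping. First I would isolate the cost of one iteration of the main loop (lines 2--7). Line 3 computes $score(v)$ via Algorithm~\ref{algo:comp-score}, whose running time is exactly the quantity $\sum_{u\in N(v)}\min\{d(u),d(v)\} + \sum_{(u,w)\in E(G_{N(v)})}\min\{d(u),d(w)\}$ by Lemma~\ref{lemma.score}. The remaining lines 4--7 maintain the top-$r$ answer set $\mathcal{S}$: comparing $score(v)$ against the current minimum score in $\mathcal{S}$ and, when warranted, evicting that minimum and inserting $v$. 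Implementing $\mathcal{S}$ as a min-heap keyed on $score(\cdot)$, each such update costs $O(\log r)$, so one iteration contributes the Lemma~\ref{lemma.score} term plus an additive $O(\log r)$.

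Next I would sum over all $v\in V$. The score computations aggregate to precisely
$$O\Big(\sum_{v\in V}\big\{\textstyle\sum_{u\in N(v)}\min\{d(u),d(v)\} + \sum_{(u,w)\in E(G_{N(v)})}\min\{d(u),d(w)\}\big\}\Big),$$
which is the claimed time bound. It then remains to absorb the total maintenance overhead $O(n\log r)$ into this expression. Here I would use the elementary observation that every inner term satisfies $\min\{d(u),d(v)\}\geq 1$, so $\sum_{u\in N(v)}\min\{d(u),d(v)\}\geq d(v)$ and hence $\sum_{v\in V}\sum_{u\in N(v)}\min\{d(u),d(v)\}\geq \sum_{v\in V}d(v)=2m\geq n$. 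Thus the dominant summation is already $\Omega(n)$, and the answer-set maintenance is a lower-order term subsumed by the stated big-$O$ (treating $r$ as a small parameter, so that the $\log r$ factor does not introduce a separate logarithmic blow-up).

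For the space bound I would argue by \emph{reuse} across iterations rather than accumulation. By Lemma~\ref{lemma.score}, each call to Algorithm~\ref{algo:comp-score} needs $O(m)$ working space for the extracted \ego, its edge-trussness labels, and the identified components; crucially, this workspace is overwritten before the next vertex is processed and is therefore not cumulative. The only persistent storage is $\mathcal{S}$ together with the social contexts of its $r$ members, which is $O(n)\subseteq O(m)$ under the connectivity assumption $n\in O(m)$. Hence the total space stays $O(m)$. I expect the only genuine subtlety to lie in these two bookkeeping points, namely making precise that the repeated invocations of Algorithm~\ref{algo:comp-score} do not multiply the space bound by $n$, and that the $O(\log r)$-per-vertex answer maintenance is truly dominated rather than adding a logarithmic factor; the substantive content is the straightforward summation of the per-vertex cost already established in Lemma~\ref{lemma.score}.
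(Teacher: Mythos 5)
Your overall strategy is exactly the paper's: Algorithm~\ref{algo:baseline-alg} is a single pass that invokes Algorithm~\ref{algo:comp-score} once per vertex, so the time bound follows by summing the per-vertex cost of Lemma~\ref{lemma.score} over all $v\in V$, and the space bound follows because the workspace of each call to Algorithm~\ref{algo:comp-score} is reused rather than accumulated, with only the $O(n)$ answer set persisting. Your space argument matches the paper's and is fine.

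There is, however, a genuine (though easily repaired) gap in your handling of the answer-set maintenance. Your min-heap gives $O(n\log r)$ total overhead, and your absorption argument only establishes that the dominant summation is $\Omega(m)$. Since the problem statement allows any $1\leq r\leq n$, this does not suffice: on a sparse graph with few triangles (e.g., a long cycle) the dominant sum is $\Theta(m)=\Theta(n)$, while with $r=\Theta(n)$ the maintenance term can be $\Theta(n\log n)$, which is not subsumed. Your caveat ``treating $r$ as a small parameter'' is precisely a restriction that the theorem does not impose, so as written the proof only covers a special case. The paper closes this hole differently: because each $score(v)$ is an integer bounded by $n$, the top-$r$ set $\mathcal{S}$ can be maintained using bin sort in $O(n)$ total time and $O(n)$ space, which is dominated by the main summation unconditionally (as $\sum_{v\in V}\sum_{u\in N(v)}\min\{d(u),d(v)\}\geq 2m\geq n$). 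Replacing your heap with such a bucket structure keyed on the integer scores completes your argument without any assumption on $r$.
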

\begin{proof}
Algorithm~\ref{algo:baseline-alg} uses Algorithm~\ref{algo:comp-score} to compute $score(v)$ for each vertex $v\in V$, which totally takes 
$O( \sum_{v\in V} \{
 \sum_{u\in N(v)} $ $\min\{d(u), d(v)\} $ $
 + \sum_{(u,w) \in E(G_{N(v)})} $ $\min\{d(u), d(w)\}
 \} )$
 time by Lemma~\ref{lemma.score}. 
 Moreover the top-$r$ results $S$ can be maintained in $O(n)$ time and $O(n)$ space, using 
 bin sort. As a result, Algorithm~\ref{algo:baseline-alg} takes $O( \sum_{v\in V} \{
 \sum_{u\in N(v)} $ $  \min\{d(u), d(v)\} $ $
 + $ $\sum_{(u,w) \in E(G_{N(v)})} $ $\min\{d(u), d(w)\}
 \} )$ time and $O(m+n) $ $\subseteq O(m)$ space. 
\end{proof}

\stitle{Complexity Simplification.}  Theorem~\ref{theorem.baseline} has a tight time complexity, but in a very complex form. We relax the time complexity to simplify form using graph arboricity \cite{ChibaN85}. 
Specifically, the arboricity $\rho$ of a graph $G$ is defined as the minimum number of spanning trees that cover all edges of graph $G$, and $\rho \leq \min\{\lfloor \sqrt{m} \rfloor, d_{max}\}$ \cite{ChibaN85}. For any subgraph $g \subseteq G$, the arboricity $\rho_{g}$ of $g$ has $\rho_{g} \leq \rho$. We have the following theorem. 

\begin{theorem}\label{theorem.final-baseline}
Algorithm~\ref{algo:baseline-alg} runs on graph $G$ taking $O(\rho (m+\mathcal{T}))$ time and $O(m)$ space, where  $\rho$ is the arboricity of $G$ and $\mathcal{T}$ is the number of triangles in $G$.
\end{theorem}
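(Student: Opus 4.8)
The plan is to start from the exact running time established in Theorem~\ref{theorem.baseline} and collapse it to the stated form purely through the arboricity inequality of Chiba and Nishizeki~\cite{ChibaN85}, namely that $\sum_{(u,w)\in E(H)}\min\{d_H(u),d_H(w)\}\le 2\rho_H\,|E(H)|$ for any graph $H$, together with the subgraph monotonicity $\rho_H\le\rho$ for every $H\subseteq G$. I would split the bound of Theorem~\ref{theorem.baseline} into two sums, the ego-extraction cost $A=\sum_{v\in V}\sum_{u\in N(v)}\min\{d(u),d(v)\}$ and the ego truss-decomposition cost $B=\sum_{v\in V}\sum_{(u,w)\in E(G_{N(v)})}\min\{d(u),d(w)\}$, and bound each separately.

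For the first sum, I would observe that every undirected edge $(u,v)\in E$ is counted exactly twice in $A$, once with $v$ as the center and $u$ as the neighbor and once with the roles swapped, because $\min\{d(u),d(v)\}$ is symmetric. Hence $A=2\sum_{(u,v)\in E}\min\{d(u),d(v)\}$, and applying the arboricity inequality to $H=G$ yields $A\le 4\rho m=O(\rho m)$.

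For the second sum, the key combinatorial identity I would establish is $\sum_{v\in V}|E(G_{N(v)})|=3\mathcal{T}$: an edge $(u,w)\in E(G_{N(v)})$ exists precisely when $u,w\in N(v)$, i.e.\ when $\{u,v,w\}$ spans a triangle, and each triangle $\triangle_{abc}$ contributes exactly one such edge to each of the three ego-networks $G_{N(a)}$, $G_{N(b)}$, $G_{N(c)}$. I would then apply the arboricity inequality to each ego-network $H=G_{N(v)}$, using $\rho_{G_{N(v)}}\le\rho$, to get $B\le\sum_{v\in V}2\rho\,|E(G_{N(v)})|=2\rho\cdot 3\mathcal{T}=O(\rho\mathcal{T})$. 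Combining the two, the total time is $A+B=O(\rho m+\rho\mathcal{T})=O(\rho(m+\mathcal{T}))$, while the $O(m)$ space bound carries over unchanged from Lemma~\ref{lemma.score} and Theorem~\ref{theorem.baseline}.

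The main obstacle I anticipate is the bookkeeping around which degrees the inner $\min$ in $B$ refers to. The truss-decomposition cost of $G_{N(v)}$ is most naturally stated with degrees internal to $G_{N(v)}$, and the arboricity inequality applies cleanly to those; the cleanest route is therefore to bound $B$ directly through the subgraph arboricity $\rho_{G_{N(v)}}$ rather than the global degrees $d(\cdot)$ written in Theorem~\ref{theorem.baseline}, since $d_{G_{N(v)}}(u)\le d(u)$ and the tight truss-decomposition analysis of~\cite{huang2014querying} is phrased in terms of the decomposed subgraph. A secondary point to verify is the subgraph monotonicity $\rho_{g}\le\rho$ and the hidden constants from the double counting, but these are routine once the triangle-counting identity $\sum_{v\in V}|E(G_{N(v)})|=3\mathcal{T}$ is in place.
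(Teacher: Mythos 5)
Your proof follows essentially the same route as the paper's: the same split into the ego-extraction cost $A$ and the ego truss-decomposition cost $B$, the same double-counting bound $A = 2\sum_{(u,v)\in E}\min\{d(u),d(v)\} = O(\rho m)$ via Chiba--Nishizeki, and the same identity $\sum_{v\in V} |E(G_{N(v)})| = 3\mathcal{T}$ combined with the monotonicity $\rho_{G_{N(v)}} \le \rho$ to obtain $O(\rho\mathcal{T})$. The degree bookkeeping you flag as the main obstacle is a real issue that the paper glosses over: its proof writes the per-ego sum with global degrees $d(\cdot)$ and then claims it is $O(\rho_v m_v)$, which as literally written is not a valid application of the arboricity inequality (that inequality requires degrees internal to $G_{N(v)}$, and with global degrees the claimed inclusion can fail, e.g.\ for a clique whose vertices also carry many pendant neighbors); your resolution---bounding the actual truss-decomposition cost, which \cite{huang2014querying} states in terms of the decomposed subgraph's own degrees, by $O(\rho_{G_{N(v)}} m_v)$---is the correct reading and quietly repairs this imprecision.
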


\begin{proof} 
According to \cite{ChibaN85}, $O(\sum_{(u,w) \in E(G)}$ $ \min $ $\{d(u), d(v)\}) $ $\subseteq $ $O(\rho m),$ where $\rho$ is the arboricity of $G$. Thus, we have 
$$O( \sum_{v\in V} \{\sum_{u\in N(v)} \min\{d(u), d(v)\}\}) $$ $$\subseteq O(\sum_{(v, u)\in E} \min\{d(v), d(u)\}\}) \subseteq O(\rho m).$$

Now, we consider the remaining part of time complexity in Theorem~\ref{theorem.baseline} using the arboricity of \egos. For a vertex $v \in V$, the \ego $G_{N(v)}$ has $n_v$ vertices and $m_v$ edges, where $n_v = |N(v)|$ and $m_v = |\{\triangle_{vuw}: u, w\in N(v), (u,w)\in E\}|$. Let the number of triangles in graph $G$ be $\mathcal{T}$, and obviously $\mathcal{T} =\frac{\sum_{v\in V} m_v}{3}$. In addition, as $G_{N(v)}$ $\subseteq G$, the arboricity $\rho_{v}$ of 
$G_{N(v)}$ has $\rho_{v} \leq \rho$. As a result, we have:

$$O( \sum_{v\in V} \{\sum_{(u, w )\in E(G_{N(v)})}\min\{d(u), d(w)\}\})$$  $$\subseteq O(\sum_{v\in V} \rho_{v} m_{v}) \subseteq O(\rho \cdot \sum_{v\in V} m_{v}) \subseteq O(\rho \mathcal{T}).$$   

Combining the above two equations, we have:
 
$$O( \sum_{v\in V} \{
 \sum_{u\in N(v)} \min\{d(u), d(v)\} 
 + \sum_{(u,w) \in E(G_{N(v)})} \min\{d(u),$$ $$ d(w)\}
 \} )$$
$$\subseteq O(\rho (m+\mathcal{T})).$$
\end{proof}


}

\section{An Efficient Top-r Search Framework}\label{sec.bound-search}

The online search algorithm is inefficient for top-$r$ search, because it computes the structural diversity for all vertices on the entire graph. To improve the efficiency, we develop an efficient top-$r$ search framework in this section. The heart of our framework is to exploit two important pruning techniques: (1) graph sparsification and (2) upper bounding $score(v)$.  
\eat{The general idea is to shrink the graph size for a fast computation of $score(v)$, and 
terminate the search process early using a strategy of bound pruning, which reduces the number of vertices whose structural diversities are computed. We will first present the techniques of graph sparsification and upper bounding $score(v)$, and then outline our improved algorithm.
}

\subsection{Graph Sparsification}
The goal of graph sparsification is to remove from graph $G$ the unnecessary vertices and edges, which are not included in the maximal connected $k$-truss for any \ego. This removal does not affect the answer, but shrinks the graph size for efficiency improvement. 

\stitle{Structural Properties of $k$-truss.} 
We start from a structural property of $k$-truss. 
\eat{}


\begin{property}\label{lemma.trussfilter}
Given an edge $e^* \in E$, if $\tau_G(e^*) < (k+1)$, $e^*$ will not be included in any maximal connected $k$-truss in the \ego $G_{N(v)}$ for any vertex $v\in V$.
\end{property}

\begin{proof}
We prove it by contradiction. Assume that $G_{N(v)}$ has a maximal connected $k$-truss $H \subseteq G_{N(v)} $ containing $e^*$, where $|V(H)| \geq k$ and $\sup_{H}(e) \geq k-2$ for any $e\in E(H)$. Then, we add the vertex $v$ and its incident edges to $H$, to generate another subgraph $H'$ of $G$ where $V(H')=V(H)\cup \{v\}$ and $E(H') = E(H) \cup \{(v, u): u\in V(H)\}$.  It is easy to verify that for any $e\in E(H')$, $\sup_{H'}(e)\geq (k-2)+1=k-1$ holds. Thus, the trussness of $H'$ has $\tau(H')\geq k+1$. By Def.~\ref{def.edgetruss}, the trussness of $e^*\in E(H')$ in graph $G$ has $\tau_G(e^*) $ 
 $\geq \tau(H') \geq k+1 $, which is a contradiction. 
\end{proof}

Based on Property~\ref{lemma.trussfilter}, we can safely remove any edge $e$ with $\tau_G(e) < (k+1)$ from graph $G$. The details of graph sparsification are described as follows. Specifically, we first apply truss decomposition \cite{WangC12} on graph $G$ to obtain the trussness of all edges, and then delete all the edges $e$ with $\tau_G(e) < (k+1)$ from $G$. Due to the removal of edges, some vertices may become isolated. We continue to delete all isolated nodes from $G$. 
\jinbin{
Obviously, graph sparsification is a useful preprocessing step, which benefits efficiency improvement in the following aspects. On one hand, it reduces the graph size of $G$ and \egos, leading to a fast computation of structural diversity. On the other hand, it avoids computing structural diversity for those isolated vertices.  In the following, we discuss the practicality of graph sparsification on real-world datasets, based on the analysis of 
edge trussness distribution.


\begin{figure}[t]
\centering 
\small
\includegraphics[width=0.7\linewidth]{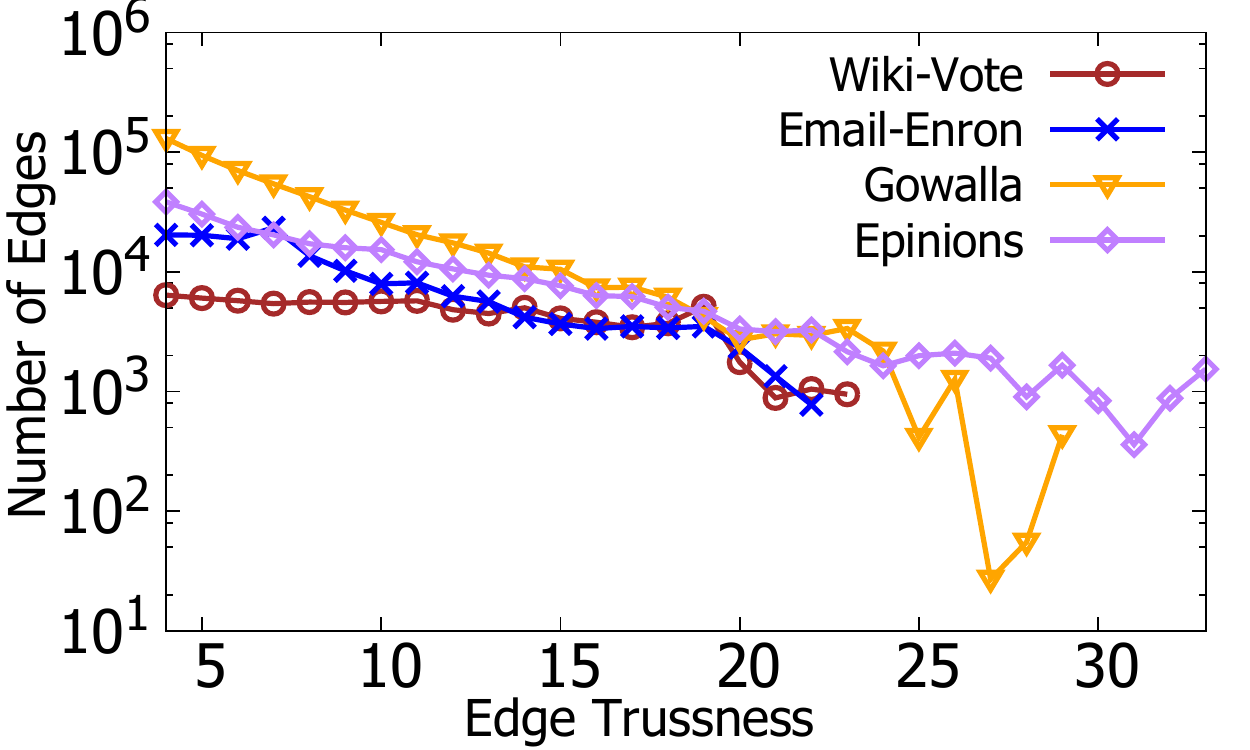}
\caption{The number of edges versus the increased edge trussness on four real-world graphs.}

\label{fig.truss_distribute}
\end{figure}

\stitle{Edge Trussness Distribution. } 
Figure~\ref{fig.truss_distribute} shows the edge trussness distribution of four real-world networks including Wiki-Vote, Email-Enron, Gowalla, and Epinions \cite{snapnets}. 
The range of edge trussness falls in [2, 33]. The number of edges in the y-axis are shown in the log plot. 
As we can see, the larger trussness is, the less number of edges has. Most edges have small trussness that can be filtered by graph sparsification. According to our statistics, graph sparsification can remove 45\% edges and 6.8\% isolated nodes from these four graphs on average for $k=5$. The significant pruning performance shows the technique of graph sparsification \jinbin{is well applicable} for our structural diversity search. In addition, we interestingly find that the number of edge trussness has a heavy-tailed distribution following a power-law property, which is similar to the vertex degree distribution \cite{barabasi1999emergence,faloutsos1999power}. 
}

\subsection{An Upper Bound of $score(v)$}\label{sec.upperbound}
\eat{
We analyze the structural properties of \egos and develop a tight upper bound of $score(v)$. Symmetry structure of \egos lends themselves to derive an efficient upper bound of structural diversity \cite{huang2013top}.
\eat{
Thus, we first study whether our truss-based structural diversity has the similar symmetry properties.
}
\jinbin{However, the same symmetry properties fails in our truss-based structural diversity model. The following observation formalizes the property of non-symmetry.}

\eat{
\stitle{Non-Symmetry.} Consider three vertices $u$, $v$, $w$ form a triangle $\triangle_{uvw}$ in $G$. The non-symmetry of truss-based structural diversity shows that the edges $(v,w)$, $(u,w)$, $(u,v)$ may have different trussnesses in \egos $G_{N(u)}$, $G_{N(v)}$, $G_{N(w)}$ respectively. In other words, $\tau_{G_{N(u)}}(v,w)$, $\tau_{G_{N(v)}}(u,w)$, and $\tau_{G_{N(w)}}(u,v)$ may not be the same. For example, consider three vertices $v$, $r_1$, and $r_2$ in graph $G$ shown in Figure~\ref{fig.intro_example}(a). For \ego $G_{N(v)}$, we have $\tau_{G_{N(v)}}(r_1, r_2) = 4$; For \ego $G_{N(r_1)}$, we have $\tau_{G_{N(r_1)}}(v, r_2) = 3$. Thus,  $\tau_{G_{N(v)}}(r_1, r_2)$ $ \neq $ $\tau_{G_{N(r_1)}}$ $(v, r_2)$. The following observation formalizes this property of non-symmetry. 
}

\begin{observation}\label{lemma.symmetricity}
(Non-Symmetry) Consider an edge $e=(v, u)$ $\in E$ and a common neighbor $w\in N(v)\cap N(u)$. The \egos $G_{N(v)}$ and $G_{N(u)}$ have non-symmetry structure for vertex $w$ as follows. Even if edge $(u,w)$ in the \ego $G_{N(v)}$ has $\tau_{G_{N(v)}}(u,w) \geq k$, edge $(v,w)$ in the \ego $G_{N(u)}$ may have $\tau_{G_{N(u)}}(v,w) < k$ . 
\end{observation}

\jinbin{
Observation \ref{lemma.symmetricity} can be supported by some counterexamples. For example, consider three vertices $v$, $r_1$, and $r_2$ in graph $G$ shown in Figure~\ref{fig.intro_example}(a). For \ego $G_{N(v)}$, we have $\tau_{G_{N(v)}}(r_1, r_2) = 4$; For \ego $G_{N(r_1)}$, we have $\tau_{G_{N(r_1)}}(v, r_2) = 3$. Thus,  $\tau_{G_{N(v)}}(r_1, r_2)$ $ \neq $ $\tau_{G_{N(r_1)}}$ $(v, r_2)$.
}

In view of this result, we infer that given an edge $(v, u)\in E$, the prospects for exploiting the process of computing $score(v)$ to derive an upper bound for $score(u)$ are not promising. It shows significant challenges for deriving an upper bound. The truss-based structural diversity cannot enjoy the nice symmetry properties of component-based structural diversity \cite{huang2013top}, which also brings challenges for score computation. We next investigate the structural properties of maximal connected $k$-truss, in search of prospects for an upper bound of $score(v)$. 

\stitle{An upper bound  $\overline{score}(v)$.} The smallest maximal connected $k$-truss is a completed graph of $k$ vertices as $k$-clique. A $k$-clique has $k$ vertices and $\frac{k(k-1)}{2}$ edges. Based on the analysis of \ego size, we can infer the following useful properties.

\begin{property}\label{lemma.trussdeg}
For a vertex $v\in V$, $score(v) \leq \lfloor \frac{d(v)}{k}\rfloor$.
\end{property}
\begin{proof} 
First, $G_{N(v)}$ has $d(v)$ vertices. Since the minimum vertex size of a maximal connected $k$-truss is $k$, $G_{N(v)}$ has at most $\lfloor \frac{d(v)}{k}\rfloor$ maximal connected $k$-trusses in $G_{N(v)}$. Thus, $score(v) \leq \lfloor \frac{d(v)}{k}\rfloor$ holds.
\end{proof}

\begin{property}\label{lemma.trussedge}
For a vertex $v\in V$, $score(v) \leq \lfloor \frac{2m_v}{k(k-1)} \rfloor$, where $m_v$ is the number of edges in \ego $G_{N(v)}$.
\end{property}
\begin{proof} 
First, $G_{N(v)}$ has $m_v$ edges. Since the minimum edge size of a maximal connected $k$-truss is $\frac{k(k-1)}{2}$ edges, $G_{N(v)}$ has at most $\lfloor  \frac{2m_v}{k(k-1)} \rfloor$ maximal connected $k$-trusses in $G_{N(v)}$. Therefore, $score(v) \leq \lfloor \frac{2m_v}{k(k-1)}\rfloor$ holds.
\end{proof}

Combining  Property~\ref{lemma.trussdeg} and \ref{lemma.trussedge}, we have the following lemma.

\begin{lemma}\label{lemma.trussbound}
For a vertex $v\in V$,  $score(v)$ has an upper bound of $\overline{score}(v) = \min\{\lfloor \frac{d(v)}{k}\rfloor, $ $ \lfloor \frac{2m_v}{k(k-1)} \rfloor\}$, i.e., $score(v) \leq \overline{score}(v)$ holds. 
\end{lemma}

Lemma~\ref{lemma.trussbound} helps in developing a tight upper bound $\overline{score}(v)$ for structural diversity $score(v)$.
For example, consider the vertex $v$ in Figure~\ref{fig.intro_example}. According to Lemma~\ref{lemma.trussbound}, the upper bound $\overline{score}(v)$ is 3 for $k=4$.  Interestingly, the structural diversity $score(v)$ is also 3. It shows that our upper bound $\overline{score}(v)$ is indeed tight for $score(v)$.

}

\jinbin{
In this section, we analyze the structural properties of \egos and develop a tight upper bound of $score(v)$. Symmetry structure of \egos lends themselves to derive an efficient upper bound of structural diversity \cite{huang2013top,chang2017scalable}. However, the same symmetry properties fails in our truss-based structural diversity model. The following observation formalizes the property of non-symmetry.

\stitle{Non-Symmetry.} Consider three vertices $u$, $v$, $w$ form a triangle $\triangle_{uvw}$ in $G$. The non-symmetry of truss-based structural diversity shows that the edges $(v,w)$, $(u,w)$, $(u,v)$ may have different trussnesses in the \egos $G_{N(u)}$, $G_{N(v)}$, $G_{N(w)}$ respectively. In other words, $\tau_{G_{N(u)}}(v,w)$, $\tau_{G_{N(v)}}(u,w)$, and $\tau_{G_{N(w)}}(u,v)$ may not be the same. For example, we consider three vertices $v$, $r_1$, and $r_2$ in graph $G$ shown in Figure~\ref{fig.intro_example}(a). For \ego $G_{N(v)}$, we have $\tau_{G_{N(v)}}(r_1, r_2) = 4$; For \ego $G_{N(r_1)}$, we have $\tau_{G_{N(r_1)}}(v, r_2) = 3$. As a result,  $\tau_{G_{N(v)}}(r_1, r_2)$ $ \neq $ $\tau_{G_{N(r_1)}}$ $(v, r_2)$. The following observation formalizes this property of non-symmetry. 

\begin{observation}\label{lemma.symmetricity}
(Non-Symmetry) Consider an edge $e=(v, u)$ $\in E$ and a common neighbor $w\in N(v)\cap N(u)$. The \egos $G_{N(v)}$ and $G_{N(u)}$ have non-symmetry structure for vertex $w$ as follows. Even if edge $(u,w)$ in the \ego $G_{N(v)}$ has $\tau_{G_{N(v)}}(u,w) \geq k$, edge $(v,w)$ in the \ego $G_{N(u)}$ may have $\tau_{G_{N(u)}}(v,w) < k$ . 
\end{observation}

In view of this result, we infer that given an edge $(v, u)\in E$, the prospects for exploiting the process of computing $score(v)$ to derive an upper bound for $score(u)$ are not promising. It shows significant challenges for deriving an upper bound. The truss-based structural diversity cannot enjoy the nice symmetry properties of component-based structural diversity \cite{huang2013top,chang2017scalable}, which also brings challenges for score computation. We next investigate the structural properties of maximal connected $k$-truss, in search of prospects for an upper bound of $score(v)$. 
}

\stitle{An upper bound $\overline{score}(v)$.} Consider that the smallest maximal connected $k$-truss is a completed graph of $k$ vertices as $k$-clique. A $k$-clique has $k$ vertices and $\frac{k(k-1)}{2}$ edges. Based on the analysis of \ego size, we can infer the following useful lemma.

\begin{lemma}\label{lemma.trussbound}
For a vertex $v\in V$,  $score(v)$ has an upper bound of $\overline{score}(v) = \min\{\lfloor \frac{d(v)}{k}\rfloor, $ $ \lfloor \frac{2m_v}{k(k-1)} \rfloor\}$, where $m_v$ is the number of edges in \ego $G_{N(v)}$.  Thus, $score(v) \leq \overline{score}(v)$ holds. 
\end{lemma}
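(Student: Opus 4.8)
The plan is to establish two independent upper bounds on the number of social contexts—one from a vertex budget and one from an edge budget—and then take their minimum. First I would recall from Definition~\ref{def.socialcontext} that the contexts counted by $score(v)$ are exactly the connected components of the $k$-truss residing inside $G_{N(v)}$. Being \emph{distinct} connected components of one subgraph, they are pairwise vertex-disjoint, and hence also edge-disjoint. This disjointness is the crucial feature that lets me charge each component against the global totals $d(v)$ (vertices of the ego-network) and $m_v$ (edges of the ego-network) without double counting.

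Second, the key structural fact I would verify is that every maximal connected $k$-truss contains at least $k$ vertices and at least $\frac{k(k-1)}{2}$ edges, i.e. the smallest possible connected $k$-truss is the clique $K_k$. The support constraint $\sup_H(e)\ge k-2$ forces each endpoint of any edge to share at least $k-2$ common neighbors with the other endpoint, so a component cannot have fewer than $k$ vertices; and on exactly $k$ vertices the constraint can be met only when all $\binom{k}{2}=\frac{k(k-1)}{2}$ edges are present, which is precisely $K_k$. I would state this as a minimum vertex count of $k$ and a minimum edge count of $\frac{k(k-1)}{2}$ per component.

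Given these ingredients, the two bounds follow by simple counting. Since $G_{N(v)}$ has exactly $d(v)$ vertices and the components are vertex-disjoint with at least $k$ vertices each, their number is at most $\lfloor d(v)/k\rfloor$. Analogously, since $G_{N(v)}$ has $m_v$ edges and the components are edge-disjoint with at least $\frac{k(k-1)}{2}$ edges each, their number is at most $\lfloor 2m_v/(k(k-1))\rfloor$. As both are valid upper bounds on $score(v)$, taking the smaller one yields $score(v)\le \overline{score}(v)=\min\{\lfloor d(v)/k\rfloor,\ \lfloor 2m_v/(k(k-1))\rfloor\}$.

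The main obstacle is not a hard computation but making the minimality claim airtight: the entire bound rests on the assertion that no connected $k$-truss can have fewer than $k$ vertices (equivalently, that $K_k$ is the unique smallest $k$-truss), so I would take care to justify this lower bound on per-component size directly from the support definition. The only other point needing a line of justification is the disjointness of the components, which follows immediately from their being distinct connected components of the same subgraph.
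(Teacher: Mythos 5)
Your proposal follows the same route as the paper's proof: bound the number of social contexts by a vertex budget (each maximal connected $k$-truss occupies at least $k$ of the $d(v)$ vertices of $G_{N(v)}$) and by an edge budget (each occupies at least $\frac{k(k-1)}{2}$ of the $m_v$ edges), then take the minimum. The paper simply \emph{asserts} the extremal fact that the smallest maximal connected $k$-truss is the $k$-clique; you try to prove it, which is a welcome strengthening, and your explicit appeal to the vertex- and edge-disjointness of distinct components is exactly what makes the charging argument sound.

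There is, however, one incomplete step in your added justification. You correctly argue that any component containing an edge has at least $k$ vertices, and that a component with \emph{exactly} $k$ vertices must be $K_k$; but this establishes the edge lower bound $\frac{k(k-1)}{2}$ only for $k$-vertex components. For a component with $n' > k$ vertices your argument as written says nothing about its edge count, and connectivity alone would permit as few as $n'-1$ edges, which is below $\frac{k(k-1)}{2}$ once $k \geq 3$. The missing line is easy to supply: every vertex of a $k$-truss is the endpoint of some edge of support at least $k-2$, hence has degree at least $k-1$ within its component, so a component on $n' \geq k$ vertices has at least $\frac{n'(k-1)}{2} \geq \frac{k(k-1)}{2}$ edges. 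With that one observation added, your proof is complete and yields exactly the paper's bound $\overline{score}(v) = \min\{\lfloor \frac{d(v)}{k}\rfloor, \lfloor \frac{2m_v}{k(k-1)} \rfloor\}$.
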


\begin{proof} 
First, $G_{N(v)}$ has $d(v)$ vertices. Since the minimum vertex size of a maximal connected $k$-truss is $k$, $G_{N(v)}$ has at most $\lfloor \frac{d(v)}{k}\rfloor$ maximal connected $k$-trusses in $G_{N(v)}$. Thus, $score(v) \leq \lfloor \frac{d(v)}{k}\rfloor$ holds. Second, $G_{N(v)}$ has $m_v$ edges. Since the minimum edge size of a maximal connected $k$-truss is $\frac{k(k-1)}{2}$ edges, $G_{N(v)}$ has at most $\lfloor  \frac{2m_v}{k(k-1)} \rfloor$ maximal connected $k$-trusses in $G_{N(v)}$. As a result, $score(v) \leq \min\{\lfloor \frac{d(v)}{k}\rfloor, $ $ \lfloor \frac{2m_v}{k(k-1)} \rfloor\}$ $=\overline{score}(v)$ holds. 
\end{proof}

\begin{algorithm}[t]
\small
\caption{Efficient Truss-based Top-$r$ Search Framework} \label{algo:bound-search}
\begin{flushleft} 
\textbf{Input:} $G=(V, E)$, an integer $r$, the trussness threshold $k$\\
\textbf{Output:} Top-$r$ truss-based structural diversity results\\
\end{flushleft}
\
\begin{algorithmic}[1]

\STATE  Apply the graph sparsification on $G$ by removing all edges $e$ with $\tau_{G}(e)\leq k$ and isolated nodes;

\STATE  \textbf{for} $v\in V$ \textbf{do}

\STATE  \hspace{0,3cm} $ \overline{score}(v) \leftarrow \min{\{\lfloor \frac{d(v)}{k}\rfloor , \lfloor \frac{2m_v}{k(k-1)} \rfloor\}}$;

\STATE $\mathcal{L} \leftarrow$ sort all vertices $V$ in descending order of $\overline{score}(v)$;

\STATE  $\mathcal{S} \leftarrow \emptyset$;

\STATE  \textbf{while} $\mathcal{L}\neq \emptyset$

\STATE  \hspace{0.3cm} $v^* \leftarrow \arg\max_{v\in \mathcal{L}} \overline{score}(v)$; Delete $v^*$ from $\mathcal{L}$;

\STATE  \hspace{0.3cm}  \textbf{if} $|\mathcal{S}|=r$ and $\overline{score}(v^*) \leq \min_{v\in \mathcal{S}}score(v)$ \textbf{then}

\STATE  \hspace{0.3cm}  \hspace{0.3cm} \textbf{break};


\STATE  \hspace{0.3cm}  Computing $score(v^*)$ using Algorithm~\ref{algo:comp-score};

\STATE  \hspace{0.3cm}  \textbf{if} $|\mathcal{S}|<r$ \textbf{then} $\mathcal{S} \leftarrow \mathcal{S} \cup \{v^*\}$;

\STATE  \hspace{0.3cm}  \textbf{else if} $score(v^*) > \min_{v\in \mathcal{S}}score(v)$ \textbf{then}

\STATE  \hspace{0.3cm}  \hspace{0.3cm}  $u\leftarrow \arg\min_{v\in \mathcal{S}}score(v)$;

\STATE  \hspace{0.3cm}  \hspace{0.3cm}  $\mathcal{S} \leftarrow (\mathcal{S}-  \{u\} ) \cup \{v^*\}$;

\STATE  \textbf{return} $\mathcal{S}$ and their social contexts $\context(v)$ for $v\in \mathcal{S}$; 

\end{algorithmic}
\end{algorithm}

\subsection{An Efficient Top-$r$ Search Framework}
Equipped with graph sparsification and an upper bound $\overline{score}(v)$
, we propose 
our efficient truss-based top-$r$ search framework as follows. 

\stitle{Algorithm.} Algorithm~\ref{algo:bound-search} outlines the details of  truss-based top-$r$ search framework. It first performs graph sparsification by applying truss decomposition on graph $G$ and removing  all the edges $e$ with $\tau_{G}(e)\leq k$ and isolated nodes from $G$ (line 1). Then, it computes the upper bound  of $\overline{score}(v)$ for each vertex $v\in V$ and sorts them in the decreasing order in $\mathcal{L}$ (lines 2-4). Next, the algorithm iteratively pops out a vertex $v^*$ with the largest  $\overline{score}(v)$ from $\mathcal{L}$ (lines 7). After that, the algorithm checks an early stop condition. If the answer set $\mathcal{S}$ has $r$ vertices and $\overline{score}(v^*) \leq \min_{v\in \mathcal{S}}score(v)$ holds, we can safely prune the remaining vertices in $\mathcal{L}$ and early terminates (lines 8-9); otherwise, it needs to invoke Algorithm~\ref{algo:comp-score} to compute structural diversity $score(v^*)$ (line 10) and checks whether $v^*$ should be added into the answer set $\mathcal{S}$ (lines 11-14). Finally, it outputs the top-$r$ results $\mathcal{S}$ and their social contexts $\context(v)$ for $v\in \mathcal{S}$ (line 15).

\begin{example}
We apply Algorithm~\ref{algo:bound-search} on graph $G$ in Figure~\ref{fig.intro_example}. Assume that $k=4$ and $r=1$. $\mathcal{L}$ ranks all vertices in the decreasing order of their upper bounds. At the first iteration, the vertex $v$ in $G$ 
has the highest upper bound $\overline{score}(v)=3$ of $\mathcal{L}$. It then computes $score(v)=3$ and adds $v$ into the answer set $\mathcal{S}$. At the next iteration, the highest upper bound of vertices in $\mathcal{L}$ is 1 (e.g., $\overline{score}(x_1)=1$), which triggers the early termination (lines 8-9 of Algorithm~\ref{algo:bound-search}). That is, $|\mathcal{S}|=1$ and $\overline{score}(v^*) = 1 \leq \min_{v\in \mathcal{S}}score(v) =3$. The algorithm terminates with an  answer $\mathcal{S}=\{v\}$. 
During the whole computing process, it  invokes Algorithm~\ref{algo:comp-score} only once  for  structural diversity calculation, which is much less than 17 times by the online search algorithm in Algorithm~\ref{algo:baseline-alg}. It demonstrates the pruning power of top-$r$ search framework.
\end{example}

\jinbin{
\subsection{Complexity Analysis}
We analyze the complexity of Algorithm~\ref{algo:bound-search}. Let the reduced graph be $G' \subseteq G$. 
Let $\rho'$, $m'$, and $\mathcal{T}'$ are respectively the arboricity, the number of edges, and the number of triangles in $G'$. Obviously,   $\rho' \leq \rho$, $m' \leq m$, and $\mathcal{T}' \leq \mathcal{T}$.

First, graph sparsification takes $O(\rho m)$ time by truss decomposition for graph $G$. 
Second, computing the upper bounds for all vertices takes $O(\rho' m')$ time on the reduced graph $G'$. In addition, $\mathcal{L}$ performs vertex sorting in the order of $\overline{score}(v^*)$ and maintains the list, which can be done in $O(n)$ time. In the worst case, Algorithm~\ref{algo:bound-search}  needs to compute $score(v)$ for every vertex $v$, which takes $O(\rho' (m'+\mathcal{T}'))$ by Theorem~\ref{theorem.final-baseline}. Overall, Algorithm~\ref{algo:bound-search}  takes $O(\rho' (m'+\mathcal{T}')+\rho m +n)$  $\subseteq O(\rho m+ \rho'\mathcal{T}')$ time and $O(m)$ space.

}

\section{A Novel Index-based Approach}\label{sec.indexbased}

Algorithm~\ref{algo:bound-search} is still not efficient for large networks, because the operation of computing $score(v)$ in Algorithm~\ref{algo:comp-score} applies truss decomposition on each \ego $G_{N(v)}$ from scratch in an online manner, which is highly expensive. It wastes lots of computations on the unnecessary access of disqualified edges whose trussness is less than $k$ in the \ego. To further speed up the calculation of $score(v)$, in this section, we develop a novel truss-based structural diversity index (\TSDindex). \TSDindex is a compact and elegant tree structure to keep the structural diversity information for all \egos in $G$. Based on  \TSDindex, we design a fast solution of computing $score(v)$ and propose an index-based top-$r$ search approach to quickly find $r$ vertices with the highest scores,
 which is particularly efficient to handle multiple queries with different $r$ and $k$ on the same graph $G$.

%
\subsection{TSD-Index Construction}
\eat{
\stitle{A Naive Indexing Method.} Let us consider a naive indexing method and then improve it. To efficiently compute $score(v)$ for $v\in G$, an intuitive indexing approach is to keep all maximal connected $k$-trusses in $G_{N(v)}$. Thanks to the hierarchical structure of $k$-truss, we can keep only the trussness for all edges in $G_{N(v)}$.
It takes $O(m_v)$ space for an \ego $G_{N(v)}$. 
This indexing scheme requires $O(\sum_{v\in V} m_v) \subseteq O(\mathcal{T})$ space to store all \egos $G_{N(v)}$ for each vertex $v\in V$. However, the number of triangles $\mathcal{T}$ can reach $O(n^3)$ in the worst case, which is especially inefficient for large networks with high clustering coefficients \cite{WangC12,luce1949method}. To develop efficient strategies to improve this intuitive indexing scheme, we first start with the following observations from the example \ego $G_{N(v)}$ in Figure~\ref{fig.intro_example}(b).
}

\eat{
\jinbin{
\stitle{A Naive Indexing Method.} To efficiently compute $score(v)$ for $v\in G$, an intuitive indexing approach is to keep all maximal connected $k$-trusses in $G_{N(v)}$ by storing the trussness for all edges. However, this indexing scheme requires $O(\sum_{v\in V} m_v) \subseteq O(\mathcal{T})$ space to store all \egos $G_{N(v)}$ for each vertex $v\in V$, which is inefficient for large networks since the number of triangles $\mathcal{T}$ can reach $O(n^3)$ in the worst case. To develop efficient strategies to improve this intuitive indexing scheme, we first start with the following observations from the example \ego $G_{N(v)}$ in Figure~\ref{fig.intro_example}(b).
}
}

An intuitive indexing approach is to keep all maximal connected $k$-trusses in $G_{N(v)}$ by storing the trussness for all edges. However, it requires $O(\mathcal{T})$ space to store all \egos $G_{N(v)}$ for each vertex $v\in V$, which is inefficient for large networks. To develop efficient indexing scheme, we first start with the following observations.

\begin{figure}[h]
\centering \mbox{
\subfigure[$H_3$]{\includegraphics[width=0.45\linewidth]{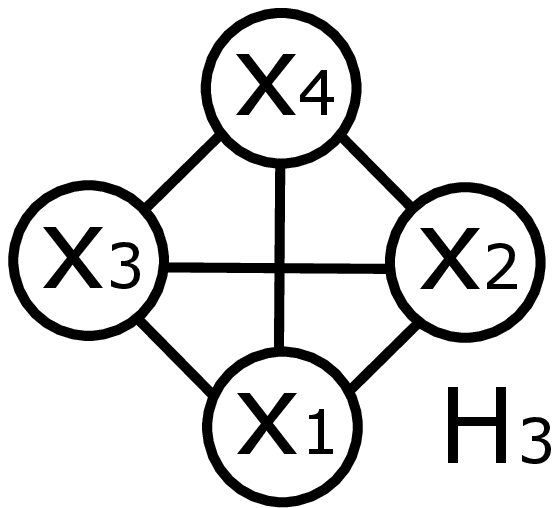}} \quad
\subfigure[Tree representation of $H_3$]{\includegraphics[width=0.45\linewidth]{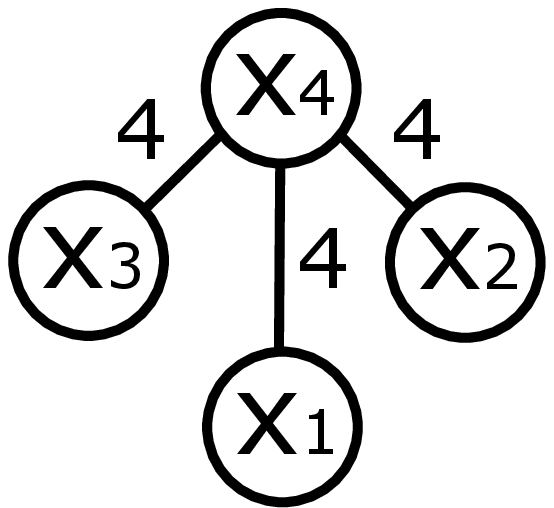}} }
\caption{An example of Observation~\ref{obs.tree}}
\label{fig.tsd_example}
\end{figure}

\begin{observation}
\label{obs.tree}
Figure~\ref{fig.tsd_example}(a) depicts a maximal connected 4-truss $H_3$ in the \ego $G_{N(v)}$ in Figure~\ref{fig.intro_example}(b). The definition of truss-based structural diversity only focuses on the number of maximal connected $k$-trusses, but ignore the connections between vertices in a maximal connected $k$-truss. It indicates that we do not need to store its whole structure. 
Figure~\ref{fig.tsd_example}(b) shows a tree-shaped structure with edge weights, which can clearly represent that $x_1,x_2,x_3,x_4$ are in the same maximal connected 4-truss. 
\end{observation}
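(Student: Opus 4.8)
The plan is to turn the informal observation in Figure~\ref{fig.tsd_example} into a precise connectivity-preservation statement and prove it via a maximum spanning tree. The object I would build is a \emph{maximum spanning forest} $T$ of the \ego $G_{N(v)}$ in which each edge $e$ carries its trussness $\tau_{G_{N(v)}}(e)$ as weight (restricted to a single connected $k$-truss such as $H_3$, this forest is a spanning tree, exactly as drawn). The claim to establish is then: for \emph{every} threshold $k$, the connected components of the edge-induced subgraph $\{e : \tau_{G_{N(v)}}(e)\ge k\}$ coincide with the components obtained by deleting from $T$ all tree edges of weight below $k$. This is what lets a tree on $|N(v)|-1$ edges recover $\score(v)=|\context(v)|$ for all $k$ at once, instead of storing the full \ego.

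First I would make the observation exact. By Def.~\ref{def.socialcontext} a social context is a connected component of the $k$-truss of $G_{N(v)}$, and since the $k$-truss is precisely the union of edges with trussness at least $k$, two vertices $u,w$ lie in a common maximal connected $k$-truss if and only if they are joined by a path in $G_{N(v)}$ all of whose edges have trussness $\ge k$. Thus the entire content of the observation reduces to showing that $T$ reproduces ``connected by a path of edges $\ge k$'' for every $k$.

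The heart of the argument is the bottleneck property of maximum spanning trees. For $u,w$ connected in $G_{N(v)}$, define the bottleneck $b(u,w)=\max_{P:u\to w}\min_{e\in P}\tau_{G_{N(v)}}(e)$. I would prove that the minimum edge weight on the unique $u$--$w$ path in $T$ equals $b(u,w)$ by the standard cut-exchange argument: let $e^*$ be the lightest edge on the tree path, with weight $\beta$; deleting $e^*$ splits $T$ into parts $A\ni u$ and $B\ni w$. If some graph path had all edges of weight $>\beta$, it would contain an edge $f$ crossing the cut $(A,B)$ with $\tau_{G_{N(v)}}(f)>\beta$, so $T-e^*+f$ would be a spanning forest of strictly greater weight, contradicting maximality of $T$. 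Hence no path beats the tree path, and $b(u,w)$ is attained on $T$. Consequently $u,w$ are in the same component at threshold $k$ in the full \ego iff $b(u,w)\ge k$ iff their $T$-path uses only edges of weight $\ge k$, which is exactly connectivity at threshold $k$ inside $T$.

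Combining the two steps gives a threshold-by-threshold bijection between the components of the $k$-truss of $G_{N(v)}$ and the components of $T$ restricted to edges of weight $\ge k$, simultaneously for all $k$, which justifies the tree-shaped representation. The main obstacle is the exchange step: one must argue carefully that the replacement $T-e^*+f$ remains a spanning forest (the crossing edge $f$ reconnects exactly the two parts created by removing $e^*$) so that its strictly larger total weight genuinely contradicts the maximality of $T$; the reduction from social contexts to threshold components and the equivalence at the end are then routine consequences of the truss nesting established earlier.
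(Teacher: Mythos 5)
Your proposal is correct, and it actually delivers something the paper never does: a rigorous proof. In the paper, Observation~\ref{obs.tree} is justified purely by inspection of Figure~\ref{fig.tsd_example} (a spanning tree of $H_3$ records that $x_1,\ldots,x_4$ are mutually connected), and the complementary requirement that the tree must carry \emph{maximum} edge trussnesses is deferred to a second informal statement, Observation~\ref{obs.forest}; the correctness of the resulting \TSDindex query procedure (Algorithm~\ref{algo:tsd-query}) is then asserted without proof. Your single lemma --- that for a maximum spanning forest $T$ of the trussness-weighted ego-network, and for \emph{every} threshold $k$ simultaneously, the components of $\{e:\tau_{G_{N(v)}}(e)\geq k\}$ coincide with the components of $T$ restricted to weight-$\geq k$ edges --- subsumes both observations and is exactly the missing correctness argument for the index. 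Your two steps are sound: the reduction from social contexts to threshold connectivity is immediate from Def.~\ref{def.socialcontext} together with the fact that the $k$-truss is the union of edges of trussness at least $k$, and the cut-exchange argument correctly establishes the bottleneck property (the only point needing care, which you handle, is that any $u$--$w$ path in $G_{N(v)}$ stays inside the tree component $A\cup B$, so it must cross the cut and the exchanged edge $f$ genuinely reconnects the forest). What the paper's route buys is brevity and intuition for a systems audience; what yours buys is a precise guarantee that the $O(n_v)$-size forest loses no structural-diversity information for any $k$, which is the property the whole of Section~\ref{sec.indexbased} implicitly rests on.
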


\begin{figure}[h]
\centering \mbox{
\subfigure[$H_1$]{\includegraphics[width=0.48\linewidth]{figure/def_truss.eps}} \quad
\subfigure[Inaccurate tree 
]{\includegraphics[width=0.48\linewidth]{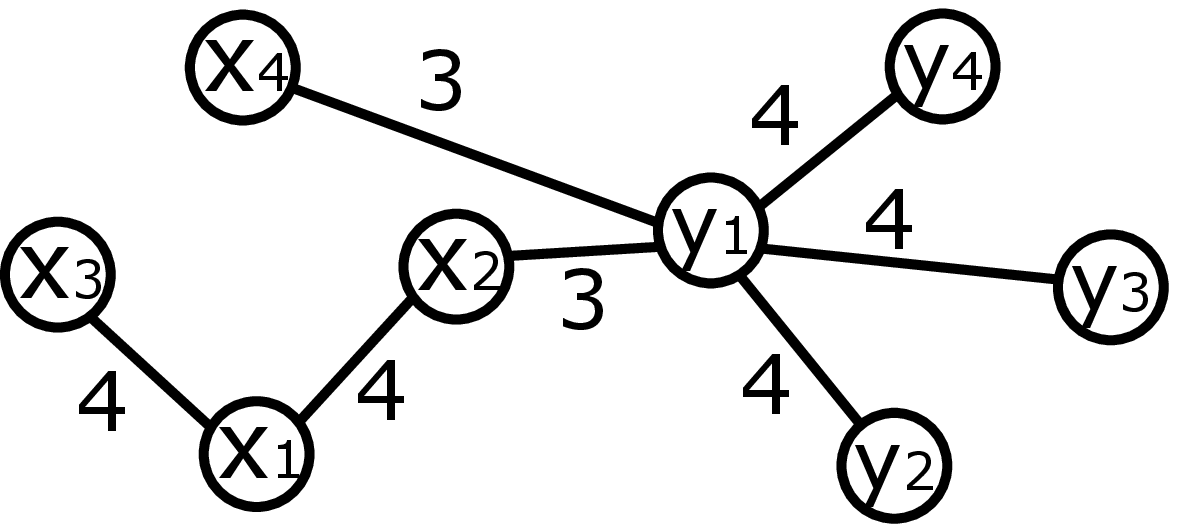}} }
\caption{An example of Observation~\ref{obs.forest}}
\label{fig.tsd_why_example}
\end{figure}

\begin{observation}
\label{obs.forest}
Figure~\ref{fig.tsd_why_example}(a) depicts a maximal connected 3-truss $H_1$ in the \ego $G_{N(v)}$ in Figure~\ref{fig.intro_example}(b). A tree structure is enough to represent the connectivity of vertices. However, if we keep an arbitrary tree structure of $H_1$ to connect all vertices, information loss of maximal connected $k$-trusses may happen. 
Consider the tree in Figure~\ref{fig.tsd_why_example}(b), for vertex $x_4$, it has no edges connecting with $x_1$, $x_2$ and $x_3$, but one incident edge with a weight of 3.  From this tree structure in Figure~\ref{fig.tsd_why_example}(b), we cannot infer that $x_4$ is involved in a maximal connected 4-truss $H_3$ shown in Figure~\ref{fig.tsd_example}(a).
\end{observation}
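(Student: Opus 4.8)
The plan is to prove Observation~\ref{obs.forest} by construction: I would exhibit a single arbitrary spanning tree of $H_1$ together with edge weights for which the tree-based recovery of maximal connected $k$-trusses demonstrably fails, which suffices because the claim is only that information loss \emph{may} happen. First I would fix the intended decoding semantics of a weighted tree $T$ spanning an \ego: to read off the social contexts at threshold $k$, one deletes every tree edge of weight below $k$ and treats each surviving connected component as a maximal connected $k$-truss. For this decoding to be lossless, the weights must satisfy an invariant I would state explicitly: for any two vertices $u,w$, the minimum edge weight along the unique $T$-path between $u$ and $w$ must equal the largest $k'$ for which $u$ and $w$ lie in a common maximal connected $k'$-truss of the \ego. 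This invariant is exactly the property the correct \TSDindex tree will later be engineered to guarantee.

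Second, I would compute the ground-truth co-membership values for the relevant vertices of $H_1$. In Figure~\ref{fig.tsd_example}(a), the vertices $x_1,x_2,x_3,x_4$ all belong to the single maximal connected 4-truss $H_3$, so the correct co-membership trussness of every pair among them equals 4. By the invariant, any lossless tree must therefore join $x_4$ to each of $x_1,x_2,x_3$ through a path whose every edge has weight at least 4.

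Third, I would instantiate the arbitrary tree of Figure~\ref{fig.tsd_why_example}(b), in which $x_4$ is attached to the remainder of the tree by its single incident edge of weight $3$. Applying the decoding at $k=4$ deletes every edge of weight below 4, and in particular removes this bridge, so $x_4$ is separated from $x_1,x_2,x_3$. The decoded output then fails to place $x_4$ in the same component as $x_1,x_2,x_3$; equivalently, the $T$-path from $x_4$ to $x_1$ has minimum weight $3 < 4$, violating the invariant. Hence this tree cannot encode that $x_1,x_2,x_3,x_4$ share the maximal connected 4-truss $H_3$, which is precisely the claimed information loss.

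I expect the only delicate point — and the part worth pinning down carefully — is articulating what ``information loss'' means so that the single counterexample is decisive rather than merely suggestive. Before committing to the decoding rule (delete edges below $k$, then take components) and the path-min invariant it requires, one could object that a cleverer weighting or a different decoding of the same tree shape might still recover $H_3$. Once these semantics are fixed, the remainder is a routine verification that the weight-$3$ attachment of $x_4$ fails the $k=4$ filter; this is exactly what motivates the specific weight-assignment rule adopted in the subsequent \TSDindex construction.
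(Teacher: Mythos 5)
Your proposal is correct and follows essentially the same route as the paper: the paper's own justification for Observation~\ref{obs.forest} is exactly the counterexample of Figure~\ref{fig.tsd_why_example}(b), where the weight-$3$ edge attaching $x_4$ is removed under the threshold-$k$ decoding (keep edges of weight $\geq k$, take components), so $x_4$ cannot be placed in the 4-truss $H_3$ with $x_1,x_2,x_3$. Your added formalization — the explicit decoding rule and the path-minimum invariant that a lossless tree must satisfy — is a faithful (and slightly more rigorous) restatement of what the paper leaves informal, and it correctly anticipates the maximum-spanning-forest property that the \TSDindex construction later exploits.
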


In summary, Observation~\ref{obs.tree} shows that the tree-shaped structure is enough to represent the identity of a maximal connected $k$-truss. Observation~\ref{obs.forest} further shows that the tree-shaped structure should have the maximum edge trussnesses to ensure no loss information of structural diversity, indicating a maximum spanning forest of $G_{N(v)}$ with the largest total weights of edge trussness.

\begin{algorithm}[t]
\small
\caption{TSD-Index Construction} \label{algo:tsd-con}
\begin{flushleft} 
\textbf{Input:} $G=(V, E)$\\
\textbf{Output:}  \TSDindex of $G$\\
\end{flushleft}
\vspace*{-0.3cm}
\
\begin{algorithmic}[1]

\STATE \textbf{for} $v\in V$ \textbf{do}

\STATE \hspace{0.3cm} Apply the truss decomposition in Algorithm~\ref{algo:truss-decomp} on $G_{N(v)}$;

\STATE \hspace{0.3cm} Construct a weighted graph $WG_v$ for $G_{N(v)}$, where each edge $e$ in $WG_v$ has a weight $w(e)=\tau_{G_{N(v)}}(e)$;

\STATE \hspace{0.3cm} Let a forest $\TSD_v$  formed by all isolated vertices $N(v)$;

\STATE \hspace{0.3cm} Let an edge set $\mathcal{L} \leftarrow E(WG_v)$; 

\STATE \hspace{0.3cm} \textbf{while} ($\mathcal{L} \neq \emptyset$)

\STATE \hspace{0.6cm} Let $e=(u, w) \in \mathcal{L}$ has the largest weight $w(e)$ in $\mathcal{L}$;

\STATE \hspace{0.6cm} \textbf{if} vertices $u$ and $w$ are disconnected in $\TSD_v$ \textbf{then}

\STATE \hspace{0.6cm} \hspace{0.3cm} Add a new edge $e$ with its weight $w(e)$ into $\TSD_v$;

\STATE \hspace{0.6cm} Delete $e$ from $\mathcal{L}$; 

\STATE return $\{\TSD_v|v\in V\}$;

\end{algorithmic}
\end{algorithm}

\stitle{TSD-Index Structure.} Based on the above observations, we are able to design our index structure of \TSDindex. We first define a weighted graph $WG_{v}$ for a vertex $v\in V$. $WG_{v}$ has the same vertex set and edge set with  $G_{N(v)}$ and $\forall e\in E(WG_{v})$ has a weight $w(e)=\tau_{G_{N(v)}}(e)$.  In other words, we assign a weight on each edge with its trussness on \ego $G_{N(v)}$ to form $WG_{v}$. As a result, the \TSDindex of $G_{N(v)}$ is defined as the maximum spanning forest of $WG_{v}$, denoted by $\TSD_v$.

\stitle{TSD-Index Construction. }
Algorithm~\ref{algo:tsd-con} describes a method of \TSDindex construction on graph $G$. The algorithm constructs the \TSDindex for each vertex $v\in G$ (lines 1-10). It first performs truss decomposition on $G_{N(v)}$ to obtain all edge trussnesses (line 2). The algorithm then constructs a weighted graph $WG_v$ for $G_{N(v)}$ where each edge $e$ has a weight $w(e)=\tau_{G_{N(v)}}(e)$ (line 3). Let 
 $\TSD_v$ be initially as all isolated vertices $N(v)$ (line 4). Then, we construct the maximum spanning forest of $WG_w$ by adding edges in the decreasing order of edge weights one by one into  $\TSD_v$ (lines 5-10). Let $\mathcal{L}$ be the edge set of $WG_v$ $E(WG_v)$. We visit each edge $e=(u,w)$ in the decreasing order of weight $w(e)$ in $\mathcal{L}$, and check whether $u, w$ are in the same component in $\TSD_v$. If $u, w$ are disconnected, we add an edge connecting $u$ and $w$ in $\TSD_v$. The process of constructing $\TSD_v$ breaks when all edges have been visited in $\mathcal{L}$ (lines 6-10).  Algorithm~\ref{algo:tsd-con} returns the \TSDindex of $G$ as $\{\TSD_v|v\in V\}$.

\begin{figure}[t]
\centering \mbox{
\subfigure[Step-1: Initialization with $N(v)$.]{\includegraphics[width=0.30\linewidth]{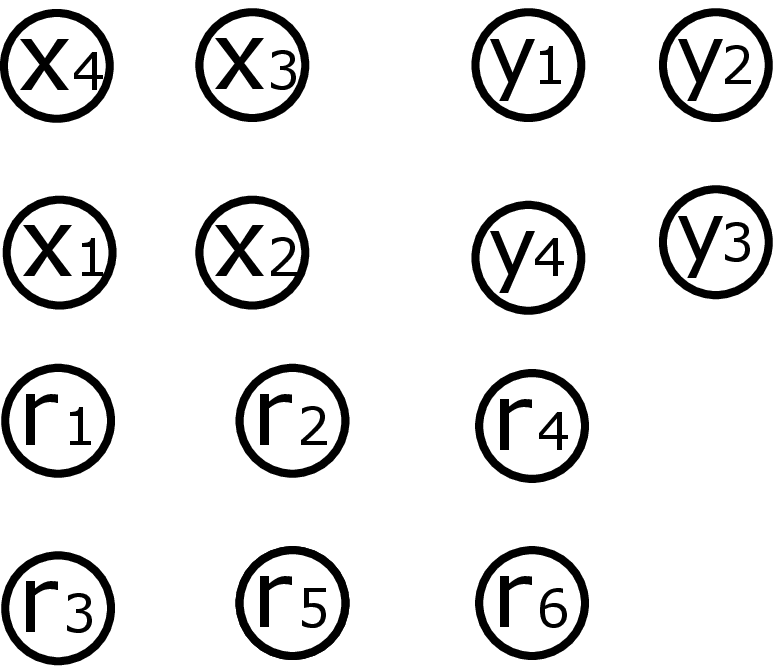}} \quad
\subfigure[Step-2: adding 4-truss edges.]{\includegraphics[width=0.30\linewidth]{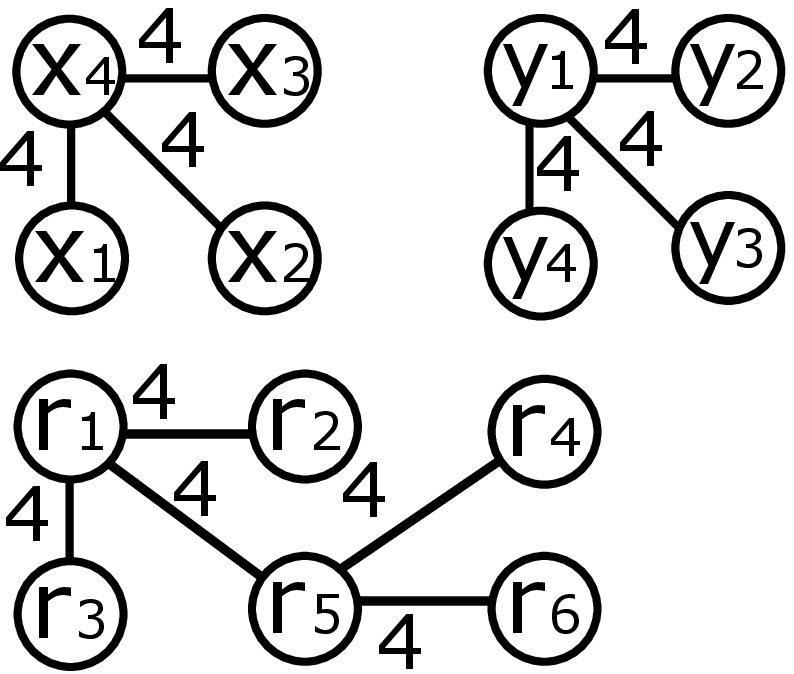}} \quad
\subfigure[Step-3: adding 3-truss edges. ]{\includegraphics[width=0.30\linewidth]{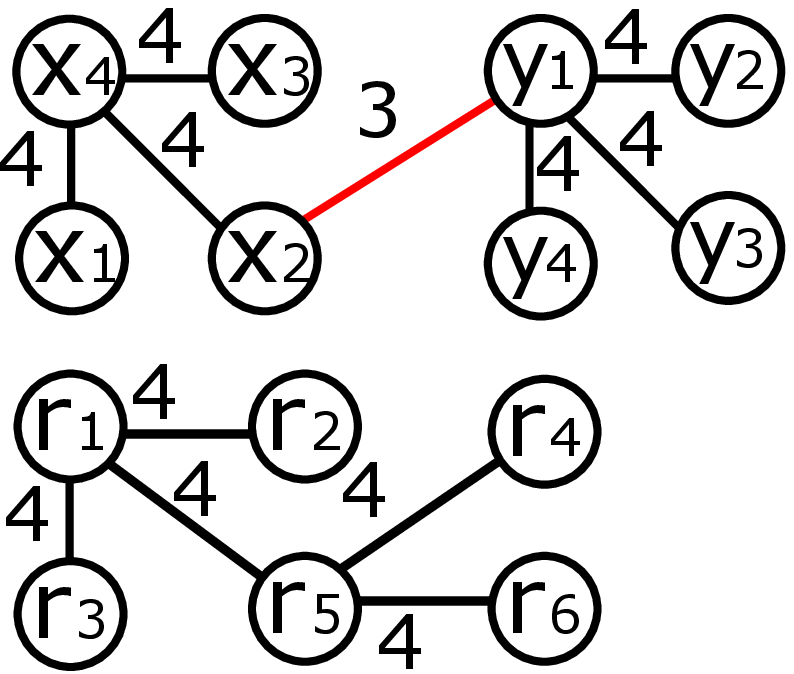}}
 }
\caption{Illustration of TSD-Index construction of $\TSD_v$}
\label{fig.tsd_index_con}
\end{figure}

\begin{example}

Figure~\ref{fig.tsd_index_con} illustrates the TSD-Index construction of $\TSD_v$ for a vertex $v$ in graph $G$ in Figure~\ref{fig.intro_example}.  Figure~\ref{fig.tsd_index_con} (a) shows that $\TSD_v$ is initialized to be a set of isolated nodes $N(v)$. Then, it checks all 4-truss edges and add qualified edges one by one into $\TSD_v$. According to Observation~\ref{obs.tree}, when Algorithm~\ref{algo:tsd-con} processes the edge $(x_3,x_1)$, it finds that $x_3$ and $x_1$ are in the same component in Figure~\ref{fig.tsd_index_con}(a), thus $(x_3,x_1)$ is not added to $\TSD_v$ in Figure~\ref{fig.tsd_index_con} (b). Afterwards, it adds the edge $e=(x_2, y_1)$ with weight $w(e)=3$ into $\TSD_v$ in Figure~\ref{fig.tsd_index_con}(c). \eat{Since there exist no 2-truss edges in $G_{N(v)}$, the algorithm stops by adding no edges with weights of 2.} The complete structure of $\TSD_v$ is finally depicted in Figure~\ref{fig.tsd_index_con}(c).
\end{example}

\eat{
Graph sparsification step for shrinking graph size with a particular given $k$ value is not applicable before the \TSDindex construction, because \TSDindex preserve the information for quries with all possible $k$ values.
}





\eat{
\begin{algorithm}[t]
\small
\caption{Computing $score(v)$ based on \TSDindex} \label{algo:tsd-query}
\begin{flushleft} 
\textbf{Input:} $G=(V, E)$, a vertex $v$, the trussness threshold $k$\\
\textbf{Output:} $score(v)$\\
\end{flushleft}
\vspace*{-0.3cm}
\
\begin{algorithmic}[1]

\STATE Let $H$ be a subgraph of $\TSD_v$ formed by all edges $e$ with $w(e)\geq k$;

\STATE $\context(v) \leftarrow \emptyset$;  $visited \leftarrow \emptyset$; $\mathcal{Q} \leftarrow \emptyset$;

\STATE \textbf{for} each vertex $u \in V(H)$ do

\STATE \hspace{0.3cm} \textbf{if} $u \notin visited$

\STATE \hspace{0.6cm} $visited \leftarrow visited \cup \{u\}$; $\mathcal{Q}.push(u)$; 

\STATE \hspace{0.6cm} Let a neighborhood social context $S \leftarrow \emptyset$;

\STATE \hspace{0.6cm} \textbf{while} $\mathcal{Q} \neq \emptyset$

\STATE \hspace{0.6cm} \hspace{0.3cm} $x \leftarrow \mathcal{Q}.pop()$;

\STATE \hspace{0.6cm} \hspace{0.3cm} $S \leftarrow S \cup \{x\} $;

\STATE \hspace{0.6cm} \hspace{0.3cm} \textbf{for} each edge $e=(x,y) \in E(H)$ do

\STATE \hspace{0.6cm} \hspace{0.6cm} \textbf{if} $y \notin visited$

\STATE \hspace{0.6cm} \hspace{0.9cm} $visited \leftarrow visited \cup \{y\}$; $\mathcal{Q}.push(y)$; 

\STATE \hspace{0.6cm} $\context(v) \leftarrow \context(v) \cup \{S\} $;

\STATE $\score(v) \leftarrow |\context(v)|$;

\STATE \textbf{return} $\score(v)$;
\end{algorithmic}
\end{algorithm}
}

\begin{algorithm}[t]
\small
\caption{Computing $score(v)$ based on \TSDindex} \label{algo:tsd-query}
\begin{flushleft} 
\textbf{Input:} $G=(V, E)$, a vertex $v$, the trussness threshold $k$\\
\textbf{Output:} $score(v)$\\
\end{flushleft}
\vspace*{-0.3cm}
\
\begin{algorithmic}[1]

\STATE Let $H$ be a subgraph of $\TSD_v$ formed by all edges $e$ with $w(e)\geq k$;

\STATE $\context(v) \leftarrow \emptyset$; 

\STATE \textbf{for} each unvisited vertex $u \in V(H)$ do

\STATE \hspace{0.3cm} Traverse the component $X$ containing $u$ in $H$; 

\STATE \hspace{0.3cm} Let a social context $S \leftarrow$ the set of vertices in $X$;

\STATE \hspace{0.3cm} $\context(v) \leftarrow \context(v) \cup \{S\} $;

\STATE $\score(v) \leftarrow |\context(v)|$;

\STATE \textbf{return} $\score(v)$;
\end{algorithmic}
\end{algorithm}

\stitle{Remarks}. 
Note that our \TSDindex can answer queries of any $k$ and $r$. It is independent to parameters $k$ and $r$ once the \TSDindex is constructed. \TSDindex can not only be used for calculating the structural diversity scores, but also support the retrieval of all social contexts in \egos. Early pruning (Property~\ref{lemma.trussfilter} and Lemma~\ref{lemma.trussbound}) works for the online search algorithms, but not for \TSDindex construction in Algorithm~\ref{algo:tsd-con}.

\subsection{TSD-Index-based Top-$r$ Search}
In the following, we first propose an efficient algorithm for computing structural diversity scores using the \TSDindex. 
Based on it, we develop our \TSDindex-based top-$r$ search algorithm.

\stitle{Computing $score(v)$ based on TSD-Index. } Algorithm~\ref{algo:tsd-query} presents a method of computing $score(v)$ based on the \TSDindex. The algorithm first retrieves a subgraph $H$ of $\TSD_{v}$ formed by all edges $e$ with the edge weight $w(e)\geq k$ (line 1). Next, it finds all maximal connected $k$-trusses of $H$ that are the social contexts $\context(v)$ (lines 2-6). Applying the breadth-first-search strategy, it uses one hashtable to ensure each vertex to be visited once, and one queue  to visit the vertices of a neighborhood social context $S$ one by one (lines 3-6). After traversing each component in $H$, it keeps the social context $\context(v)$ by  the union of $S$ (line 6). Finally, it returns $score(v)$ as the multiplicity of social contexts $\context(v)$ (lines 7-8).

\stitle{\TSDindex-based Top-$r$ Search Algorithm.} Based on the $\TSD_v$, we design a new upper bound of $score(v)$ for pruning. The upper bound of $score(v)$ is defined as $\widetilde{score}(v) $ $ = \frac{|\{e\in \TSD_v: w(e)\geq k\}|}{k-1}$. The essence of $\widetilde{score}(v)$ holds because a maximal connected $k$-truss should have a tree-shaped representation of at least $(k-1)$ edges with weights of no less than $k$ in $\TSD_v$.
We can make a fast calculation of $\widetilde{score}(v)$ by sorting all edges of $\TSD_v$  in the decreasing order of edge weights, during the index construction.
Equipped with Algorithm~\ref{algo:tsd-query} of computing $score(v)$ and a new upper bound $\widetilde{score}(v)$, our \TSDindex-based top-$r$ structural diversity search algorithm invokes an efficient framework similarly as Algorithm~\ref{algo:bound-search}, which finds the top-$r$ answers by pruning those vertices $v$ that has the upper bound  $\widetilde{score}(v)$ no greater than the top-$r$ answer $\mathcal{S}$.

\subsection{Complexity Analysis}

\begin{theorem}\label{theorem.construction}
Algorithm~\ref{algo:tsd-con} constructs \TSDindex for a graph $G$ in $O(\rho(m+\mathcal{T}))$ time and $O(m)$ space. The index size is $O(m)$. Moreover,  \TSDindex-based  search approach tackles the problem of truss-based structural diversity search in $O(m)$ time and $O(m)$ space.
\end{theorem}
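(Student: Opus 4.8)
The plan is to establish the four claims of the theorem—construction time, construction space, index size, and query cost—by decomposing Algorithm~\ref{algo:tsd-con} into its two dominant phases and reusing the bounds already proved for the online approach, then separately analysing the search procedure.

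First, for the construction time I would split the per-vertex work into (i) the truss decomposition on the \ego $G_{N(v)}$ (line~2) and (ii) building the maximum spanning forest $\TSD_v$ (lines~4--10). Phase (i), summed over all $v\in V$, is exactly the quantity bounded in Theorem~\ref{theorem.final-baseline}: extracting each \ego and running Algorithm~\ref{algo:truss-decomp} on it is precisely the work analysed in Lemma~\ref{lemma.score}, giving $O(\rho(m+\mathcal{T}))$ in total. For phase (ii), each $WG_v$ has $m_v$ edges, and I would build the forest by processing edges in decreasing weight order with a union--find structure. Since the weights are integer trussness values, I would sort them with bin sort rather than a comparison sort, so each forest costs $O(m_v)$ time (the inverse-Ackermann factor of union--find being absorbed into the constant). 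Summing over all vertices and using $\sum_{v\in V} m_v = 3\mathcal{T}$ yields $O(\mathcal{T})$ for phase (ii), which is dominated by phase (i). Hence the total construction time is $O(\rho(m+\mathcal{T}))$.

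Next, for the index size—the whole reason the forest representation is introduced—the crucial observation is that $\TSD_v$ is a forest on the $d(v)$ vertices of $N(v)$, so it has at most $d(v)-1$ edges. Summing, the total index size is $\sum_{v\in V}|E(\TSD_v)| \le \sum_{v\in V} d(v) = 2m \in O(m)$, replacing the naive $O(\mathcal{T})$ cost of storing every \ego. The construction space is likewise $O(m)$: the \egos are processed one at a time, each $G_{N(v)}\subseteq G$ and its weighted copy $WG_v$ fit in $O(n+m)\subseteq O(m)$ space, and only the compact forests are retained.

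Finally, for the query cost I would bound Algorithm~\ref{algo:tsd-query} together with the top-$r$ framework. Computing $\score(v)$ via Algorithm~\ref{algo:tsd-query} only traverses the subforest of $\TSD_v$ whose edge weights are at least $k$, costing $O(|E(\TSD_v)|)=O(d(v))$. Evaluating the upper bound $\widetilde{score}(v)$ for every vertex and bin-sorting these bounds takes $O(m+n)$; in the worst case the search evaluates $\score(v)$ for all $v$, contributing $\sum_{v\in V}O(d(v))=O(m)$. Since the index itself occupies $O(m)$, the \TSDindex-based search runs in $O(m)$ time and $O(m)$ space. The main obstacle is phase (ii) of construction: one must argue that the maximum spanning forest is built without a logarithmic sorting factor—handled by bin sort on the integer trussness weights—so that it stays inside the $O(\rho(m+\mathcal{T}))$ budget, and then invoke the forest edge-count bound to collapse the potentially $O(\mathcal{T})$-sized ego information down to the $O(m)$ index size that in turn makes the $O(m)$ query time attainable.
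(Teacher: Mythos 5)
Your proposal is correct and follows essentially the same route as the paper's own proof: both charge the ego-network truss decompositions to the bound of Theorem~\ref{theorem.final-baseline}, build each maximum spanning forest in $O(m_v)$ time via bin sort (the paper cites Kruskal's algorithm), bound the index size by the forest edge count $\sum_{v\in V}(d(v)-1)\in O(m)$, and bound the query cost by a worst-case $O(d(v))$ traversal per vertex summing to $O(m)$. The only cosmetic differences are that you make explicit the identity $\sum_{v\in V} m_v = 3\mathcal{T}$ and the absorption of the union--find inverse-Ackermann factor, both of which the paper leaves implicit.
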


\eat{
\begin{proof} Please refer to Appendix D for the proof.
\end{proof}
}

\jinbin{
\begin{proof}
First, we analyze the time complexity of \TSD construction. For each vertex $v\in V$, Algorithm~\ref{algo:tsd-con} extracts $G_{N(v)}$ and applies truss decomposition on $G_{N(v)}$. This totally takes $O(\rho (m + \mathcal{T}))$ by Theorem~\ref{theorem.final-baseline}. In addition, for $v \in V$, a weighted graph $WG_{v}$ has $n_v$ vertices and $m_v$ edges. The sorting of weighted edges can be done in $O(m_v)$ time using a bin sort. Thus, applying Kruskal's algorithm \cite{cormen2009introduction} to find the maximum spanning forest from $WG_v$ takes $O(m_v)$ time. As a result, constructing  the \TSDindex for all vertices takes $O(\sum_{v\in V} m_v) \subseteq O(\mathcal{T})$. Therefore, the time complexity of Algorithm~\ref{algo:tsd-con} is $O(\rho(m+\mathcal{T}))$ in total.

Second, we analyze the space complexity of \TSD construction. The edge set $\mathcal{L}$ takes $O(m_v) \subseteq O(m)$ space. The index $\TSD_v$ takes $O(n_v)\subseteq O(n)$ space. The space complexity of Algorithm~\ref{algo:tsd-con} is  $O(m+n) \subseteq O(m)$.  

Third, we analyze the index size of \TSDindex of $G$.  For a vertex $v$, $\TSD_v$ is the maximum spanning forest of $WG_{v}$, which has no greater than $n_v-1$ edges. Thus, the size of $\TSD_v$ is $O(n_v)$. Overall, the index size of \TSDindex of $G$ is $O(\sum_{v\in V} n_v ) \subseteq O(m)$. 

Finally, we analyze the time and space complexity of \TSDindex-based  search approach. First, Algorithm~\ref{algo:tsd-query} takes $O(|N(v)|)$ time to compute $score(v)$ for a vertex $v\in V$. In the  worst case, the \TSDindex-based search approach needs to invoke Algorithm~\ref{algo:tsd-query} to compute $score(v)$ for all vertices. It takes $O(\sum_{v\in V} |N(v)|)$ $\subseteq O(m)$ time complexity. In addition, the upper bound $\widetilde{score}(v)$ takes $O(1)$ space for each vertex $v\in V$. Thus, the space complexity is \eat{still} $O(m)$.
\end{proof}
}

\jinbin{
\stitle{Remarks}. In summary, the \TSDindex-based search approach is clearly faster than the online search algorithms in Algorithm~\ref{algo:baseline-alg} and Algorithm~\ref{algo:bound-search}, in terms of their time complexities. 
In addition, \TSDindex can support efficient updates in dynamic graphs where the graph structure undergo frequently updates with nodes/edges insertions/deletions. Although an edge insertion may cause the structure change of many \egos, the updating techniques are still promising to be further developed with some carefully designed ideas, given by the existing theory and algorithms of $k$-truss updating on dynamic graphs \cite{YZhang12,huang2014querying}.   
}

\eat{

\subsection{Complexity Analysis } 

\begin{theorem}\label{theorem.construction}
Algorithm~\ref{algo:tsd-con} constructs \TSDindex for a graph $G$ in $O(\rho(m+\mathcal{T}))$ time and $O(m)$ space. The index size of \TSDindex of $G$ is $O(m)$. \jinbin{ The \TSDindex-based top-$r$ search approach takes $O(m)$ time and $O(m)$ space.}
\end{theorem}

\begin{proof}
\jinbin{
First, similar to Algorithm \ref{algo:baseline-alg}, Algorithm \ref{algo:tsd-con} iteratively extract the \ego of each vertex $v$ and apply truss decomposition on it for index construction. For brevity, we present the proof of the time and space complexity of Alogorithm \ref{algo:tsd-con} in Section \ref{sec:append} C. 

Second, we analyze the index size of \TSDindex of $G$.  For a vertex $v$, $\TSD_v$ is the maximum spanning forest of $WG_{v}$, which has no greater than $n_v-1$ edges. Thus, the size of $\TSD_v$ is $O(n_v)$. Overall, the index size of \TSDindex of $G$ is $O(\sum_{v\in V} n_v ) \subseteq O(m)$. 

Third, Algorithm~\ref{algo:tsd-query} takes $O(|N(v)|)$ time to compute $score(v)$ for a vertex $v\in V$. In the  worst case, the \TSDindex-based top-$r$ search approach needs to invoke Algorithm~\ref{algo:tsd-query} to compute $score(v)$ for all vertices. It leads to the time complexity of $O(\sum_{v\in V} |N(v)|)$ $\subseteq O(m)$. In addition, the upper bound $\widetilde{score}(v)$ takes $O(1)$ space for each vertex $v\in V$. Thus, the space complexity is still $O(m)$.
}

\end{proof}

\eat{
\begin{proof}
First, we analyze the time complexity of \TSD construction. For each vertex $v\in V$, Algorithm~\ref{algo:tsd-con} extracts $G_{N(v)}$ and applies truss decomposition on $G_{N(v)}$. This totally takes $O(\rho (m + \mathcal{T}))$ by Theorem~\ref{theorem.final-baseline}. In addition, for $v \in V$, a weighted graph $WG_{v}$ has $n_v$ vertices and $m_v$ edges. The sorting of weighted edges can be done in $O(m_v)$ time using a bin sort. Thus, applying Kruskal's algorithm \cite{cormen2009introduction} to find the maximum spanning forest from $WG_v$ takes $O(m_v)$ time. As a result, constructing  the \TSDindex for all vertices takes $O(\sum_{v\in V} m_v) \subseteq O(\mathcal{T})$. Therefore, the time complexity of Algorithm~\ref{algo:tsd-con} is $O(\rho(m+\mathcal{T}))$ in total.

Second, we analyze the space complexity of \TSD construction. The edge set $\mathcal{L}$ takes $O(m_v) \subseteq O(m)$ space. The index $\TSD_v$ takes $O(n_v)\subseteq O(n)$ space. The space complexity of Algorithm~\ref{algo:tsd-con} is  $O(m+n) \subseteq O(m)$.  

Third, we analyze the index size of \TSDindex of $G$.  For a vertex $v$, $\TSD_v$ is the maximum spanning forest of $WG_{v}$, which has no greater than $n_v-1$ edges. Thus, the size of $\TSD_v$ is $O(n_v)$. Overall, the index size of \TSDindex of $G$ is $O(\sum_{v\in V} n_v ) \subseteq O(m)$. 
\end{proof}
}

\eat{
\begin{theorem}\label{theorem.tsd}
The \TSDindex-based top-$r$ search approach takes $O(m)$ time and $O(m)$ space.
\end{theorem}

\begin{proof}
First, Algorithm~\ref{algo:tsd-query} takes $O(|N(v)|)$ time to compute $score(v)$ for a vertex $v\in V$. In the  worst case, the \TSDindex-based top-$r$ search approach needs to invoke Algorithm~\ref{algo:tsd-query} to compute $score(v)$ for all vertices. It leads to the time complexity of $O(\sum_{v\in V} |N(v)|)$ $\subseteq O(m)$. In addition, the upper bound $\widetilde{score}(v)$ takes $O(1)$ space for each vertex $v\in V$. Thus, the space complexity is still $O(m)$.
\end{proof}
}

}

\section{A Global Information Based Approach}\label{sec.bitmap}
In this section, we propose a new approach \GCT for truss-based structural diversity search, which utilizes the \underline{g}lobal triangle information for efficient ego-network truss decomposition and develops a \underline{c}ompressed \underline{t}russ-based diversity \ADVindex to improve \TSDindex. 

\subsection{Solution Overview}
We briefly introduce a solution overview of \GCT algorithm, which leverages one-shot global triangle listing and a compressed  \ADVindex for fast structural diversity search computation. The method of \ADVindex construction is outlined in Algorithm~\ref{algo:adv-construction}. \ADVindex equips with three new techniques and implementations: 1) fast ego-network extraction (lines 1-4 of Algorithm~\ref{algo:adv-construction}); 2) bitmap-based truss decomposition (lines 5-14 of Algorithm~\ref{algo:adv-construction}); and 3) \ADVindex construction for an ego-network (line 15 of Algorithm~\ref{algo:adv-construction}), which is detailed presented in Algorithm~\ref{algo:adi-con}. 

Note that there is non-trivial challenging to explore the sharing computation across vertices using global truss decomposition. We analyze the structural properties of truss-based social contexts in Section~\ref{sec.upperbound}.  
Unfortunately, Observation~\ref{lemma.symmetricity} shows that it cannot share the symmetry triangle-based structure in the ego-networks across different vertices, even two close neighbors $u$ and $v$. 
Thus, our truss-based model fails to enjoy the symmetry properties (e.g., edge supports and trussnesses) of \egos for fast structural diversity score computation 
as \cite{huang2013top}. On the other hand, we observe that the one-shot triangle listing of global truss decomposition can help to efficiently extract ego-networks for all vertices. Moreover, we realize that the bitwise operations can further improve the efficiency of  truss decomposition in such local ego-networks. In addition, we propose a compact index structure of \ADVindex, which maintains only supernodes and superedges to discard the edges within the same $k$-level of social contexts. \ADVindex based query processing can be done more efficient than the \TSDindex-based approach.

\subsection{Fast Ego-network Truss Decomposition}

In this section, we propose a fast method of \ego truss decomposition, which leverages on the global triangle listing and bitmap-based truss decomposition. 

\stitle{Global Triangle Listing based Ego-network Extraction}. 
Ego-network extraction is the first key step of score computation in Algorithm~\ref{algo:comp-score} and \TSDindex construction in Algorithm~\ref{algo:tsd-con}. However, it suffers from heavily duplicate triangle listing. Specifically, for each vertex $v$, it needs to perform a triangle listing to find all triangles $\triangle_{vuw}$ and generate an edge $(u,w)$ in \ego $G_{N(v)}$. $\triangle_{vuw}$ is generated twice, which checks the common neighbors of $N(v)\cap N(u)$ and $N(v)\cap N(w)$ for two edges $(v, u)$ and $(v, w)$ respectively. Similarly, for vertices $u$ and $w$, $\triangle_{vuw}$ is generated twice respectively for extracting \egos $G_{N(u)}$ and $G_{N(w)}$. Unfortunately, $\triangle_{vuw}$ is repeatedly enumerated for six times, which is inefficient for local \ego extraction. 

To this end, we propose to utilize global triangle listing once to generate all the \egos in $G$. The details of fast \ego extraction is presented in Algorithm~\ref{algo:adv-construction} (lines 1-4). Specifically, for each edge $e=(u, v) \in E$, it identifies triangle $\triangle_{vuw}$ by enumerating all the common neighbors $w\in N(u)\cap N(v)$, and adds edge $e$ into \ego $G_{N(w)}$ (lines 2-4). Thus, it finishes the construction for all \egos, which can be directly used in the following \ego truss decomposition. Each triangle $\triangle_{vuw}$ is enumerated for three times, which saves a half of original computations using six enumeration times.  
Overall, our method of fast \ego extraction makes use of global triangle listing for best sharing in local \ego computations.

\begin{algorithm}[t]
\small
\caption{\ADVindex Construction} \label{algo:adv-construction}
\begin{flushleft} 
\textbf{Input:} Graph $G$\\
\textbf{Output:} \ADVindex of all vertices\\
\end{flushleft}
\vspace*{-0.3cm}
\
\begin{algorithmic}[1]

\STATE Let be $G_{N(v)}$ as an empty graph for each $v\in V$;

\STATE \textbf{for} each edge $e=(u,v) \in E$ \textbf{do}

\STATE \hspace{0.3cm} \textbf{for} each vertex $w \in N(u)\cap N(v)$ \textbf{do}

\STATE \hspace{0.3cm} \hspace{0.3cm} Add the new edge $e$ into $G_{N(w)}$;

\STATE \textbf{for} each vertex $v$ in $G$ \textbf{do}

\STATE \hspace{0.3cm}  Retrieve an ego-network $G_{N(v)}$ directly based on \textbf{Steps 2-4}, which avoids the duplicate triangle listing;

\STATE \hspace{0.3cm}  Give IDs to all vertices in $G_{N(v)}$ sequentially from 1 to $L$, where $L= |N(v)|$.

\STATE \hspace{0.3cm} \textbf{for} each vertex $u \in N(v)$ \textbf{do}

\STATE \hspace{0.3cm} \hspace{0.3cm} Create a bitmap $\bitmap_u$ of all 0 bits with $|\bitmap_u| = L$.

\STATE \hspace{0.3cm} \hspace{0.3cm} \textbf{for} each vertex $w \in N_{G_{N(v)}}(u)$ \textbf{do}

\STATE \hspace{0.3cm} \hspace{0.3cm} \hspace{0.3cm} $\bitmap_u[w] \leftarrow 1$;

\STATE \hspace{0.3cm} \textbf{for} each edge $e = (u, w)\in E(G_{N(v)})$ \textbf{do}


\STATE \hspace{0.3cm} \hspace{0.3cm} $\sup_{G_{N(v)}}(e) \leftarrow  \bitmap_x$   \verb AND  $\bitmap_y $;

\STATE \hspace{0.3cm} Apply a bitmap-based peeling process for truss decomposition \cite{WangC12} on $G_{N(v)}$;

\STATE \hspace{0.3cm}  Apply \ADVindex construction in Algorithm~\ref{algo:adi-con} on $G_{N(v)}$ to obtain $\ADV_v$;

\STATE \textbf{return} the \ADVindex $\{\ADV_v: v\in V\}$;

\end{algorithmic}
\end{algorithm}{}








\stitle{Bitmap-based Truss Decomposition}. 
We propose a bitmap-based approach to accelerate the truss decomposition. To apply truss decomposition on an obtained \ego $G_{N(v)}$, an important step is support computation, i.e., calculating $\sup_{G_{N(v)}}(e)$ as the number of triangles containing $e = (x, y)$ for each edge $e\in E(G_{N(v)})$. The existing method of computing $\sup_{G_{N(v)}}(e)$~\cite{WangC12}  uses the triangle listing, which checks each neighbor $z\in N(x)$ in \ego $G_{N(v)}$ to see whether $z \in N(y)$ using hashing technique. The hash checking takes constant time $O(1)$ in theoretical analysis, but in practice costs an expensive time overhead of support computation appeared in large graphs for frequent hash updates and checks.  
To this end, we propose to use a bitmap technique to accelerate the support computation. 
Firstly, we give a order ID to every vertex in $G_{N(v)}$ sequentially from 1 to $L$, where $L= |N(v)|$. For each vertex $x\in N(v)$, we create a binary bitmap $\bitmap_x$ with all 0 bits. For each edge $e= (x, y) \in E(G_{N(v)})$, we set to 1 for both the $x$-th bit of bitmap $\bitmap_y$ and the $y$-th bit of bitmap $\bitmap_x$, indicating $x\in N_{G_{N(v)}}(y)$ and $y\in N_{G_{N(v)}}(x)$. Then, the support of $\sup(e)$ equals to the number of 1 bits commonly appeared in $\bitmap_x$ and $\bitmap_y$, denoted by $\sup_{G_{N(v)}}(e) = |N(u) \cap N(v)|= \bitmap_x $  \verb AND  $\bitmap_y $. Note that the binary operation of bitwise \verb AND $ $ can be done efficiently.

Algorithm~\ref{algo:adv-construction} presents the detailed procedure of bitmap-based truss decomposition (lines 5-15). The algorithm first retrieve \ego $G_{N(v)}$ directly from the global triangle listing (line 6). It then initializes the $\bitmap_x$ for all vertices $x\in N(v)$ and calculates the support  $\sup_{G_{N(v)}}(e)$ as $\bitmap_x$ \verb AND  $\bitmap_y $ for all edges $e\in E(G_{N(v)})$ (lines 8-13). Next, The algorithm applies a bitmap-based peeling process for truss decomposition \cite{WangC12} on $G_{N(v)}$. Specifically, when an edge $(x, y)$ is removed from a graph, it updates $\bitmap_x[y]=0$ and $\bitmap_y[x]=0$. Due to the limited space, we omit the details of similar bitmap-based peeling process (line 14).  After obtaining all the edge trussnesses, we invoke Algorithm~\ref{algo:adi-con} (to be introduced in Section~\ref{sec.advindex}) to construct \ADVindex (line 15).

\subsection{GCT-index Construction and Query Processing} \label{sec.advindex}
In this section, we propose a new data structure of \ADVindex, which compresses the structure of \TSDindex in a more compact way. 


We start with discussing the limitations of \TSDindex. Each social context is defined as a maximal connected $k$-truss.
The spanning forest structure of \TSDindex stores not only the edge connections between different social contexts, but also the internal edges within a social context. However, such information of internal edges is redundant, which can be avoided for indexing. For example, consider the \TSDindex of vertex $v$ in Figure~\ref{fig.idx_comp}(a). The vertices $\{x_1, x_2, x_3, x_4\}$ forms a social context of maximal connected 4-truss. The edges $(x_4, x_1)$, $(x_4, x_2)$, and $(x_4, x_3)$ can be ignored for indexing storage. Instead, we keep a node list of $\{x_1, x_2, x_3, x_4\}$, which is enough to recover the information of social contexts by saving time-consuming cost of edge listing.


\stitle{GCT-index Structure}. 
\ADVindex keeps a maximum-weight forest-like structure similar as \TSDindex, which consists of supernodes and superedges. Specifically, for a vertex $v$, the \ADVindex of $v$ is denoted by $\ADV_v=(\mathcal{V}_v,\mathcal{E}_v)$, where $\mathcal{V}_v \subseteq N(v)$ and $\mathcal{E}_v$ are the set of supernodes and superedges respectively. A supernode $S \in \mathcal{V}_v$ represents a group of vertices that are connected via the edges of the same trussness $\tau(S_u)$ in a social context. Each supernode is associated with two features, including the trussness of connecting edges $\tau(S_u)$ and the vertex list $V_S$ of vertices belonging to this social context. Based on the isolated supernodes of $ \mathcal{V}_v$, we add the superedges $\mathcal{E}_v = \{(S_i, S_j): S_i, S_j \in \mathcal{V}_v \text{ and } \exists v_i \in V_{S_i}, v_j\in V_{S_j} \text{ such that the edge } (v_i, v_j)\in E \}$ into $\ADV_v$, such that all vertices forms a forest with the largest weight. Note that the weight of a superedge $(S_i, S_j)\in \mathcal{E}_v$ is denoted by the corresponding edge trussness in $G_{N(v)}$, i.e., $w((S_i, S_j))= \max_{v_i\in V_{S_i}, v_j\in V_{S_j} } \tau_{G_{N(v)}}(v_i, v_j)$. For example, for a vertex $v$, the corresponding \TSDindex  in Figure~\ref{fig.idx_comp}(a) is compressed into a small \ADVindex $\ADV_v$ as shown in Figure~\ref{fig.idx_comp}(b). $\ADV_v=(\mathcal{V}_v, \mathcal{E}_v)$ where $\mathcal{V}_v =\{S_1, S_2, S_3\}$ and  $\mathcal{E}_v = \{(S_1, S_2)\}$. The supernode $S_1$ consists of $\tau(S_1)=4$ and $V_{S_1}=\{x_1, x_2, x_3, x_4\}$ that belong to 4-truss social context. The superedge $(S_1, S_2)$ has a weight of $w((S_1, S_2))=3$, due to $\tau_{G_{N(v)}}(x_2, y_1)= 3$. This edge indicates that the vertices in $S_1$ and $S_3$ belong to the same 3-truss social context, i.e.,  $V_{S_1} \cup V_{S_2}= \{x_1, x_2, x_3, x_4, y_1, y_2, y_3, y_4\}$.

\begin{figure}[t]
\vspace*{-0.3cm}
\centering \mbox{
\subfigure[\TSDindex $\TSD_v$]{\includegraphics[width=0.4\linewidth]{figure/tsd_index_b.eps}} \quad
\subfigure[\ADVindex $\ADV_v$]{\includegraphics[width=0.45\linewidth]{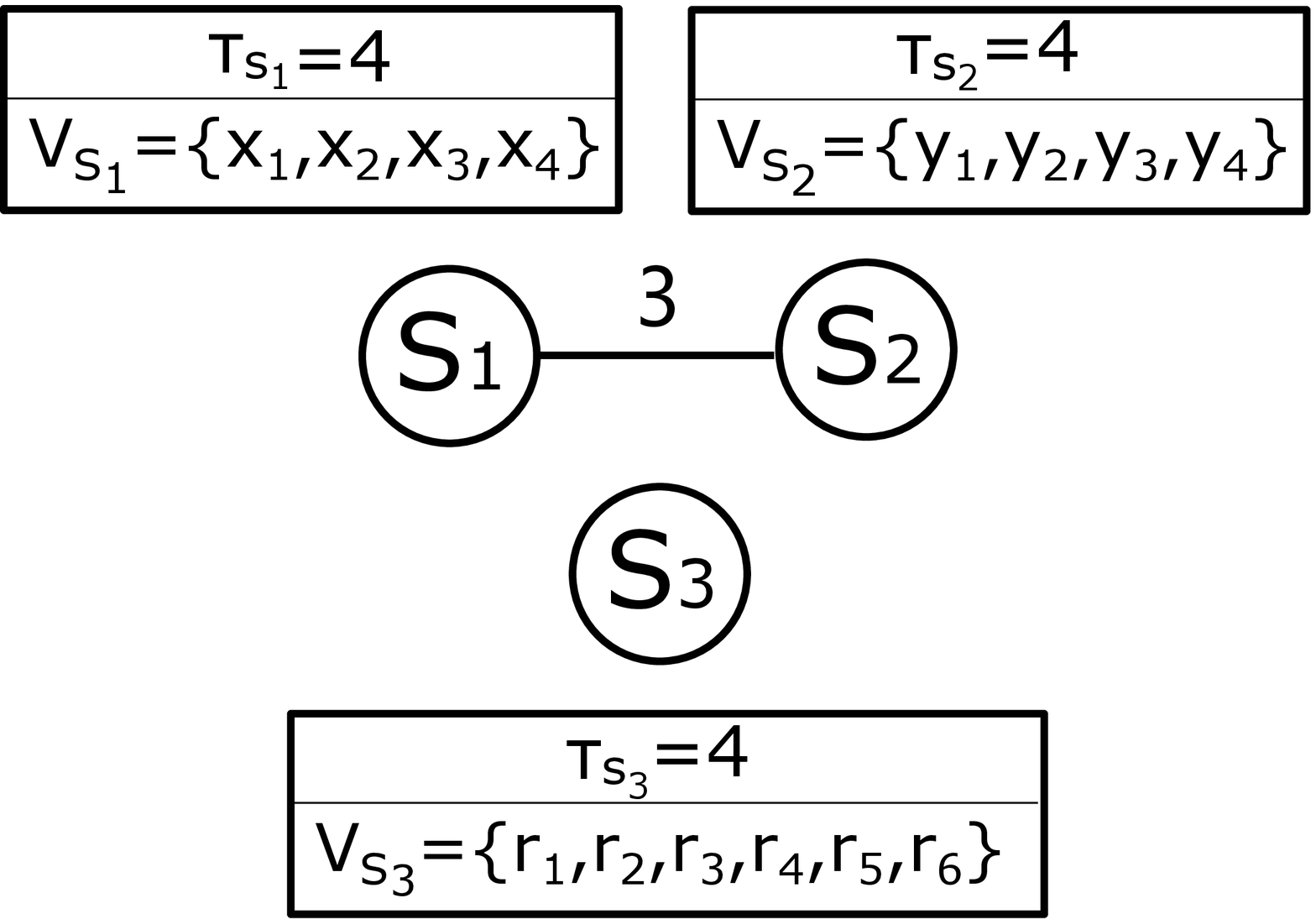}}
 }
 \vspace*{-0.4cm}
\caption{$\ADV_v$ is a compressed data structure of $\TSD_v$ for vertex $v$ in graph $G$ as shown in Figure~\ref{fig.intro_example}(a).}
\vspace*{-0.2cm}
\label{fig.idx_comp}
\end{figure}

\stitle{GCT-index Construction}.  Algorithm \ref{algo:adi-con} presents the procedures of constructing \ADVindex in an \ego $G_{N(v)}$ for a vertex $v$. The algorithm first creates the supernodes $S_u$ for each vertex $u$ in \ego $G_{N(v)}$ (lines 2-4). For each supernode $S_u$, the trusssness $\tau(S_u)$ is initialized as the vertex trussness of $\tau_{G_{N(v)}}(u)$ and $V_{S_u}=\{u\}$ (line 3).  Next, the algorithm continues to construct \ADVindex by adding superedges and merging supernodes, via a traverse of the whole set of edges $L=E(G_{N(v)})$ (lines 5-15). In each iteration, it retrieves an edge $e=(u, w)$ with the largest trussness in $L$ (lines 7). If two vertices $u$ and $w$ belong to the same supernode or their supernodes $S_u$ and $S_w$ are already connected in $\ADV_v$, then it continues to check the next edge in $L$ (lines 8-9). If two different supernodes  $S_u$ and $S_w$ have the same trussnesses as $\tau_{G_{N(v)}}(e)$, it merge two supernodes into one by assigning all $S_w$'s feature to $S_u$. Specifically, it unions two vertex lists as $V_{S_u} = V_{S_u} \cup V_{S_w}$ and assign $S_u$ the edges that are incident to supernode $S_w$, and then remove $S_w$ from  $\mathcal{V}_v$ (lines 10-12); Otherwise, it adds a superedge between $S_u$ and $S_w$ and assigns the edge weight as  $w((S_u, S_w)) = \tau_{G_{N(v)}}(e)$ (line 14-15).  After processing all edges in $L$, the algorithm finally returns the \ADVindex as $\ADV_v=(\mathcal{V}_v, \mathcal{E}_v)$ (line 16). 


\begin{algorithm}[t]
\small
\caption{\ADVindex Construction for an Ego-network} \label{algo:adi-con}
\begin{flushleft} 
\textbf{Input:} an \ego $G_{N(v)}$ for a vertex $v$\\
\textbf{Output:}  \ADVindex of $v$\\
\end{flushleft}
\vspace*{-0.3cm}
\
\begin{algorithmic}[1]

\STATE $\mathcal{V}_v \leftarrow \emptyset$; $\mathcal{E}_v \leftarrow \emptyset$; 

\STATE \textbf{for} each vertex $u \in N(v)$ \textbf{do}

\STATE \hspace{0.3cm} Super-node $S_u$: $\tau(S_u) = \tau_{G_{N(v)}}(u)$ and $V_{S_u}=\{u\}$;

\STATE \hspace{0.3cm} $\mathcal{V}_v \leftarrow \mathcal{V}_v \cup \{S_u\}$;

\STATE Let an edge set $\mathcal{L} \leftarrow E(G_{N(v)})$; 

\STATE \textbf{while} $\mathcal{L} \neq \emptyset$ \textbf{do}

\STATE \hspace{0.3cm} Pop out an edge $e=(u, w) \in \mathcal{L}$ with the largest trussness $\tau_{G_{N(v)}}(e)$ from $\mathcal{L}$;

\STATE \hspace{0.3cm} Identify the corresponding supernodes $S_u$ and $S_w$ for $u$ and $w$ respectively.

\STATE \hspace{0.3cm} \textbf{if} $S_u = S_w$ \textbf{or} $S_u$ and $S_w$ are connected \textbf{then} \textbf{continue};

\STATE \hspace{0.3cm} \textbf{if} $\tau(S_u)=\tau(S_w)=\tau_{G_{N(v)}}(e)$ \textbf{then}

\STATE \hspace{0.3cm} \hspace{0.3cm} Two supernodes merge: $V_{S_u} \leftarrow V_{S_u} \cup V_{S_w}$;

\STATE \hspace{0.3cm} \hspace{0.3cm} Assign all $S_w$'s incident edges to $S_u$ and delete $S_w$;

\STATE \hspace{0.3cm} \textbf{else}

\STATE \hspace{0.3cm} \hspace{0.3cm} Superedge insertion: $\mathcal{E}_v \leftarrow \mathcal{E}_v \cup (S_u, S_w)$;

\STATE \hspace{0.3cm} \hspace{0.3cm} $w((S_u, S_w)) \leftarrow \tau_{G_{N(v)}}(e)$;

\STATE \textbf{return} $\ADV_v=(\mathcal{V}_v, \mathcal{E}_v)$;

\end{algorithmic}
\end{algorithm}

\stitle{\ADVindex-based Query Processing}. Thanks to a very elegant and compact structure of \ADVindex, we next introduce a fast method to compute $score(v)$ for a given vertex $v$.


\begin{lemma}\label{lemma.adiscore}
For a vertex $v\in V$ and a number $k$,  the structural diversity score of $v$ is $score(v)= N_k -M_k$, where $N_k$ and $M_k$ are the number of supernodes and superedges with trussness no less than $k$ in $\ADV_v$, i.e.,  $N_k = |\{S\in \mathcal{V}_v: \tau(S)\geq k\}|$ and  $M_k = |\{e\in \mathcal{E}_v: \tau(e) \geq k\}|$.
\end{lemma}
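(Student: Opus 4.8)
The plan is to reduce the statement to the elementary fact that a forest on $N$ nodes with $M$ edges has exactly $N-M$ connected components. Write $F_k$ for the subgraph of $\ADV_v$ obtained by keeping only those supernodes $S$ with $\tau(S)\geq k$ and those superedges $e$ with $\tau(e)\geq k$. I would show that the number of connected components of $F_k$ equals $N_k-M_k$ on the one hand and $score(v)=|\context(v)|$ on the other, which proves the lemma.

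First I would check that $F_k$ is a genuine forest whose node and edge sets have sizes $N_k$ and $M_k$. By construction $\ADV_v$ is acyclic, since Algorithm~\ref{algo:adi-con} inserts a superedge only between two supernodes that are not yet connected, and any subgraph of a forest is again a forest. The one point needing care is that no superedge of $F_k$ is left with an endpoint outside $F_k$: I claim every superedge of weight $\tau(e)\geq k$ joins two supernodes each of trussness at least $k$. This rests on the invariant that every vertex merged into a supernode $S$ has vertex-trussness exactly $\tau(S)$, which follows by induction on the construction because a merge is performed only when $\tau(S_u)=\tau(S_w)=\tau(e)$. Since a superedge incident to $S_u$ carries the trussness of a graph edge incident to some vertex of $V_{S_u}$, and an edge's trussness never exceeds the trussness of either endpoint, its weight is at most $\tau(S_u)$ and symmetrically at most $\tau(S_w)$. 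Hence $\tau(e)\geq k$ forces $\tau(S_u),\tau(S_w)\geq k$, so $F_k$ has exactly $N_k$ nodes and $M_k$ edges and therefore $N_k-M_k$ components.

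Next I would set up a bijection between the components of $F_k$ and the maximal connected $k$-trusses of $G_{N(v)}$, mapping a component $C$ to $\bigcup_{S\in C}V_S$. The forward direction is direct: the vertices inside a single supernode $S$ with $\tau(S)\geq k$ were merged along graph edges of trussness $\tau(S)\geq k$ and hence lie in one connected $k$-truss, while a superedge of weight $\geq k$ certifies a graph edge of trussness $\geq k$ bridging the two groups, so all vertices of one component lie in a single social context. The converse—that vertices of the same maximal connected $k$-truss land in the same component—is the crux, and is exactly where the maximum-weight forest structure is needed: since $\ADV_v$ is the supernode/superedge compression of the maximum spanning forest $\TSD_v$, two vertices joined by a path of trussness-$\geq k$ edges in $G_{N(v)}$ remain joined by weight-$\geq k$ links in $\TSD_v$, and this connectivity survives the compression. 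Combining both directions makes the map a bijection, so $score(v)$ equals the number of components of $F_k$, namely $N_k-M_k$.

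The step I expect to be the main obstacle is this converse (surjectivity) half: I must argue that collapsing equal-trussness connected vertices into supernodes destroys no connectivity information at all thresholds $k$ simultaneously. I would isolate it as the bottleneck property of maximum spanning forests—that thresholding $\TSD_v$ at weight $k$ reproduces precisely the weight-$\geq k$ connected components of $WG_v$—together with a check that the merge rule keeps each supernode internally connected by edges of trussness $\tau(S)$, so that thresholding at any $k\leq\tau(S)$ never splits a supernode. Once that is in place, the remainder is just the forest-counting identity.
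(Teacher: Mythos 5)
Your proof is correct and takes essentially the same route as the paper's: both reduce the lemma to the elementary counting identity that a forest's number of connected components equals its number of nodes minus its number of edges, applied to the threshold-$k$ substructure of $\ADV_v$, whose components correspond one-to-one with the social contexts. The only difference is rigor --- the paper simply \emph{asserts} that each maximal connected $k$-truss is represented by a spanning tree or a single supernode (and implicitly that weight-$\geq k$ superedges join trussness-$\geq k$ supernodes), whereas you prove both facts via the merge invariant and the bottleneck property of the Kruskal-style construction.
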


\begin{proof} 
Let be $score(v)=x$ w.r.t. a particular $k$. This indicates that \ego $G_{N(v)}$ has $x$ social contexts. 
In terms of the structural properties of \ADVindex, each maximal connected $k$-truss is represented by a connected structure of spanning tree or just one single supernode. In the $i$-th spanning tree (or $i$-th single supernode), the number of supernodes is denoted as $n_i$, and the number of superedges is $n_i-1$. Thus, $N_k=\sum_{i=1}^{x}n_i$ and $M_k=\sum_{i=1}^{x} n_i-1$. As a result, $N_k-M_k=\sum_{i=1}^{x} n_i - \sum_{i=1}^{x}(n_i-1)=\sum_{i=1}^{x}1=x$.

Note that the \ADVindex-based query processing for structural diversity search takes $O(m)$ time in worst, where $m$ is the number of edges in $G$.







\end{proof}

\section{Experiments}\label{sec.exp}

In this section, we evaluate the effectiveness and efficiency of our proposed algorithms on real-world networks. 
All algorithms mentioned above are implemented in C++ and complied by gcc at -O3 optimization level. The experiments are run on a Linux computer with 2.2GHz quard-cores CPU and 32GB memory. \jbhuang{}

\stitle{Datasets:} We use eight datasets of real-world networks, and treat them as undirected graphs. Except for socfb-konect,\footnote{\scriptsize{\url{{http://networkrepository.com/socfb_konect.php}}}} all other datasets are available from the Stanford Network Analysis Project \cite{snapnets}. 
The network  statistics are described in Table~\ref{tab:dataset}. We report the node size $|V|$, the edge size $|E|$, the maximum degree $d_{max}$, the maximum edge trussness $\tau^*_{G}= \max_{e\in E} \tau_{G}(e)$,  the maximum edge trussness among all \egos $\tau^*_{ego}$ $=\max_{v\in V, e\in E(G_{N(v)})}$ $ \{\tau_{G_{N(v)}}(e) \}$, and the number of triangles $\mathcal{T}$.

\stitle{Compared Methods and Evaluated Metrics:} To evaluate the effectiveness of top-$r$ truss-based structural diversity model, we conduct the simulation of social influence process and report the number of affected vertices of the $r$ selected vertices by all methods. We test and compare our truss-based structural diversity method with three other methods as follows.

\squishlisttight
\item \random: is to select $r$ vertices from graph by random.
\item \component: is to select $r$ vertices with the highest $k$-sized component-based structural diversity \cite{chang2017scalable}. 
\item \core: is to select $r$ vertices with the highest $k$-core-based structural diversity \cite{XHuang15}. 
\item \trussdiv: is our method by selecting $r$ vertices with the highest $k$-truss-based structural diversity.
\end{list} 

In addition, to evaluate the efficiency of improved strategies, we compare our algorithms with two state-of-the-art methods \component \cite{chang2017scalable} and \core \cite{XHuang15}. Note that the implementation of \component in \cite{chang2017scalable} is much faster than the method in \cite{huang2013top}. We also test and compare four algorithms proposed in this paper as follows.

\squishlisttight
\item \baseline: is the simple approach to compute structural diversity for all vertices in Algorithm~\ref{algo:baseline-alg}.
\item \bound: is the efficient approach using graph sparsification and an upper bound for pruning vertices in Algorithm~\ref{algo:bound-search}.
\item \TSD: is the \TSDindex based approach, which uses Algorithm~\ref{algo:tsd-query} to compute structural diversity. 
\item \ADV: is the \ADVindex based  approach in Algorithm~\ref{algo:adv-construction}.
\end{list} 

We compare them by reporting the running time in seconds and the search space as the number of vertices whose structural diversities are computed in search process. The less running time and search space are, the better efficiency performance is. 

\stitle{Parameters: } We set the parameters $r=100$ and $k=3$ by default. We also evaluate the methods by varying the parameters $k$ in \{2, 3, 4, 5, 6\} and $r$ in $\{50,100,150,200,250,300\}$.

\begin{table}[t!]
\begin{center}
\scriptsize
\caption[]{\textbf{Network Statistics(K$=10^3$ and M$=10^6$)}}\label{tab:dataset}
\begin{tabular}{|c|c|c|c|c|c|c|c|}
\hline
Name & $|V|$ & $|E|$ & $d_{max}$ & $\tau^*_{G}$ & $\tau^*_{ego}$ & $\mathcal{T}$ \\
\hline \hline
Wiki-Vote	& 7\textbf{K}	 & 103\textbf{K} & 1,065 & 23 & 22 & 608,389 \\ \hline
Email-Enron	& 36\textbf{K} & 183\textbf{K} & 1,383 & 22 & 21 & 727,044 \\ \hline
Epinions & 75\textbf{K} & 508\textbf{K} & 3,044 & 33 & 32 & 1,624,481 \\ \hline
Gowalla	& 196\textbf{K} & 950\textbf{K} & 14,730 & 29 & 28 & 2,273,138 \\  \hline
NotreDame	& 325\textbf{K} & 1.4\textbf{M} & 10,721 & 155 & 154 & 8,910,005 \\ \hline
LiveJournal	& 4\textbf{M} & 34.7\textbf{M} & 14,815 & 352 & 351 & 177,820,130 \\ \hline
socfb-konect	& 59\textbf{M} & 92.5\textbf{M} & 4,960 & 7 & 6 & 6,378,280 \\ \hline
Orkut	& 3.1\textbf{M} & 117\textbf{M}  & 33,313 & 73 & 72 & 412,002,900 \\ \hline

\end{tabular}
\end{center}
\end{table}

\begin{table*}[t]
\begin{center}\small 
\caption[]{Comparison of running time (in seconds) and search space (the number of vertices whose structural diversity are computed) of different algorithms. Here $k=3$ and $r=100$. \textbf{}}\label{tab:querytime}
\begin{tabular}{|l|c|c|c|c|c|c|c|c|}
\hline \multirow{2}{*}{Network} & \multicolumn{4}{|c|}{Running Time} & \multicolumn{4}{|c|}{Search Space} \\
\cline{2-9} & \baseline & \bound & \TSD  & $R_t$ & \baseline & \bound & \TSD  & $R_s$ \\\hline
Wiki-Vote    & 10.7s   & 10.2s   & 7.0ms  & 1,529 & 8,297 & 2,704 & 2,628 &  3.1 \\\hline
Email-Enron  & 11.8s   & 11.3s   & 18.2ms & 648 & 36,692 & 4,284 & 4,274 &  8.6 \\\hline
Epinions    &  37.7s  & 34.2s   & 31.9ms & 1,182 & 75,887 & 6,810 & 6,531 &  11.6 \\\hline
Gowalla    & 52.2s   & 42.2s   & 70.2ms & 743 & 196,591 & 22,267 & 21,674 &  9.0 \\\hline
NotreDame    & 291s   & 283s   & 106ms & 2,745 & 325,729 & 24,285 & 24,188 & 13.4 \\\hline
LiveJournal    &  10,418s  &  9,456s  & 4.9s  & 2,126 & 4,036,537 & 208,722 & 182,646 & 22.1\\\hline
socfb-konect    &  1,591s  &   15.3s & 6s & 265 & 59,216,214 & 18,630 & 17,649 & 3,355 \\\hline
orkut    &  21,381s  &  18,071s  & 10.7s & 1,998 & 3,072,626 & 370,343 & 353,606 & 8.6 \\\hline

\end{tabular}
\end{center}
\end{table*}

\begin{table*}[t]
\begin{center}\small 
\caption[]{Comparison of  \TSD and \ADV indexing methods in terms of the index size, index construction time, and query time.}\label{tab:indexcompare}
\begin{tabular}{|l|c|c|c|c|c|c|c|}
\hline \multirow{2}{*}{Network} & \multirow{2}{*}{Graph Size}  & \multicolumn{2}{|c|}{Index Size} & \multicolumn{2}{|c|}{Index Construction Time} & \multicolumn{2}{|c|}{Query Time} \\
\cline{3-8}&  & \TSD & \ADV & \TSD & \ADV & \TSD & \ADV \\\hline
Wiki-Vote    & 1.1MB	& 4.2MB	& 4MB & 9.82s & 8.45s & 7.0ms & 1.8ms \\\hline
Email-Enron  & 3.9MB	& 7.2MB	& 5.6MB & 10.80s & 8.82s & 18.2ms & 5.5ms \\\hline
Epinions     & 5.4MB	& 13.3MB & 13.1MB & 35.36s & 25.79s & 31.9ms & 6.3ms \\\hline
Gowalla      & 21MB	& 34.9MB & 29.7MB & 49.24s & 30.17s & 70.2ms & 23.7ms \\\hline
NotreDame    & 20MB	& 45.4MB & 19.8MB & 286s & 223s & 106ms & 65.4ms \\\hline
LiveJournal  & 478MB & 1,670MB & 1,352MB & 9,297s & 6,689s & 4.9s & 1.2s \\\hline
socfb-konect & 1,510MB & 663MB & 106MB & 1,603s & 629s & 6s & 1.6s \\\hline
orkut        & 1,130MB	& 4,090MB & 3,812MB & 16,012s & 9,819s & 10.7s & 1.7s \\\hline

\end{tabular}
\end{center}
\end{table*}

\begin{figure}[t]
\centering \mbox{
\subfigure[Gowalla]{\includegraphics[width=0.33\linewidth]{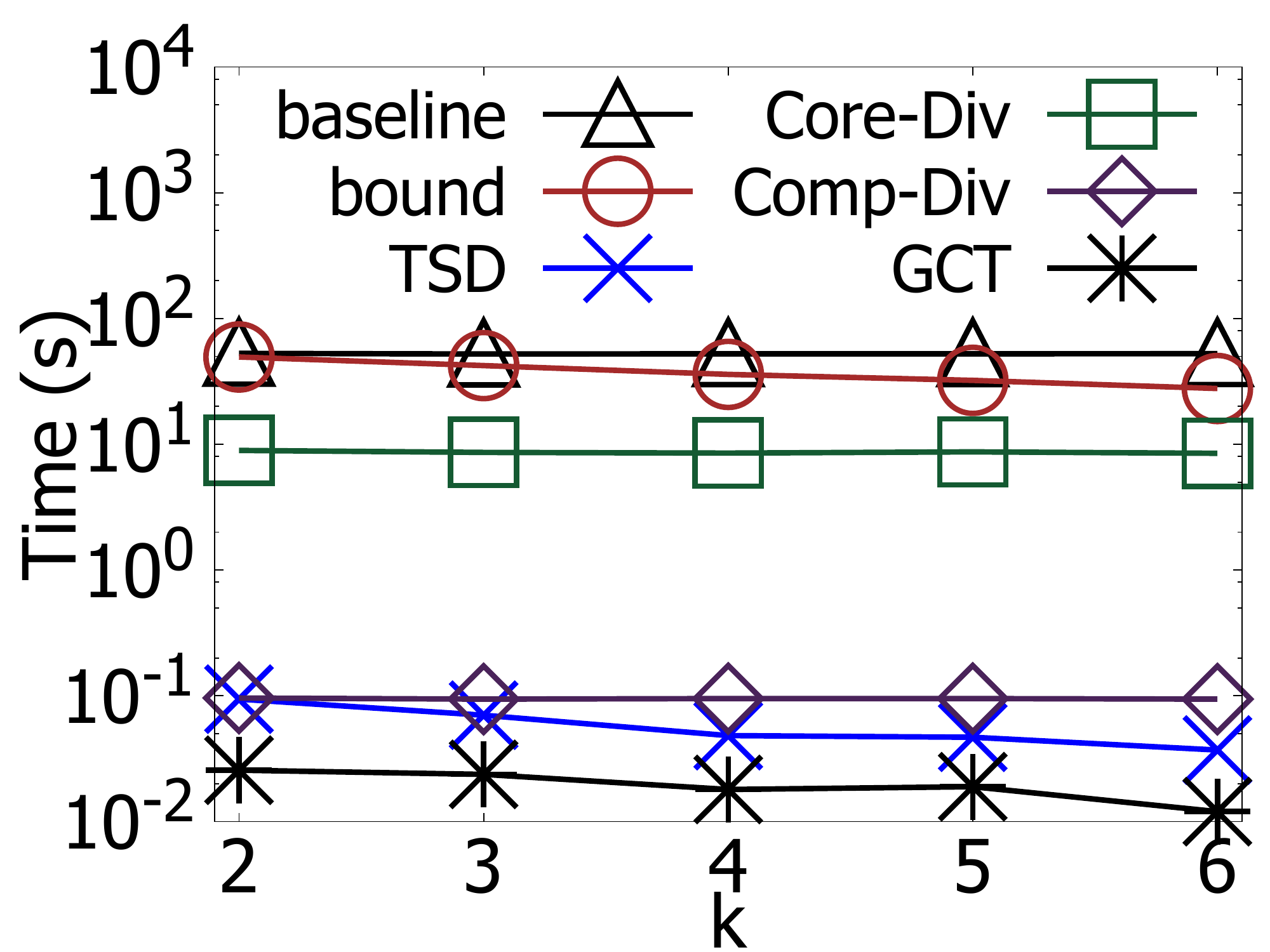}} 
\subfigure[LiveJournal]{\includegraphics[width=0.33\linewidth]{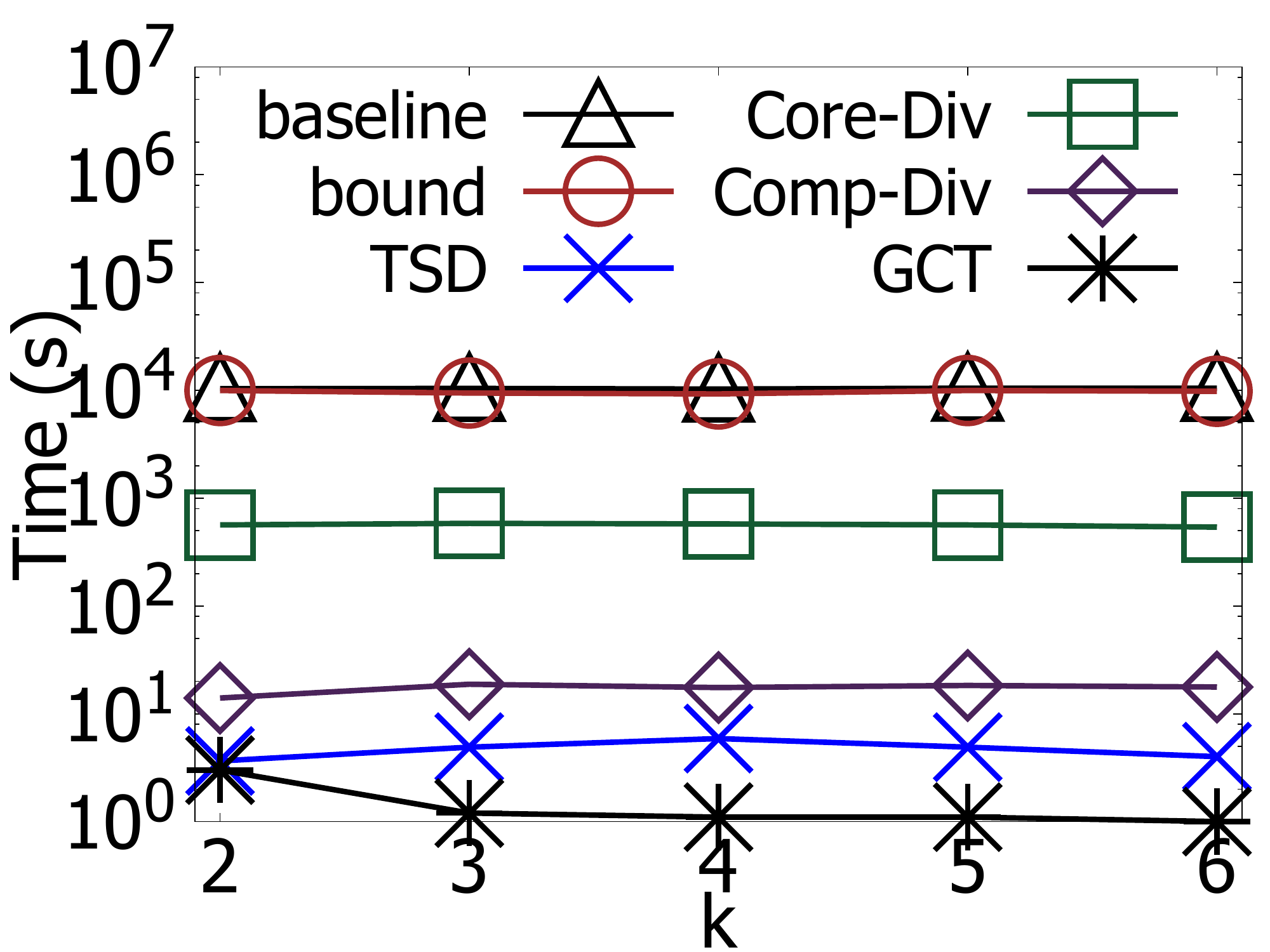}} 
\subfigure[Orkut]{\includegraphics[width=0.33\linewidth]{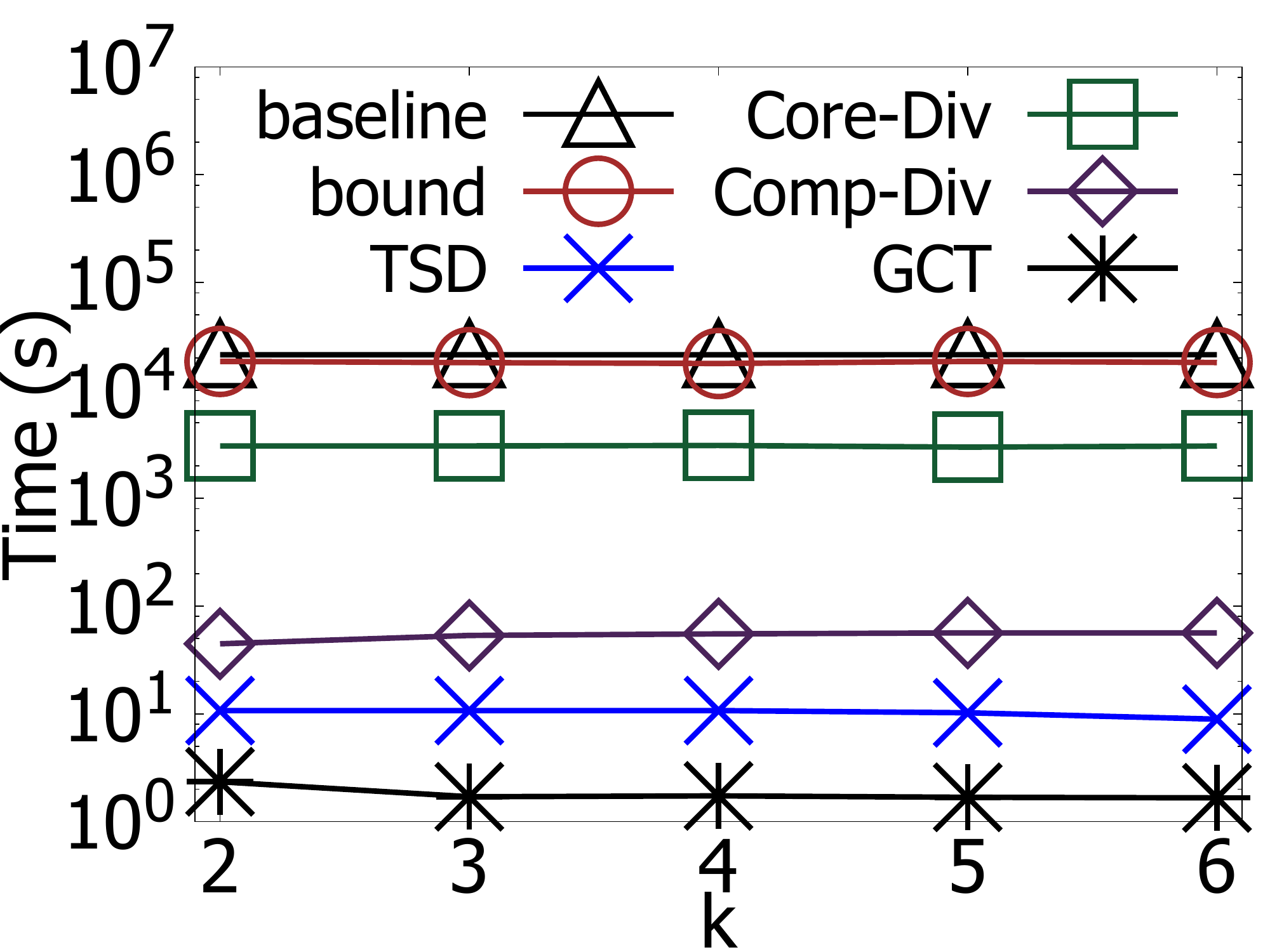}} }
\caption{Comparsion of \baseline, \bound, \core, \component and \TSD in terms of running time (in seconds).}
\label{fig.qt_com}
\end{figure}

\begin{figure}[t]
\centering \mbox{
\subfigure[Gowalla]{\includegraphics[width=0.33\linewidth]{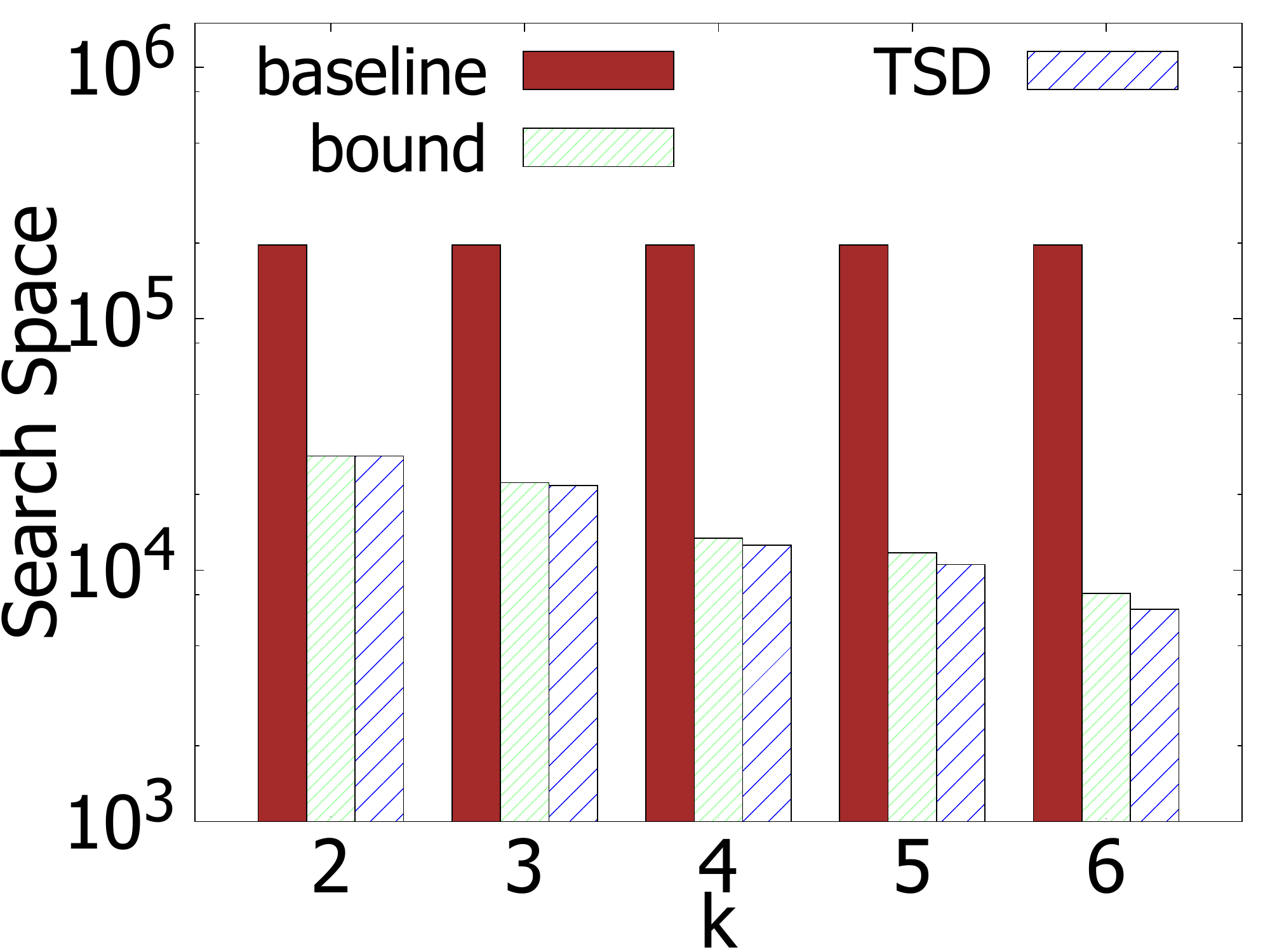}} 
\subfigure[LiveJournal]{\includegraphics[width=0.33\linewidth]{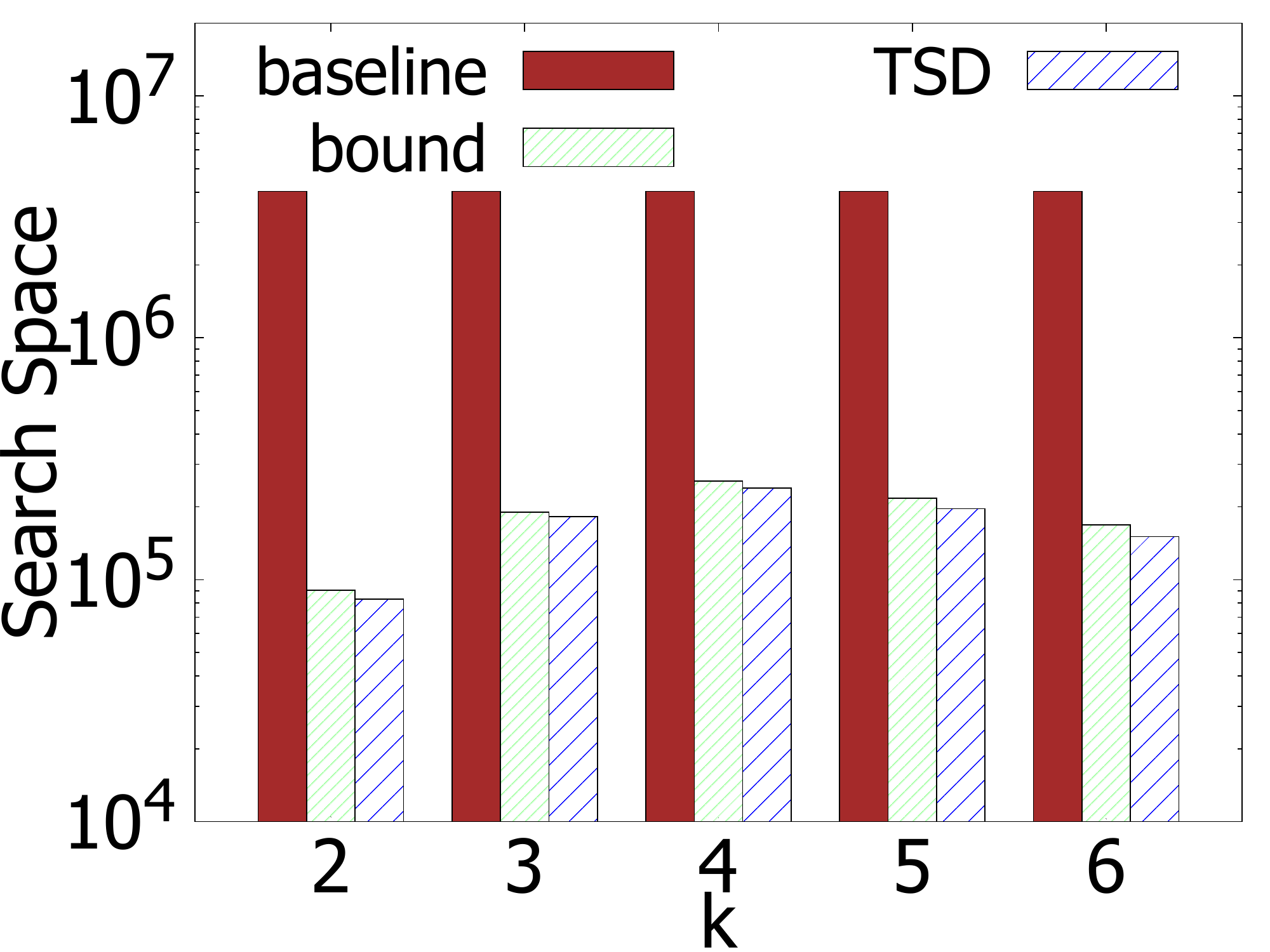}} 
\subfigure[Orkut]{\includegraphics[width=0.33\linewidth]{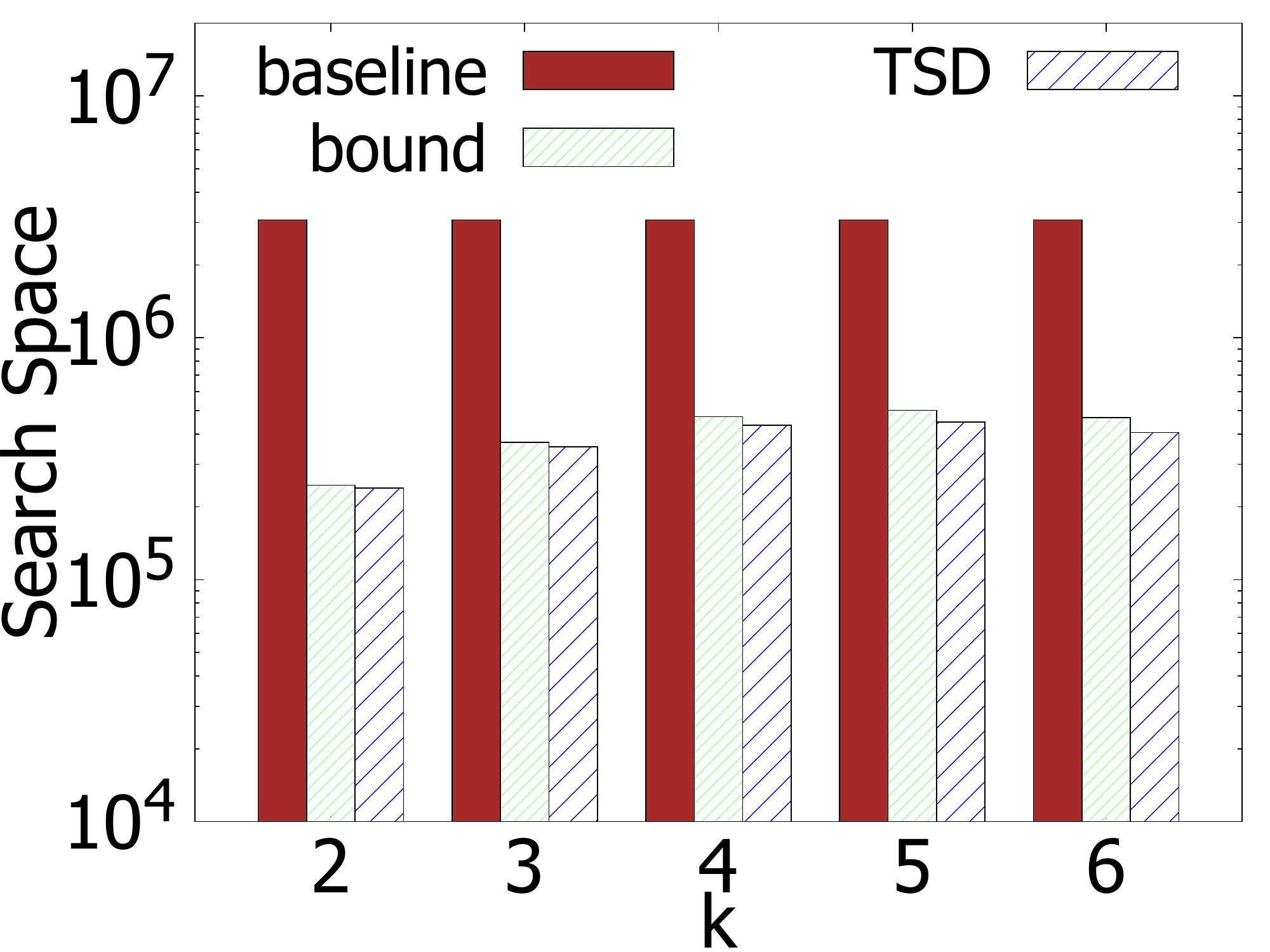}} }
\caption{Comparsion of \baseline, \bound, and \TSD in terms of search space.}
\label{fig.sp_com}
\end{figure}

\begin{figure}[t]
\centering \mbox{
\subfigure[Gowalla]{\includegraphics[width=0.33\linewidth]{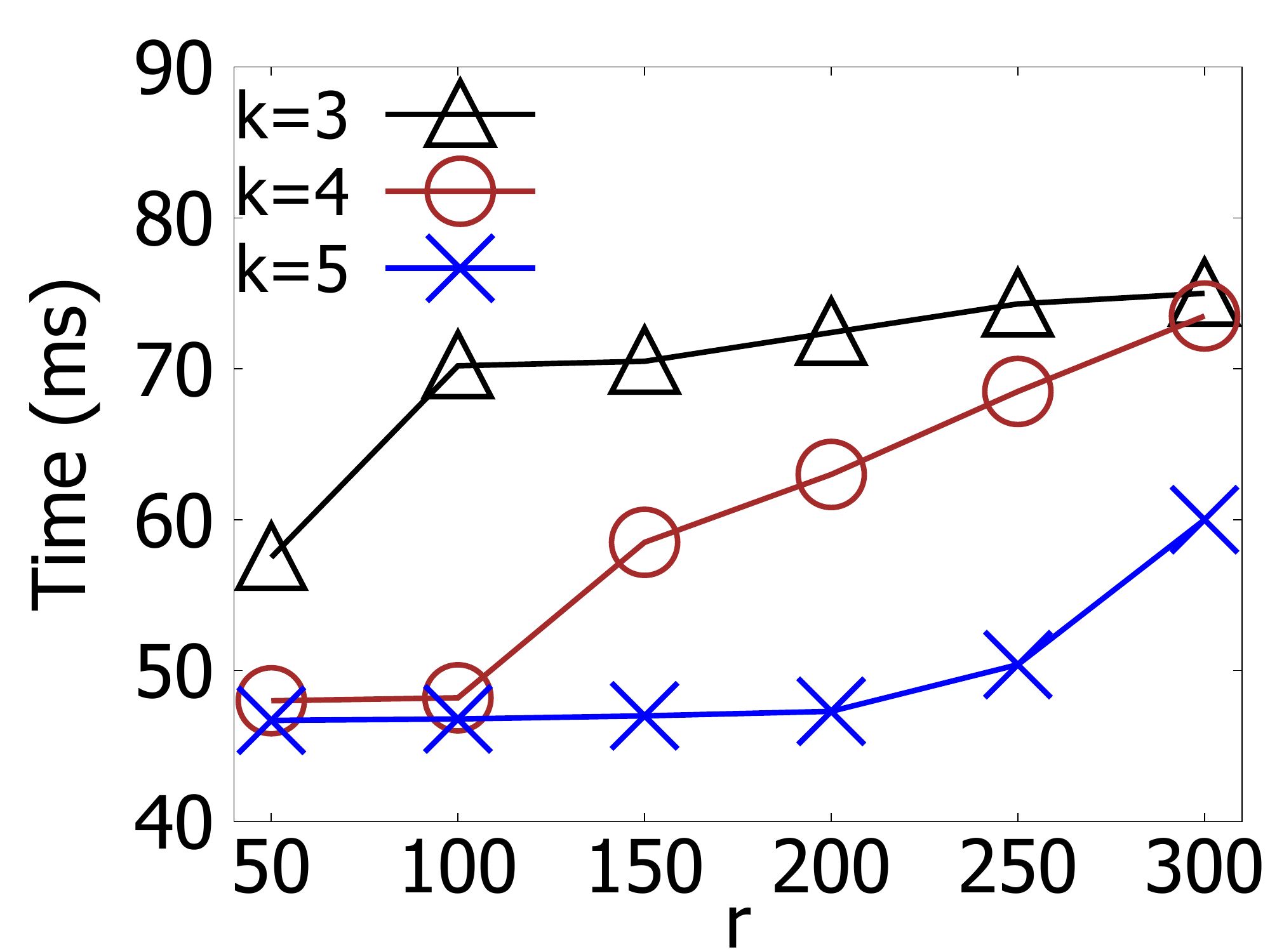}} 
\subfigure[LiveJournal]{\includegraphics[width=0.33\linewidth]{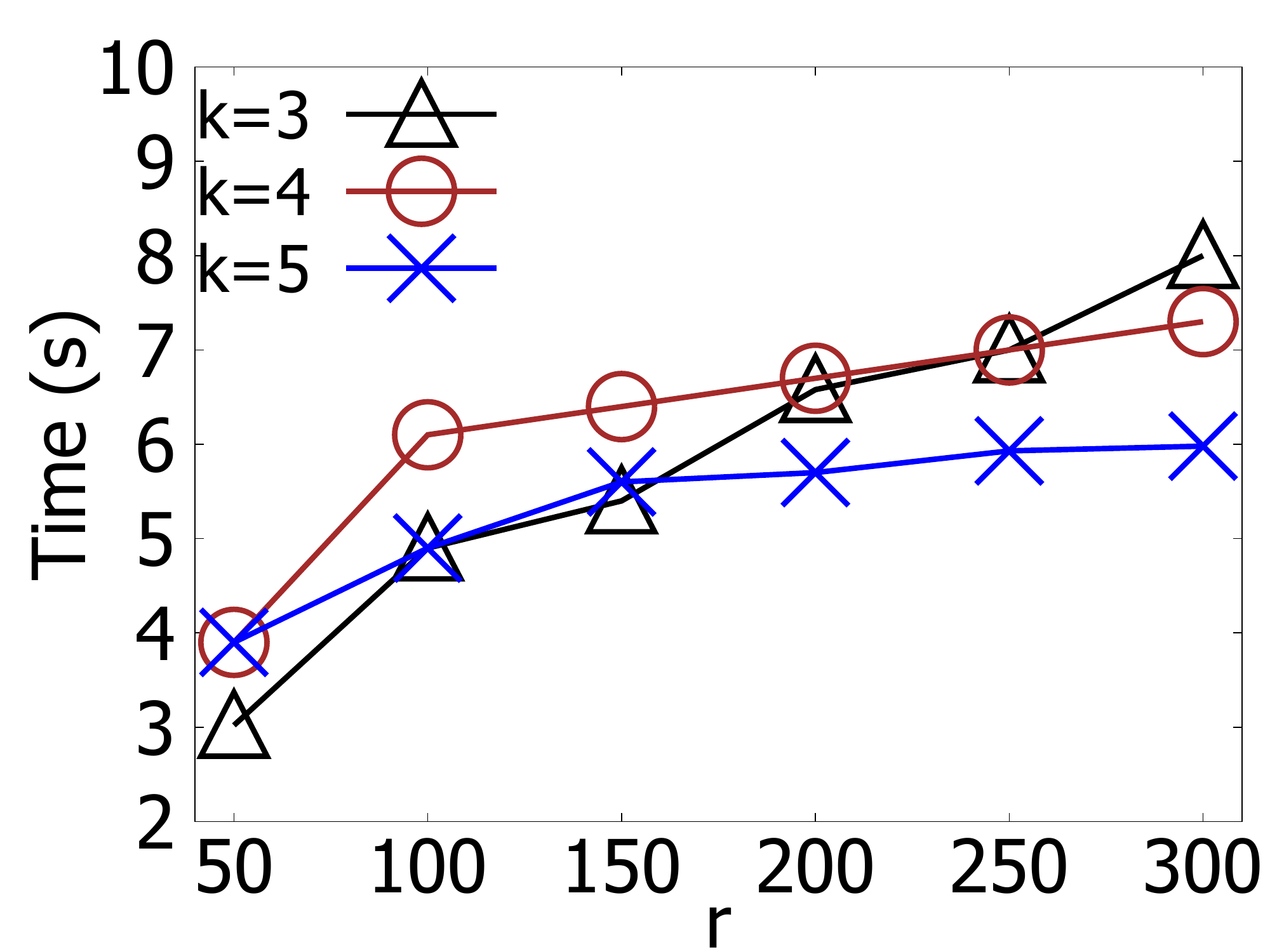}} 
\subfigure[Orkut]{\includegraphics[width=0.33\linewidth]{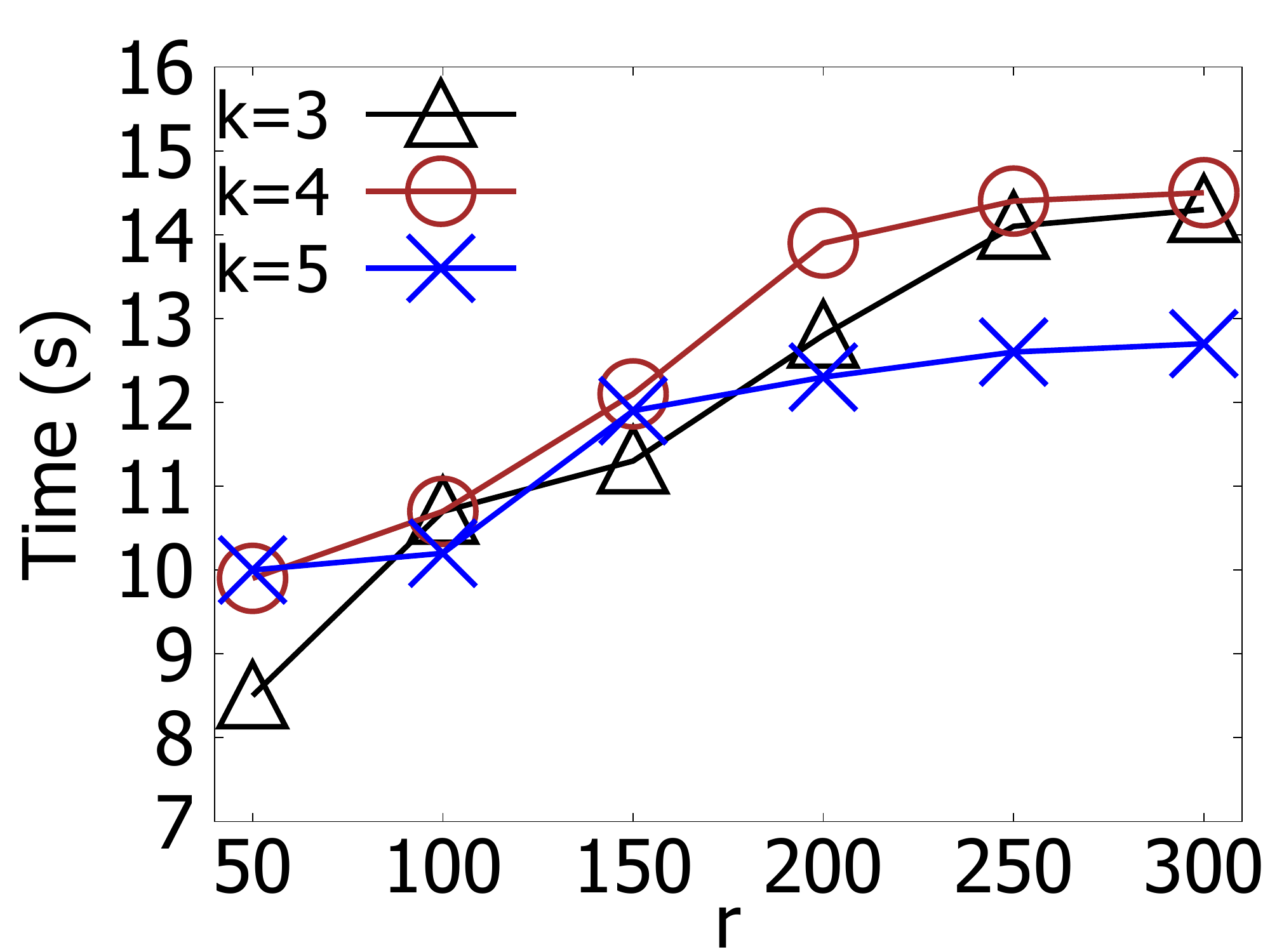}} }
\caption{Running time (in seconds) of \TSD varied by  $k$ and $r$.}
\label{fig.tsd_qt}
\end{figure}

\begin{figure}[t]
\centering \mbox{
\subfigure[Gowalla]{\includegraphics[width=0.33\linewidth]{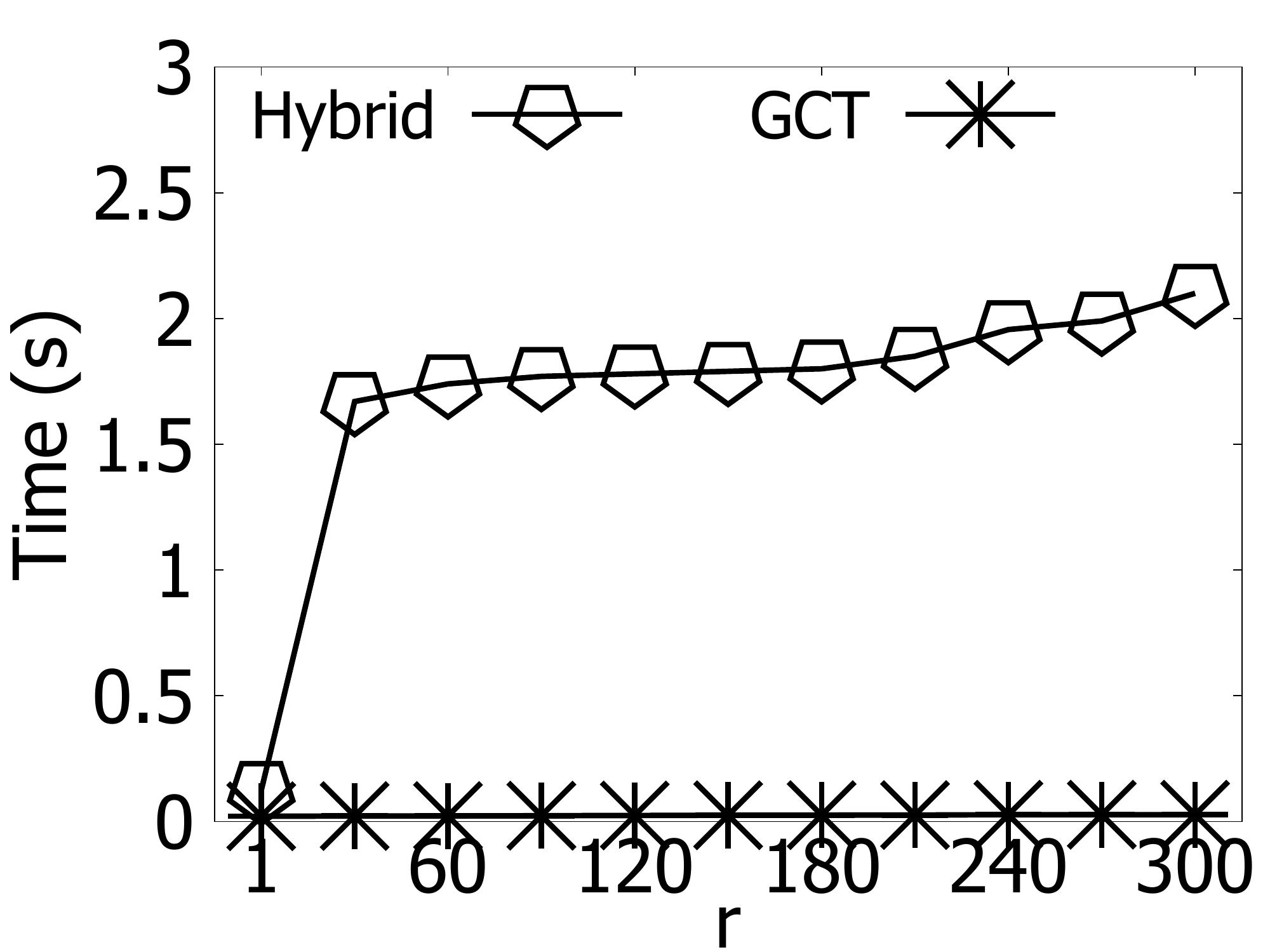}} 
\subfigure[LiveJournal]{\includegraphics[width=0.33\linewidth]{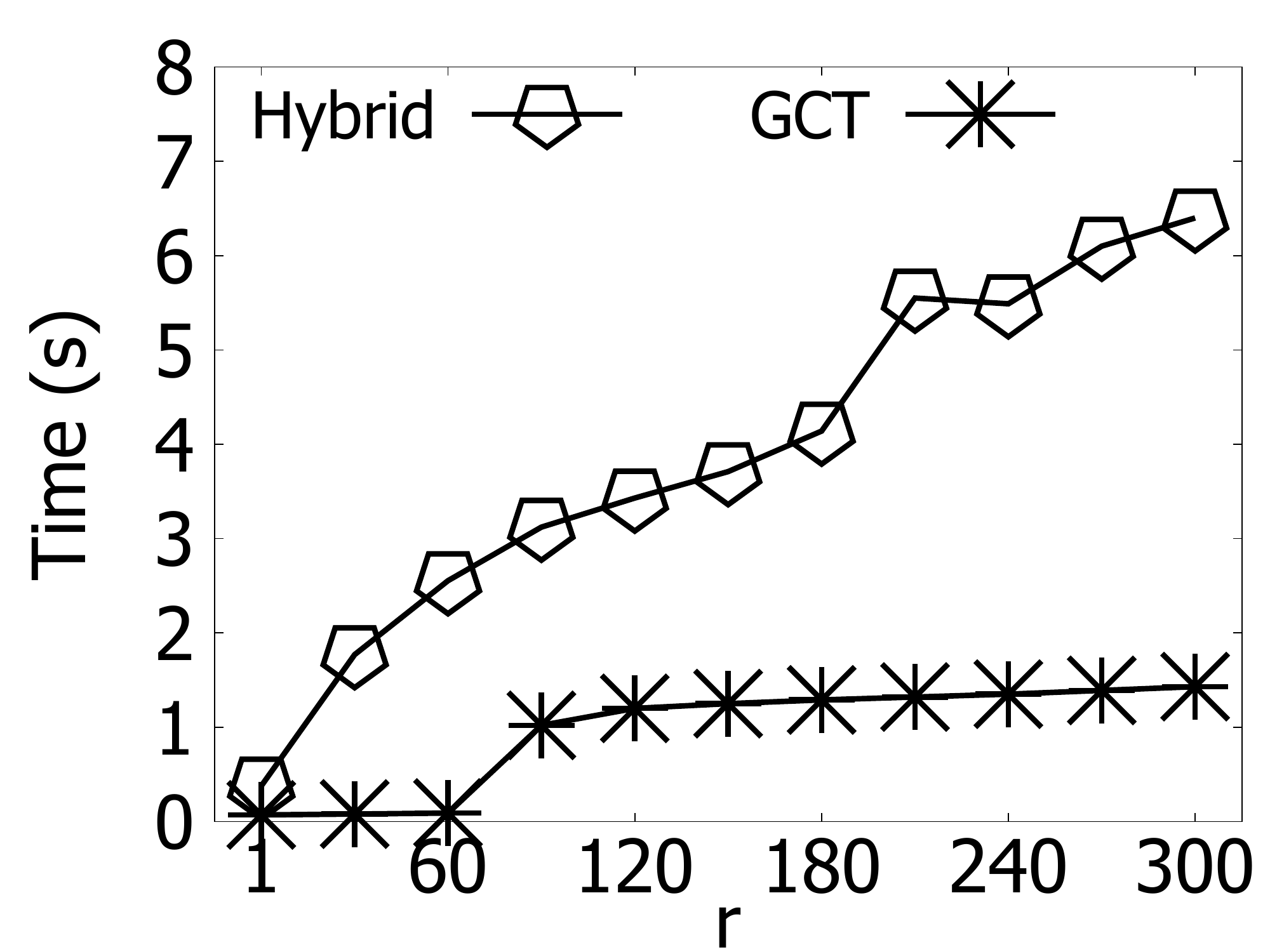}} 
\subfigure[Orkut]{\includegraphics[width=0.33\linewidth]{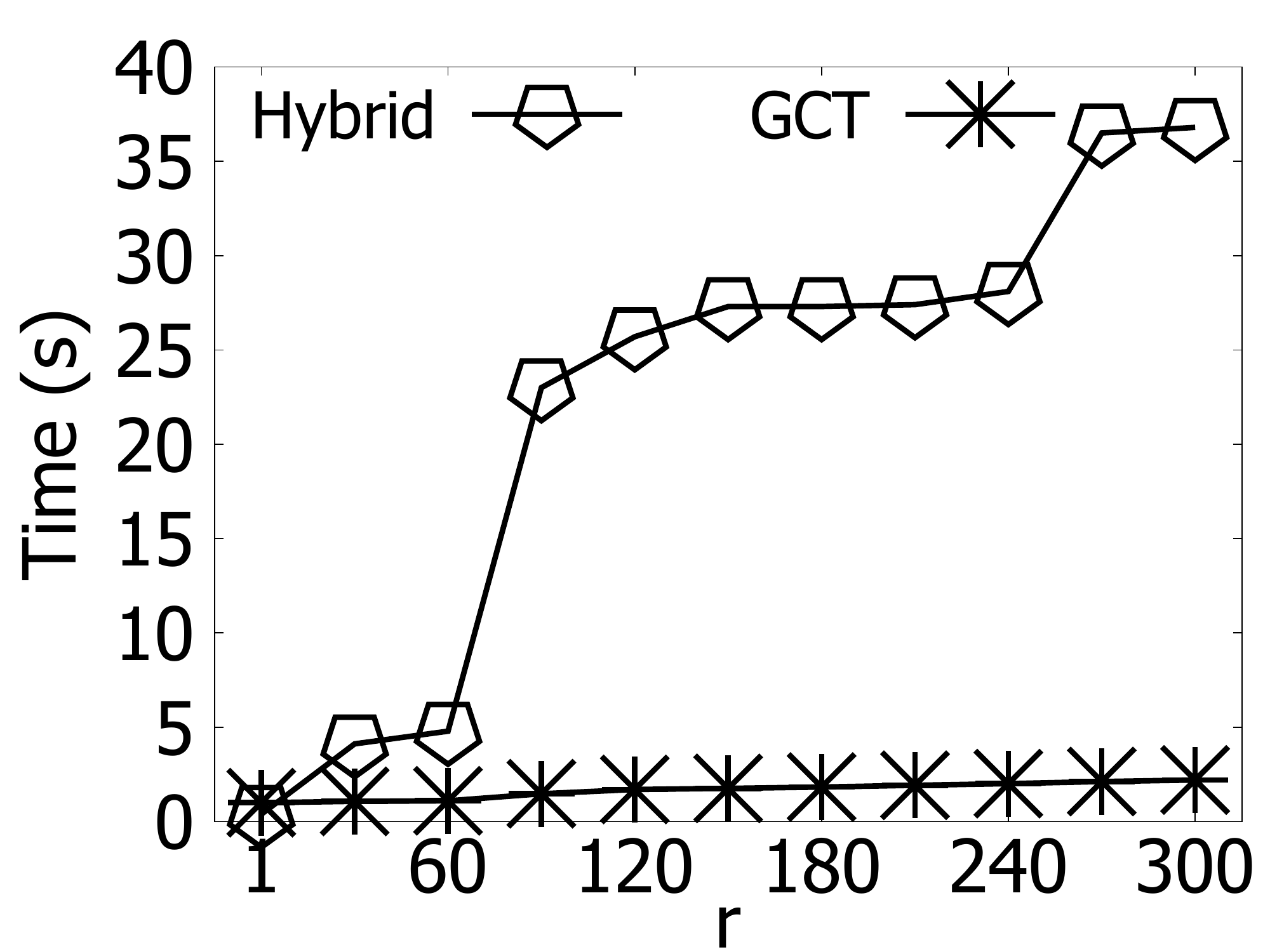}} }
\caption{Running time (in seconds) of \Naive and \ADV varied by $r$.}
\label{fig.comp_top}
\end{figure}






\begin{table}[t]
\begin{center}\vspace*{-0.1cm}
\scriptsize
\caption[]{\textbf{Running time (in seconds) of \TSD and \ADV on Livejournal for ego-network extraction and ego-network truss decomposition.}}\label{tab:compare_ego}
\begin{tabular}{|l|c|c|c|c|}

\hline \multirow{3}{*}{Network} & \multicolumn{2}{|c|}{Ego-network } & \multicolumn{2}{|c|}{Ego-Network Truss} \\
 & \multicolumn{2}{|c|}{Extraction Time} & \multicolumn{2}{|c|}{Decomposition Time} \\
 \cline{2-5} & \TSD & \ADV & \TSD & \ADV \\ \hline
Wiki-Vote     & 3.5s & \textbf{2.2s} & 6.6s & \textbf{4.5s} \\ \hline 
Email-Enron   & 4.4s & \textbf{2.2s} & 5.8s & \textbf{3.9s} \\ \hline
Epinions      & 14s  & \textbf{6.7s} & 18.8s & \textbf{11s} \\ \hline
Gowalla       & 31.2s & \textbf{8.53s} & 16.1s & \textbf{11.8s} \\ \hline
NotreDame     & 49.2s & \textbf{18.5s} & 226s & \textbf{160s} \\ \hline
Livejournal   & 1,094s & \textbf{663s} & 7,902s & \textbf{5,240s} \\ \hline
socfb-konect  & 1,399s & \textbf{135s} & 78.2s & \textbf{75.4s} \\ \hline
orkut         & 7,180s & \textbf{2,469s} & 7,350s & \textbf{4,349s} \\ \hline
\end{tabular}
\end{center}
 \vspace*{-0.4cm}
\end{table}

\subsection{Efficiency Evaluation}

\stitle{Exp-1 (Efficiency comparison on all datasets)}: 
We compare the efficiency of our proposed methods on all datasets. Table~\ref{tab:querytime} shows the results of running time and search space. 
Clearly, \TSD is the most efficient in terms of running time, and \baseline is the worst. 
\TSD uses less search space than \bound, indicating a stronger pruning ability of $\widetilde{score}(v)$  against  $\overline{score}(v)$ in Lemma~\ref{lemma.trussbound}. The speedup ratio $R_t$ between \TSD and \baseline is defined by $R_t= t_{\baseline} /t_{\TSD}$ where $t_{\baseline}$ and $t_{\TSD}$ are the running time of \baseline and \TSD respectively. The speedup ratio $R_t$ (column 5 in Table~\ref{tab:querytime}) ranges from 265 to 2,745. In other words, our method \TSD achieves up to 2,745X speedup on the network NotreDame.  In addition, the pruning ratio $R_s$ between \TSD and \baseline is defined by $R_s= S_{\baseline} /S_{\TSD}$ where $S_{\baseline}$ and $S_{\TSD}$ are the search space of \baseline and \TSD respectively.  The pruning ratio $R_s$ (column 9 in Table~\ref{tab:querytime}) ranges from 3.1 to 3,355, which reflects an efficient pruning strategy of \TSD.

\stitle{Exp-2 (Efficiency comparison of all different methods)}: We vary parameter $k$ to compare the efficiency of all different methods. We compare six methods of  \baseline, \bound,  \TSD, \ADV, \component, and \core on three datasets Gowalla, Livejournal, and Orkut. The results of running time and search space are respectively reported in Figure~\ref{fig.qt_com} and Figure~\ref{fig.sp_com}. Similar results can be also observed on other datasets. \ADV is a clear winner for the varied $k$ on all datasets. Thanks to  efficient  \ADVindex, \ADV significantly outperforms two state-of-the-art methods of \component and \core on large networks of LiveJournal and Orkut. Moreover, \ADV outperforms \TSD, indicating the superiority of a more compact \ADVindex against \TSDindex. In addition, we report the search space results in Figure~\ref{fig.sp_com}. It shows that the search space is significantly reduced by \bound against \baseline on all datasets, indicating the technical superiority of graph sparsification and the upper bound of $\overline{score}(v)$. \TSD performs the best in search space by leveraging another tight upper bound $\widetilde{score}(v)$, which learns structural information from the \TSDindex.  


\stitle{Exp-3 (Indexing scheme comparison between \TSD and \ADV)}: 
We compare two indexing methods of \TSD and \ADV in terms of index construction time, index size, and index-based query processing time of structural diversity search. The results of \TSD and \ADV on all dataset are reported in Table~\ref{tab:indexcompare}. 
The index size of \ADVindex is smaller than the size of \TSD, due to a compact structure of \ADVindex by discarding unnecessary edges within social contexts.
\ADV achieves a much faster index construction time than \TSD, thanks to the efficient techniques of fast ego-network extraction and bitmap-based truss decomposition. Specifically, Table~\ref{tab:compare_ego} reports the detailed running time of ego-network extraction and ego-network truss decomposition by \TSD and \ADV on all datasets. This reflects that \ADV achieves significant accelerations on both ego-network extraction and ego-network truss decomposition, which validates the superiority of our speed up techniques proposed in Section~\ref{sec.bitmap}. 
\ADVindex achieves faster index construction time and smaller index size. 
In addition, as shown in the columns 7 and 8 of Table~\ref{tab:indexcompare}, \ADV runs much faster than \TSD in terms of query time of structural diversity search.


\stitle{Exp-4 (Efficiency comparison of \ADV and \Naive)}: 
In this experiment, we compare \ADV with a very competitive method \Naive. 
As a hybrid approach of partial answer saving and online search, \Naive keeps in advanced the top-$r$ vertices for all possible $k$ and $r$. For an input query of parameters $k$ and $r$, \Naive can directly get the answer of top-$r$ vertices and then computes the corresponding social contexts using Algorithm~\ref{algo:comp-score} in an online manner. The main cost of \Naive is the social context computation.  Figure~\ref{fig.comp_top} shows the running time of \Naive and \ADV on three datasets by varying $r$ from 1 to 300 and $k=3$. \Naive is comparative to \ADV when $r=1$. However, when $r$ goes larger, \ADV is significantly faster than \Naive on all datasets, which reflects the superiority of our \ADVindex-based diversity search.

\jinbin{

\stitle{Exp-5 (Varying $k$ and $r$ for \TSD)}:  Figure~\ref{fig.tsd_qt} shows the running time of \TSD when varying  different parameters of $k$ and $r$. Each curve represents the \TSD using one value of parameter $k$. We observe that the running time mostly decreases with a larger value of $k$. \TSD takes a slight more time with the increased $r$, indicating a stable efficiency performance. Similar results are also observed on other datasets. 
}

\jinbin{

\stitle{Exp-6 (Scalability test)}: To evaluate the scalability of our proposed methods, we generate a series of power-law graphs using the PythonWeb Graph Generator\footnote{\small{\url{http://pywebgraph.sourceforge.net/}}}. We vary $|V|$ from 1,000,000 to 10,000,000, and $|E|=5|V|$. Figure~\ref{fig.sca}(a) shows the index construction time of \TSDindex, which scale  well with the increasing vertex number.  Figure~\ref{fig.sca}(b) shows the running time of \TSD. It takes a few seconds to process the truss-based structural diversity search on all networks.

}

\begin{figure}[t]
\centering \mbox{
\subfigure[\TSDindex construction]{\includegraphics[width=0.45\linewidth]{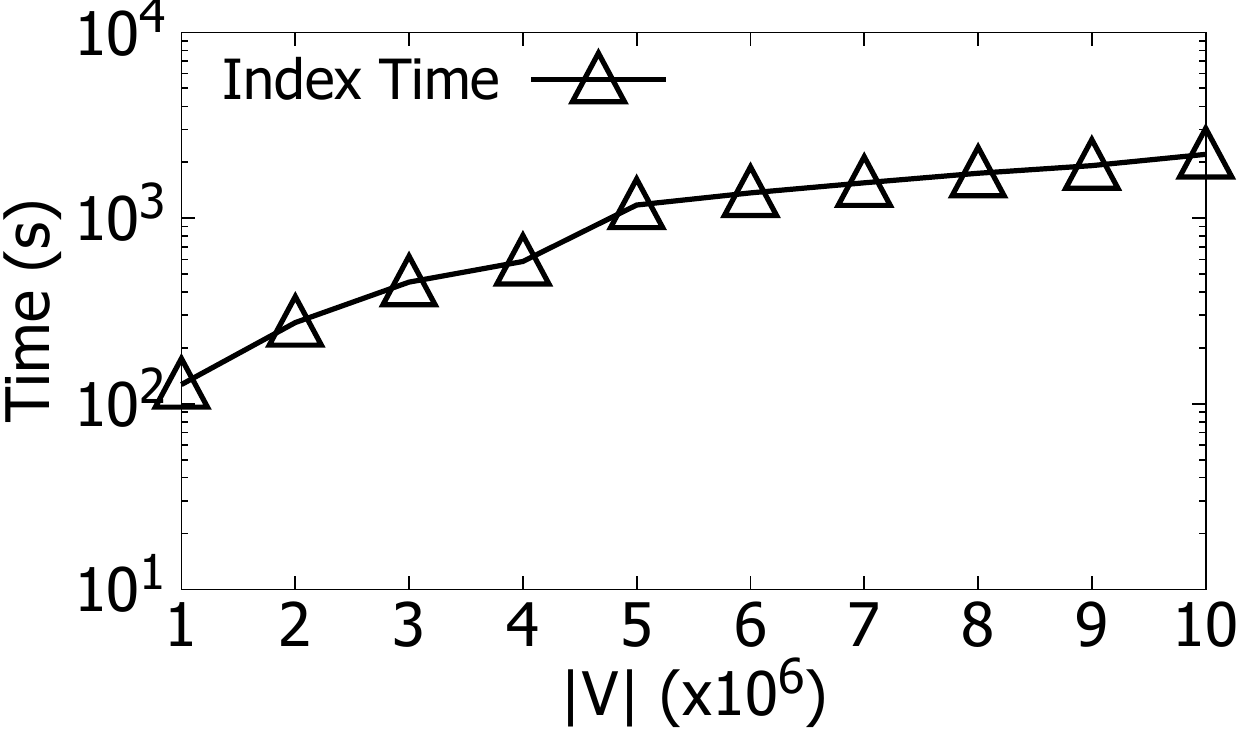}} 
\subfigure[\TSD]{\includegraphics[width=0.45\linewidth]{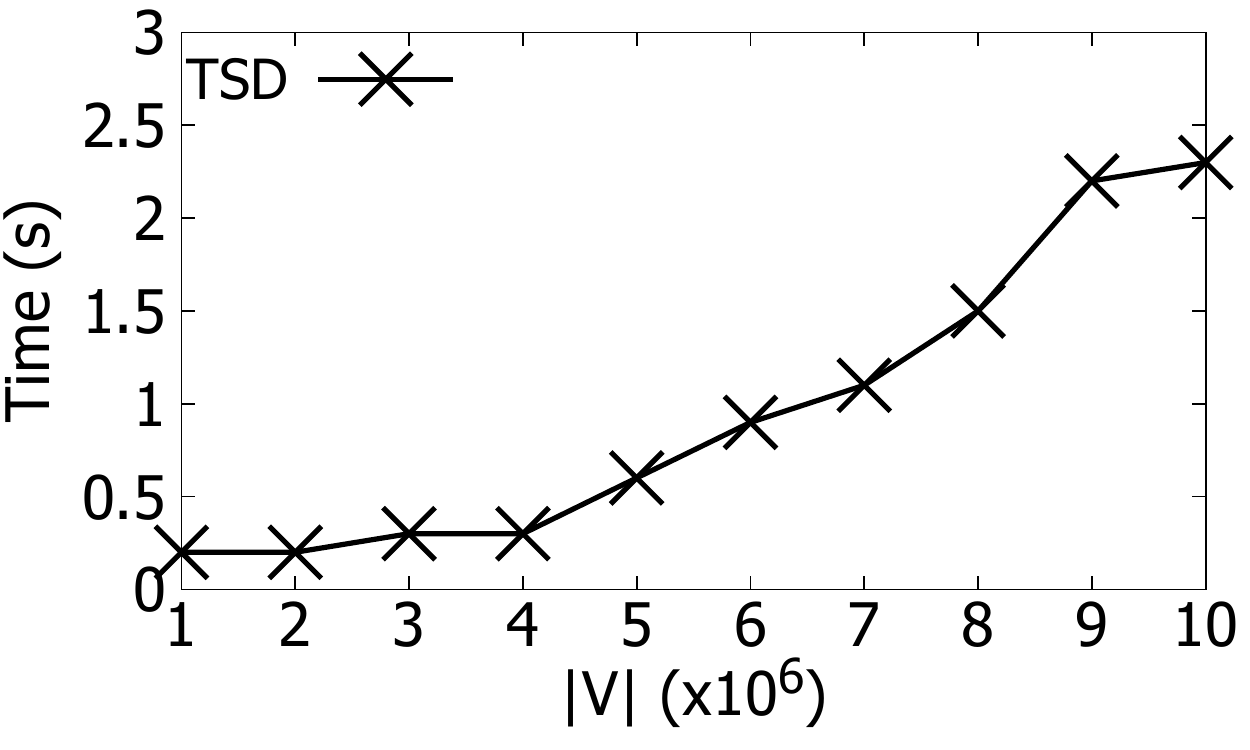}}}
 \vspace*{-0.4cm}
\caption{Scalability test of \TSDindex construction and \TSD  on power-law graphs in terms of running time (in seconds).}
\label{fig.sca}
\end{figure}

\begin{figure}[t]
\centering \mbox{
\subfigure[Gowalla]{\includegraphics[width=0.33\linewidth]{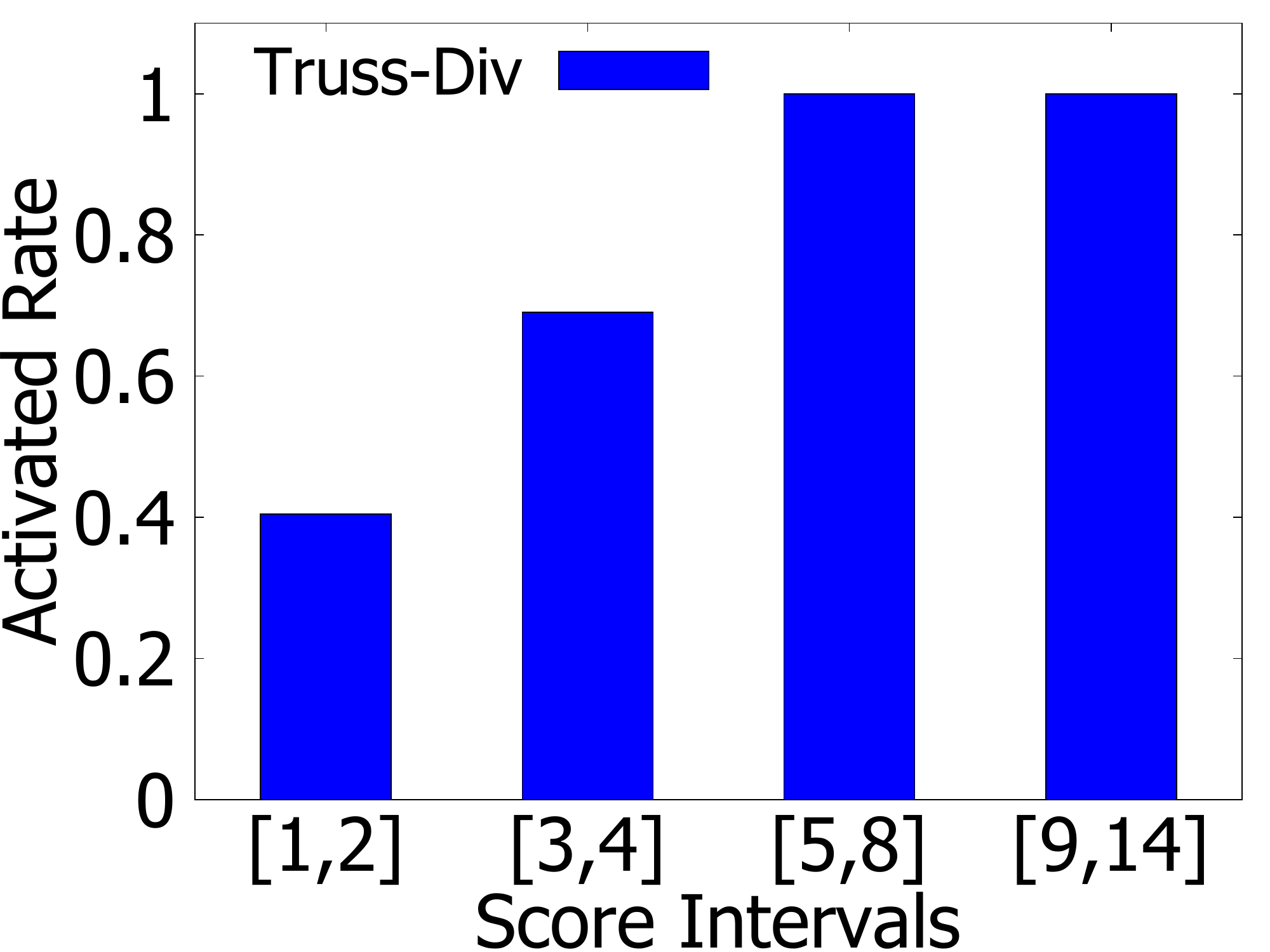}} 
\subfigure[LiveJournal]{\includegraphics[width=0.33\linewidth]{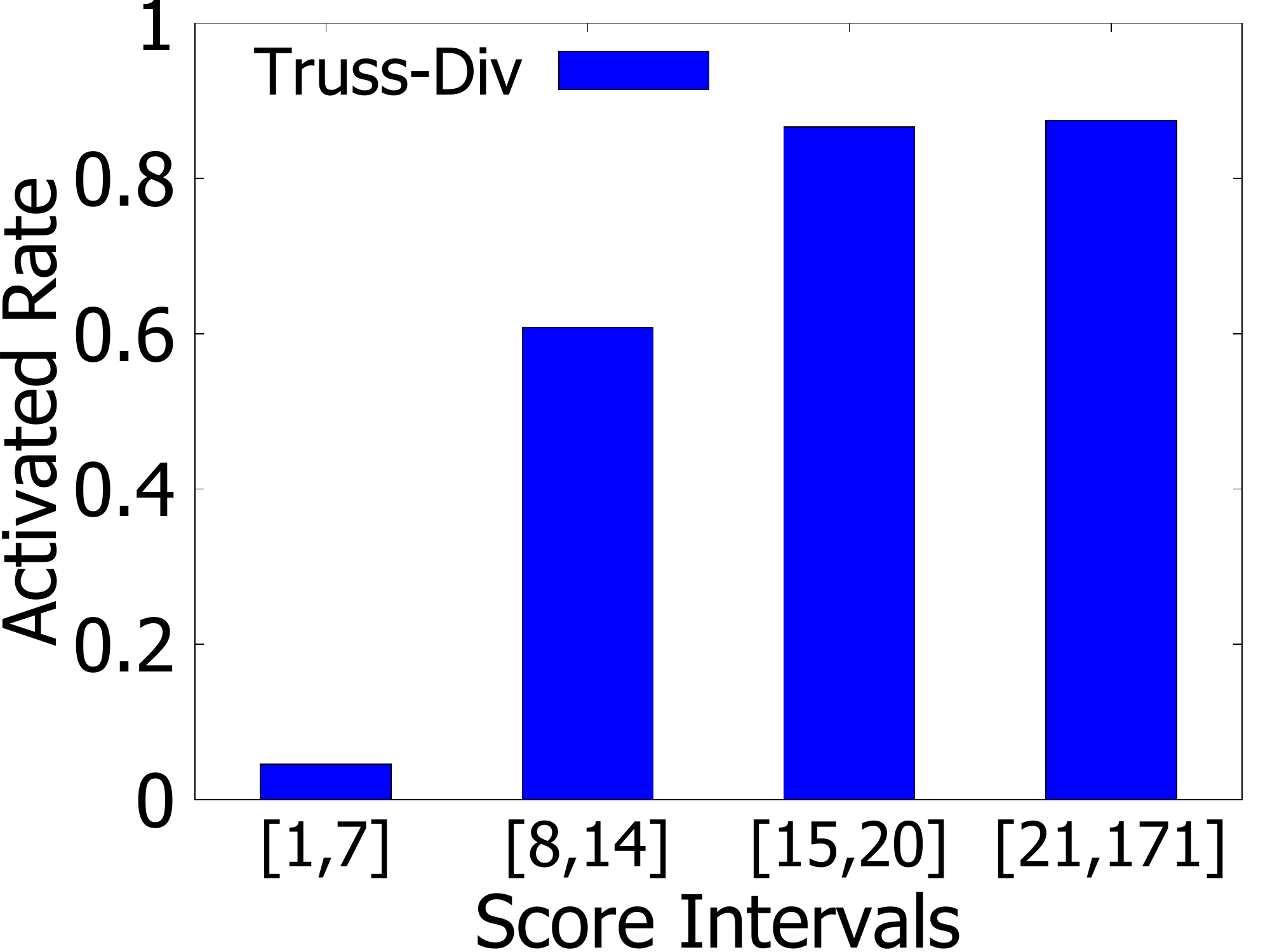}} 
\subfigure[Orkut]{\includegraphics[width=0.33\linewidth]{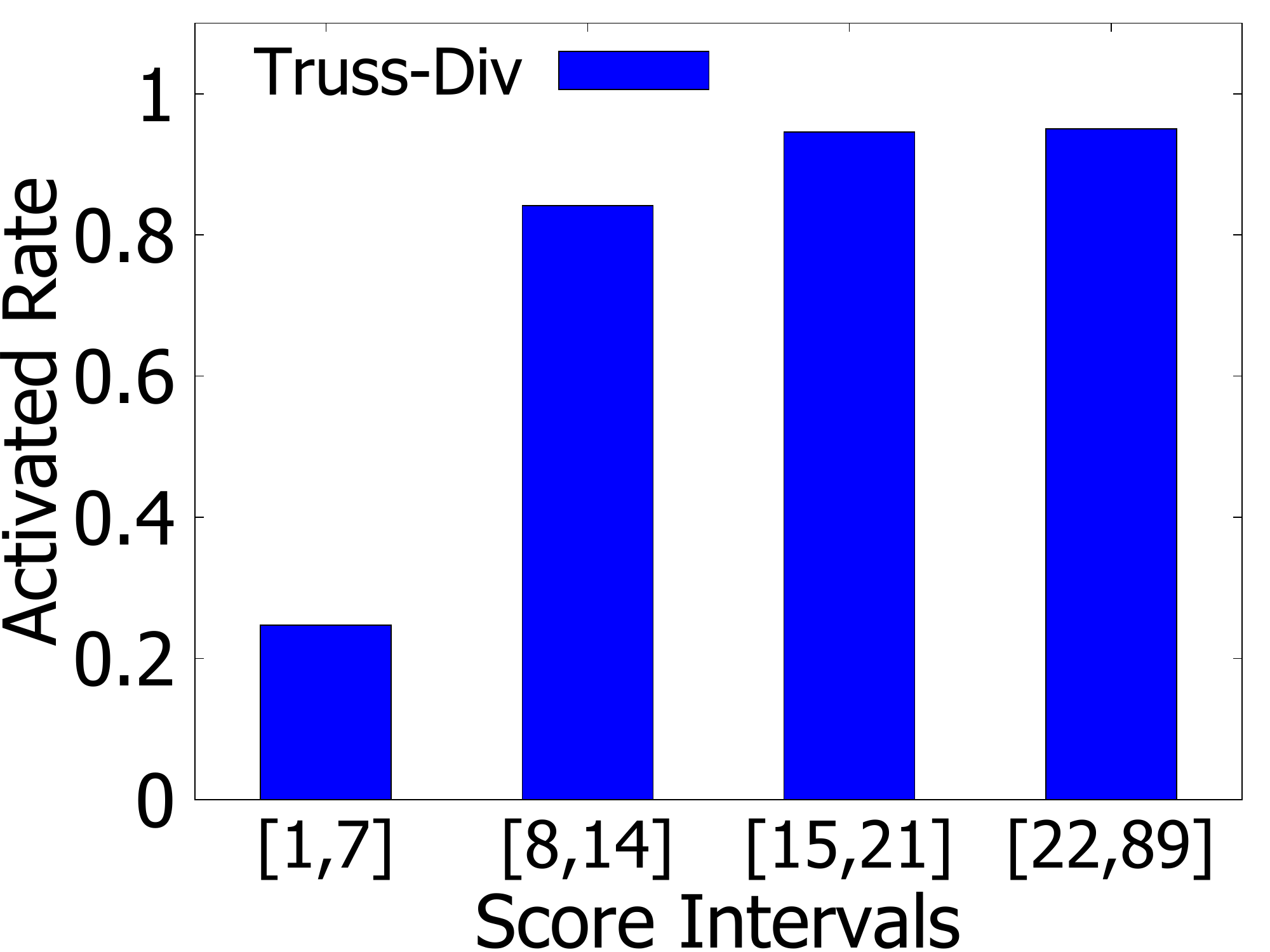}} }
 \vspace*{-0.4cm}
\caption{Correlation between social contagion and truss-based structural diversity.}
\label{fig.propagate_score}
\end{figure}

\subsection{Effectiveness Evaluation}
\eat{
This experiment evaluates the effectiveness of truss-based structural diversity model for social contagion.
We simulates the process of influence propagation using the independent cascade model \cite{goyal2011celf++}. We first apply influence maximization algorithm \cite{tang2015influence} on graph $G$ to obtain 50 vertices as a set of  activated seeds. Then, we evaluate the number of target vertices (output by different approaches) that get activated (social contagion) by these seeds in the influence propagation. The Monte Carlos sampling is performed in 10,000 times. We treat undirected graphs as directed graphs, by regarding each undirected edge $e=(u,v)$ as two directed edges $<$$u, v$$>$ and $<$$v, u$$>$, with the same influential probability $p(e) =0.01 $ by default.
}

\jbhuang{
This experiment evaluates the effectiveness of truss-based structural diversity model for social contagion. As mentioned in the introduction, social contagion is an information diffusion process that a user of a social network gets affected by the information propagated from his/her neighbors. In this experiment, we simulate the social contagion by the process of influence propagation using the independent cascade model \cite{goyal2011celf++,bian2020efficient}. In the independent cascade model, vertices in the input graph have two state: unactivated and activated. Initially, we apply influence maximization algorithm \cite{tang2015influence} on graph $G$ to obtain 50 vertices as a set of activated seeds. Then we uses these seeds to influence  their neighbors. If one of their neighbors get activated from the previous unactivated status, we say that this vertex gets contagion. For a activated seed $u$ and its unactivated neighbor $v$, the successful activation of $v$ from $u$ only depends on the edge probability between $u$ an $v$. We perform the Monte Carlos sampling for 10,000 times. Then, we evaluate the number of target vertices (output by different approaches) that get activated (social contagion) by these seeds in the influence propagation. We treat undirected graphs as directed graphs, by regarding each undirected edge $e=(u,v)$ as two directed edges $<$$u, v$$>$ and $<$$v, u$$>$, with the same influential probability $p(e) =0.01 $ by default.
}



\stitle{Exp-7 (Correlation between social contagion and truss-based structural diversity)}: This experiment attempts to validate the correlation between social contagion and truss-based structural diversity. We test whether the vertices with higher truss-based structural diversity scores would have higher probabilities to get activated. We set the parameter $k=4$. According to the scores of truss-based structural diversity, we partition the vertices into 4 groups with different score intervals from low to high. We report the activated rate of each group, that is, the number of activated vertices over the total number of vertices in this group. Figure~\ref{fig.propagate_score} reports the activated rates of all groups on three networks of Gowalla, LiveJournal, and Orkut. The results show that the vertices having higher scores are more easily to get activated. It confirms that truss-based structural diversity is a good predictor for social contagion.

\begin{figure}[t]
\centering \mbox{
\subfigure[Gowalla]{\includegraphics[width=0.33\linewidth]{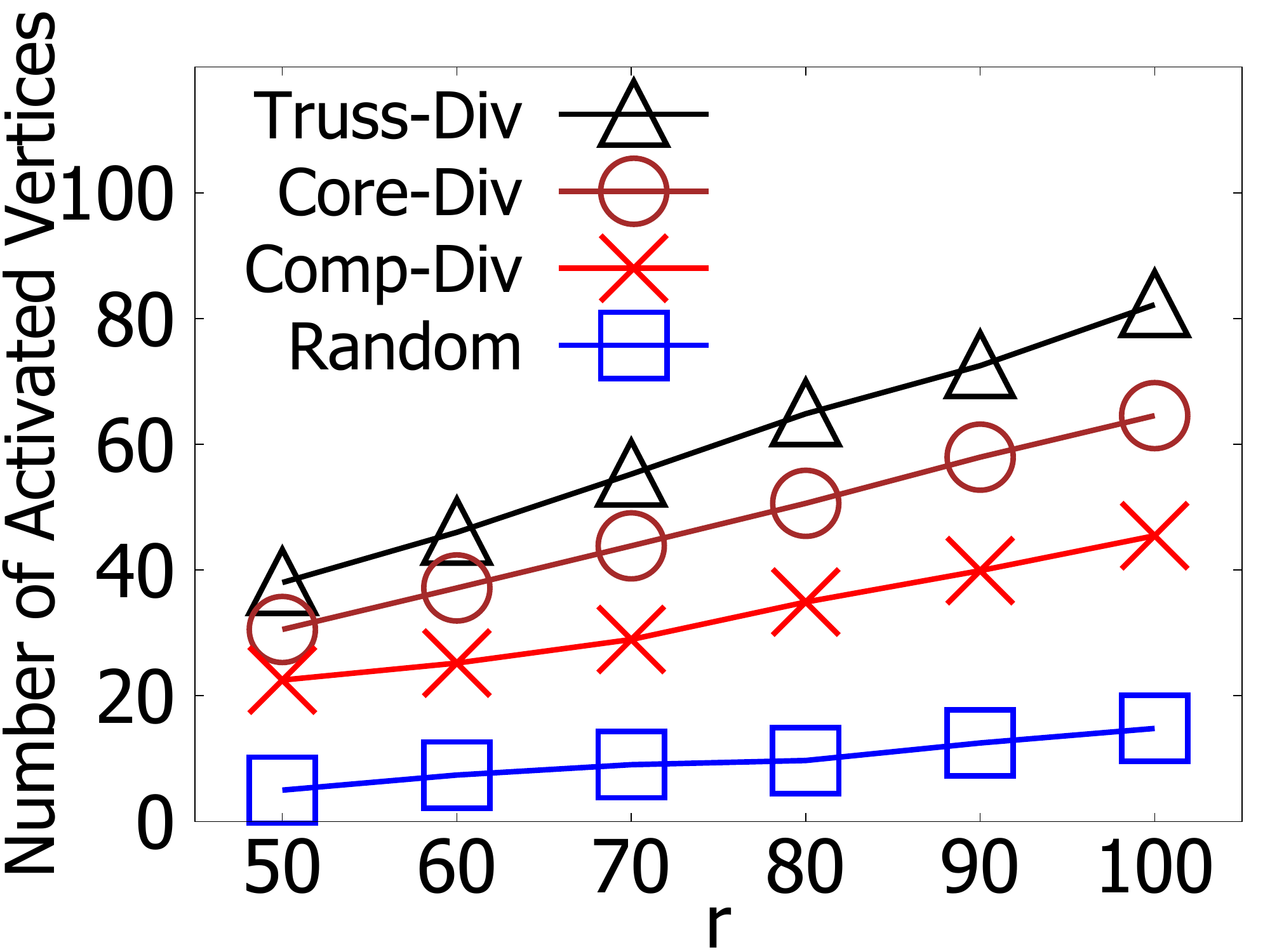}} 
\subfigure[LiveJournal]{\includegraphics[width=0.33\linewidth]{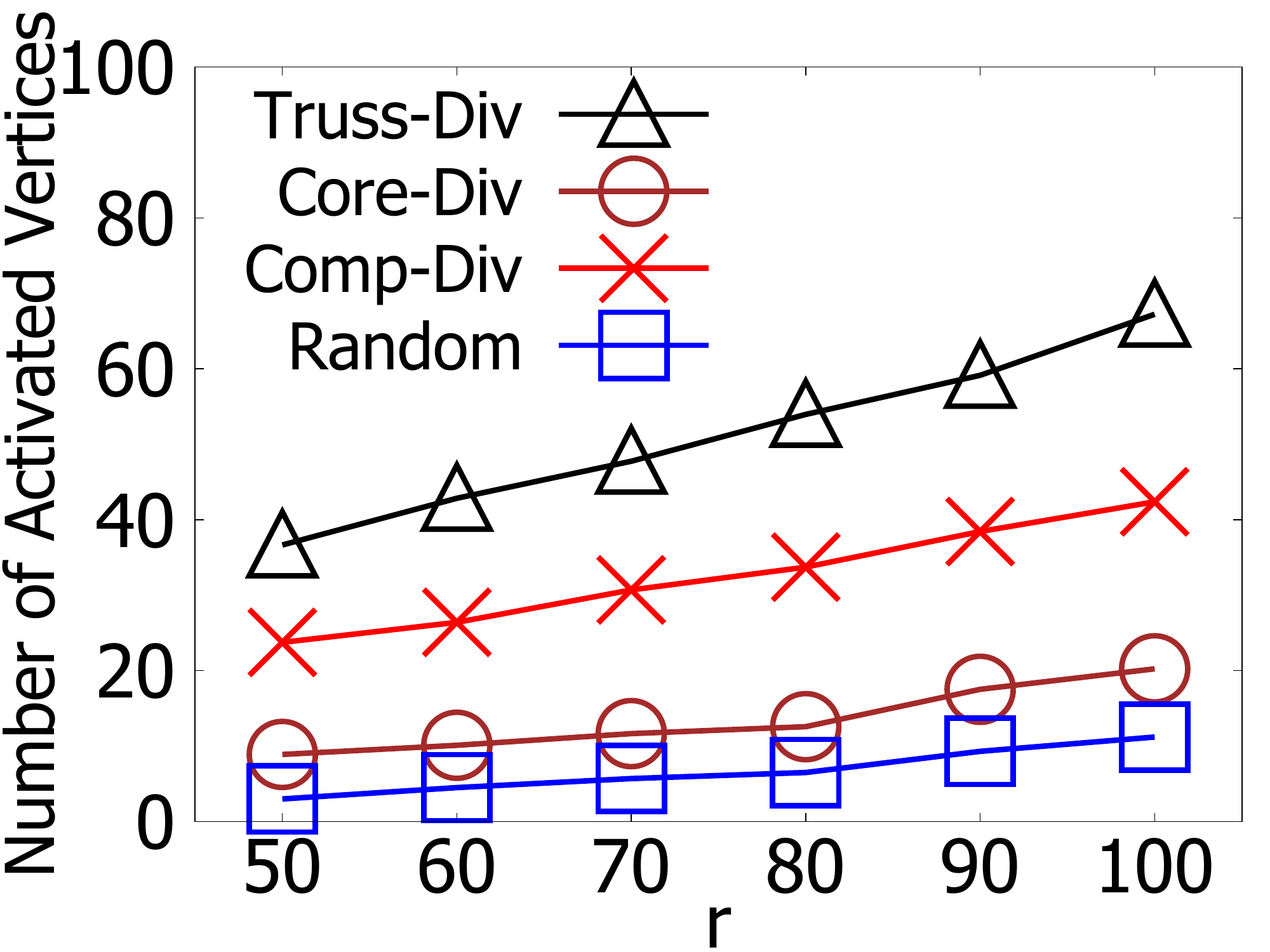}} 
\subfigure[Orkut]{\includegraphics[width=0.33\linewidth]{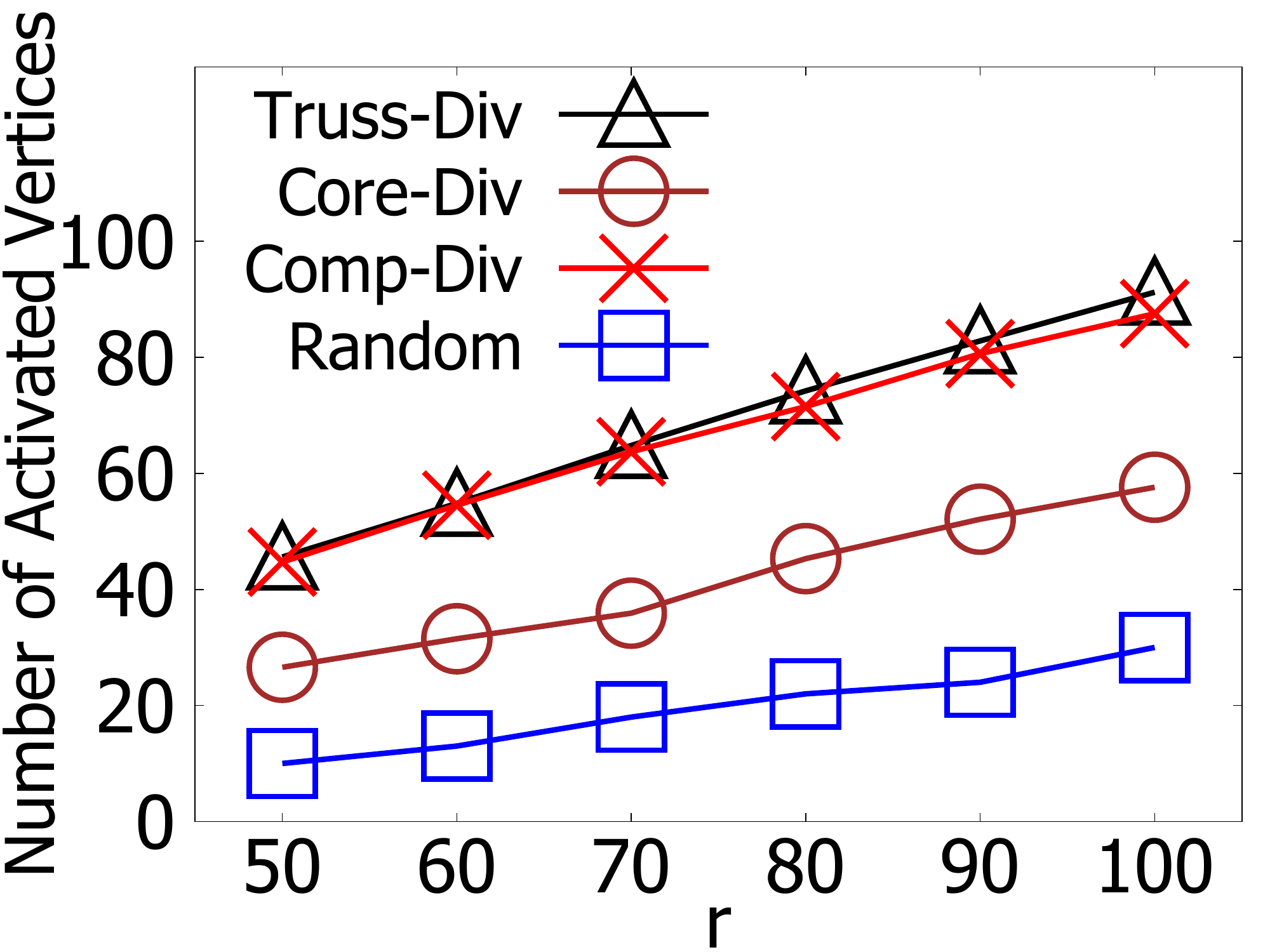}} }
\caption{Comparison of \random, \component, \core, and \TSD in terms of the number of activated vertices.}
\label{fig.propagate}
\end{figure}

\begin{figure}[t]
\centering \mbox{
\subfigure[Gowalla]{\includegraphics[width=0.33\linewidth]{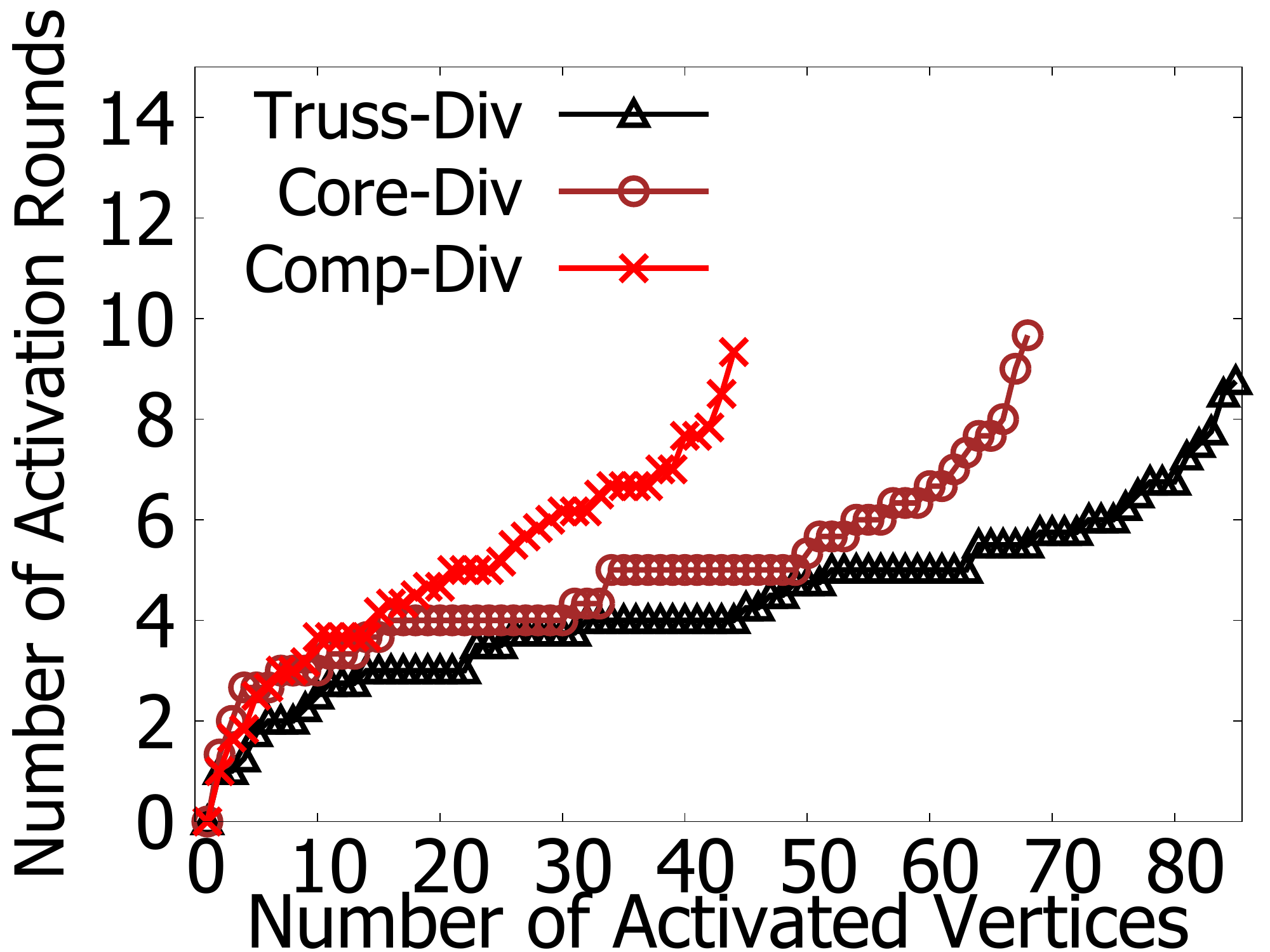}} 
\subfigure[LiveJournal]{\includegraphics[width=0.33\linewidth]{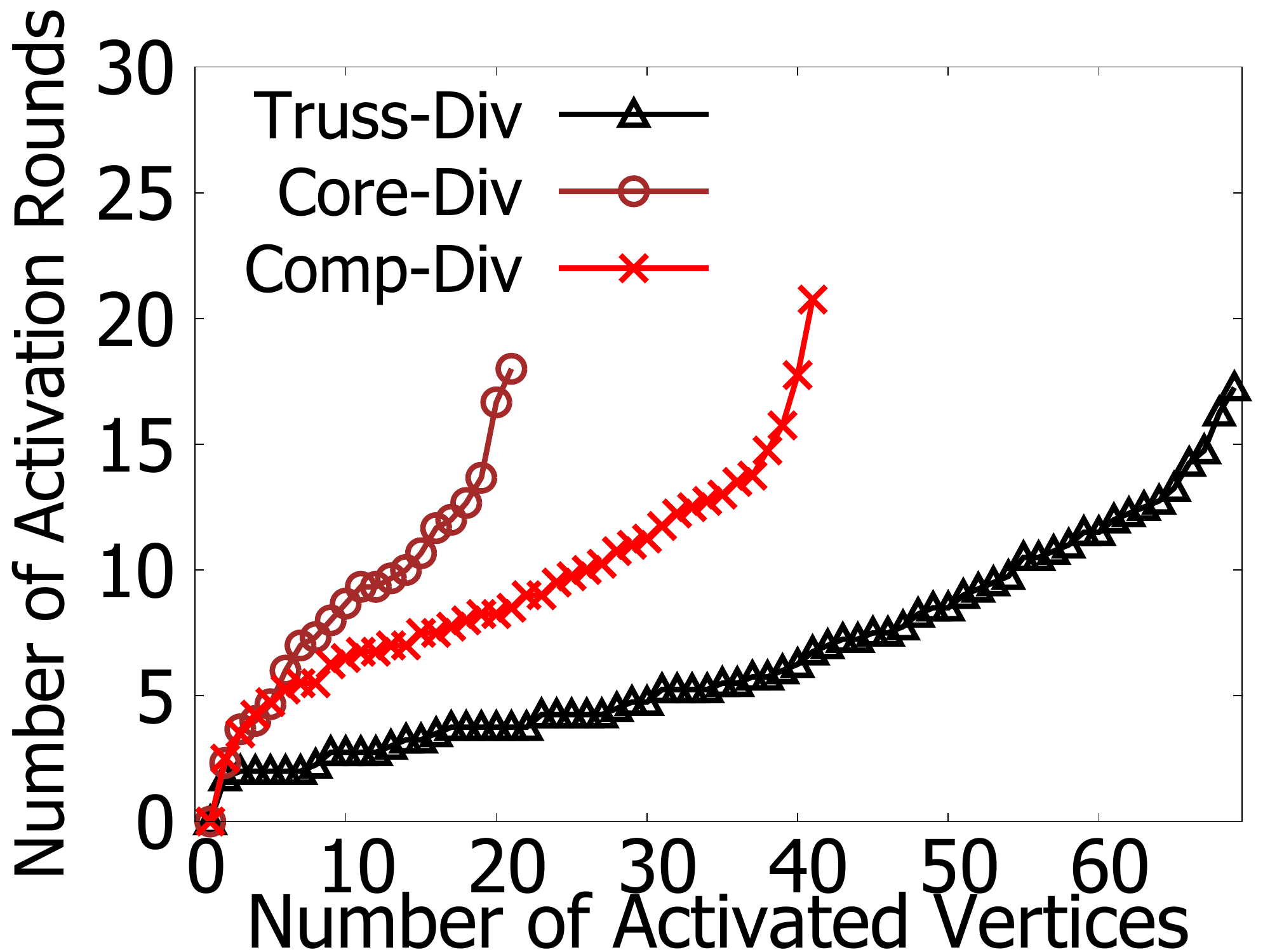}} 
\subfigure[Orkut]{\includegraphics[width=0.33\linewidth]{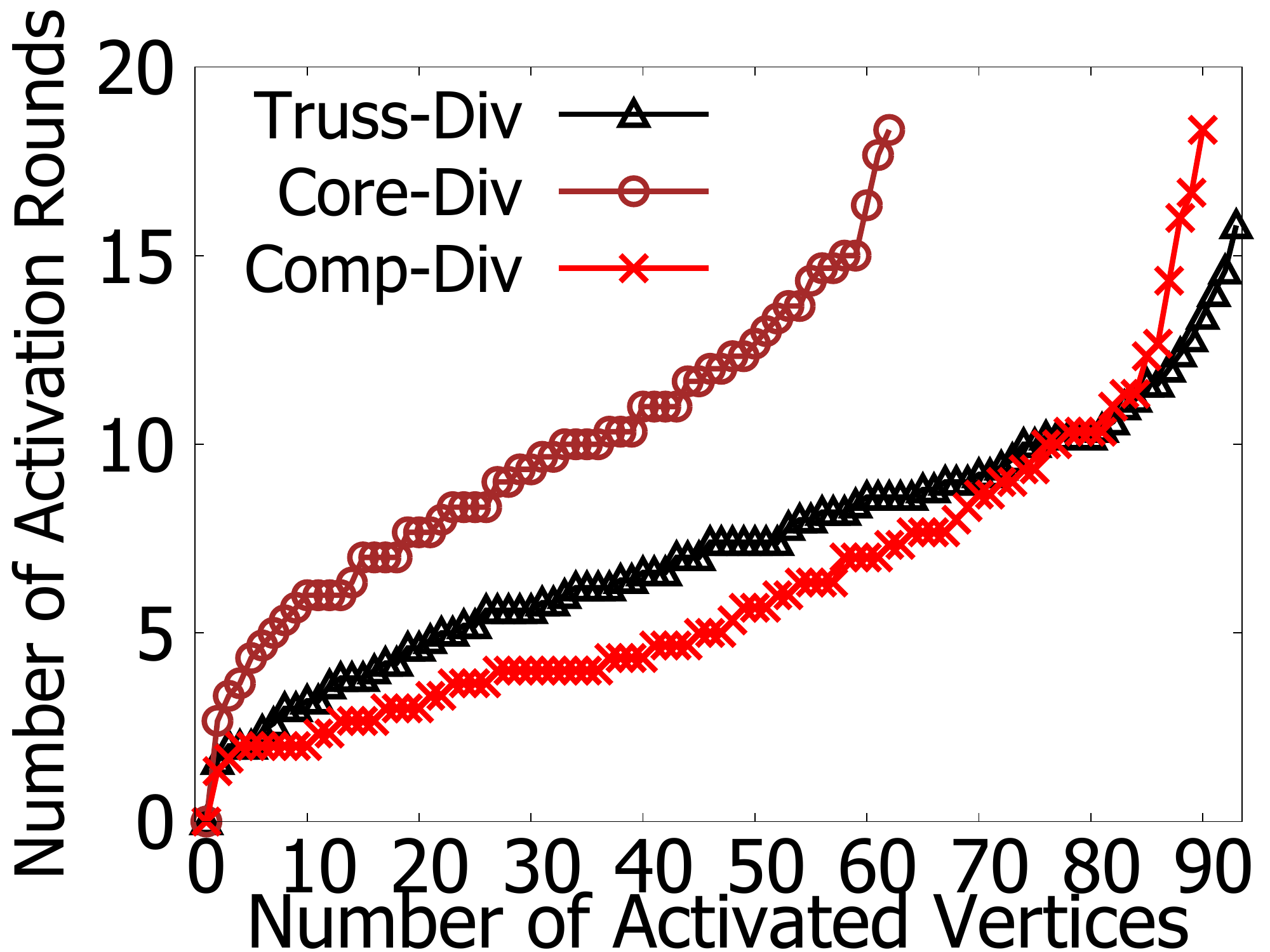}} }
\caption{Latency of activating top-100 results by three models.}
\label{fig.avgacttimes}
\end{figure}

\stitle{Exp-8 (Effectiveness comparison of different models)}: 
We apply all competitor methods \random, \component, \core, and our method \trussdiv to obtain $r$ vertices, by setting the parameter $k=4$ if necessary.  We evaluate how many vertices among those top-$r$ vertices selected by different methods will get activated in the influence propagation. 
The larger the number of activated vertices is, the better is. Figure~\ref{fig.propagate} shows the number of activated vertices by different methods  varied by parameter $r$. We can see that our method has more number of activated vertices than all the other methods, indicating the vertices with larger truss-based structural diversities have a higher probability to get affected by others.

\eat{
\begin{figure}[t]
\vspace*{-0.1cm}
\centering 
\includegraphics[width=0.5\linewidth]{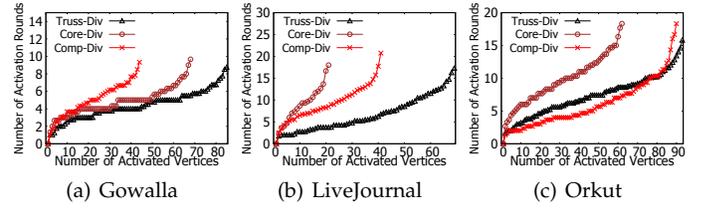}
\caption{Latency of activating top-100 results by three models.}
\label{fig.avgacttimes}
\end{figure}
}

\stitle{Exp-9 (Latency incurred to activate the results of different models)}: 
This experiment evaluates the latency (the number of activation rounds) incurred to activate the top-100 results of \trussdiv, \core and \component. Figure \ref{fig.avgacttimes} reports the average number of activation rounds w.r.t the number of activated vertices on three networks. 
\trussdiv achieves the smallest latency to activate the most number of vertices on Gowalla and Livejournal. \trussdiv is competitive with \component on Orkut, due to the imbalanced structural diversity distribution of top-100 results of \component. The activated speed of \component gets fast firstly and then slows down significantly. It shows that the vertices selected by \trussdiv are more quickly and easily to get social contagion than the \core and \component models. \eat{We also find that the latency has a potential relation with the distribution of structural diversity score inside the top-100 results computed by different models.}


\begin{figure*}[t]

\centering 
\includegraphics[width=0.98\linewidth]{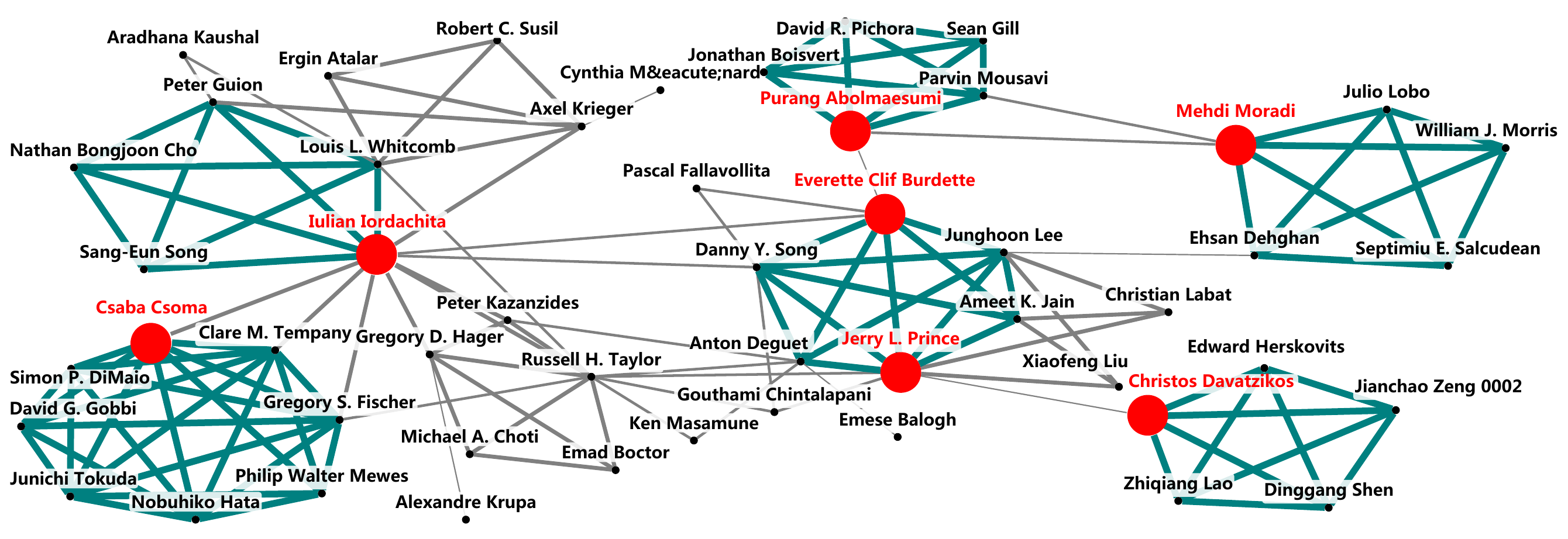}
\caption{A case study of structural diversity search on DBLP. Here, $k=5$ and $r=1$. This is an ego-network 
of ``Gabor Fichtinger''. Each component in green is a maximal connected 5-truss, which represents a distinct social context. 
}
\label{fig.casestudy}
\end{figure*}

\begin{figure}[t]
\centering \mbox{
\subfigure[\component]{\includegraphics[width=0.48\linewidth]{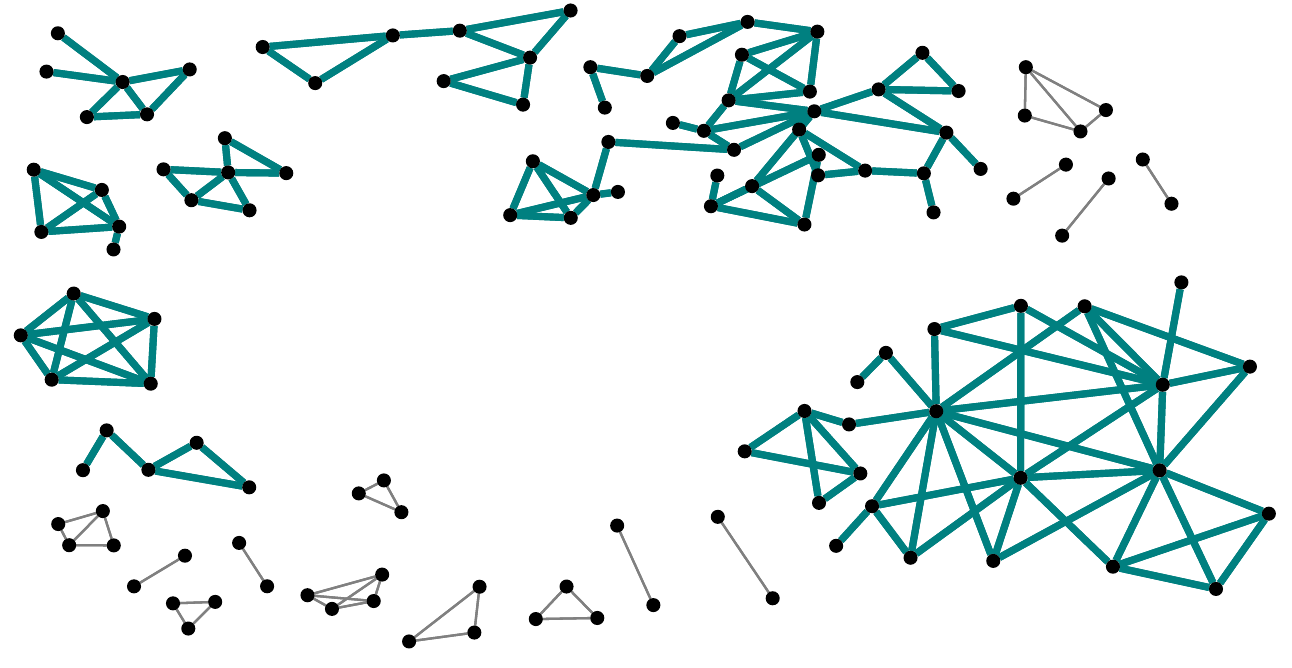}}  \quad
\subfigure[\core]{\includegraphics[width=0.48\linewidth]{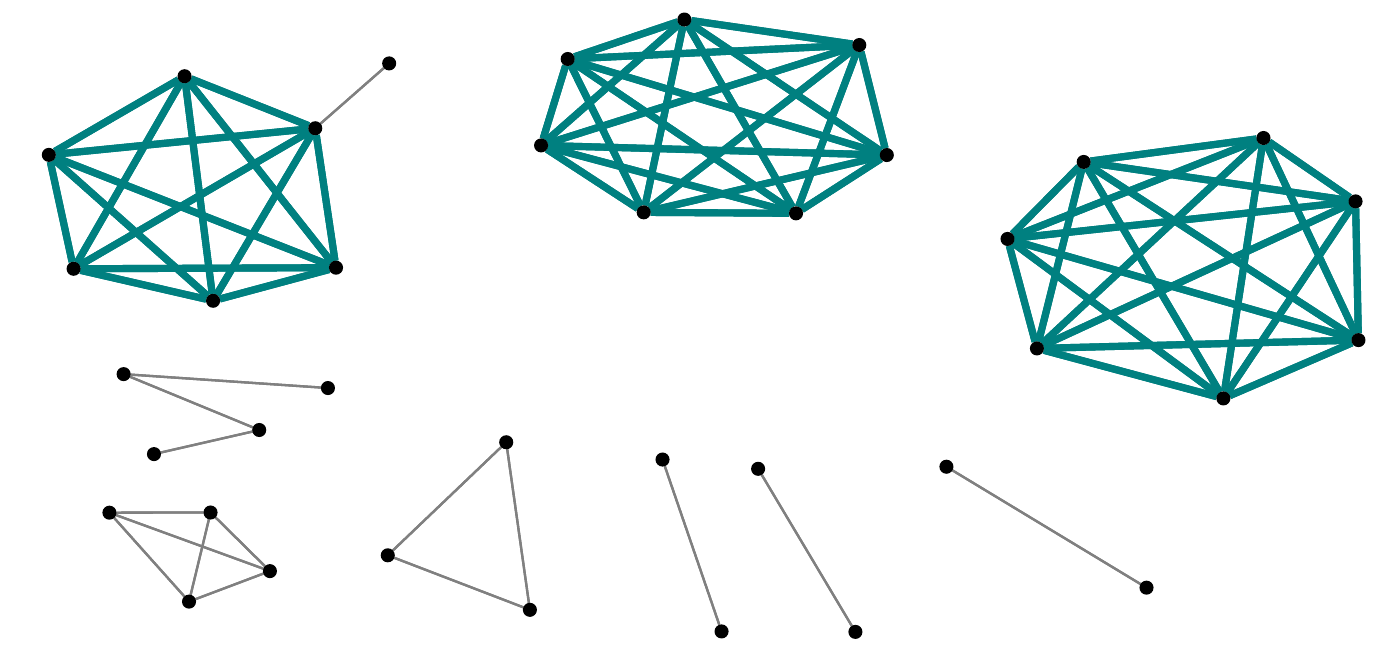}}}
\caption{Ego-networks of top-1 results by \core and \component on DBLP. Here $k=5$ and $r=1$. Social contexts are highlighted in green.}
\label{fig.egooftop}
\end{figure}


\subsection{Case Study on DBLP}

We conduct a case study on a collaboration network from DBLP.\footnote{\scriptsize{\url{https://dblp.uni-trier.de/xml}}} The DBLP network consists of 234,879 vertices and 542,814 edges. An author is represented by a vertex. An edge  between two authors indicates that they have co-authored for at least 3 times. We make a comprehensive comparison of \trussdiv, \component and \core models on the case studies of DBLP network.

\stitle{Exp-10 (Top-1 result by our truss-based model)}: 
We use the query $r=1$ and $k=5$ to test our top-$r$ truss-based structural diversity model. The answer is an author $v^*$ whose name is ``Gabor Fichtinger''. $v^*$ achieves the highest structural diversity score as $\score$$(v^*)$$=6$. Figure~\ref{fig.casestudy} uses a graph visualization tool to depict the \ego $G_{N(v^*)}$ of ``Gabor Fichtinger''. The edges of different trussness are depicted in different patterns. It consists of six maximal connected 5-trusses in green, which represent  six  semantic contents (e.g., 6 research groups working on different topics).  In contrast, we apply \component and \core on this same \ego $G_{N(v^*)}$ and obtain the following \emph{meaningless results}.

\squishlisttight
\item For \component, the whole network cannot be decomposed into multiple social contexts using the component-based model for any $k$-sized component \cite{chang2017scalable}, as the whole network $G_{N(v^*)}$ is a large connected component in Figure~\ref{fig.casestudy}. 
\item For \core, in Figure~\ref{fig.casestudy}, the six components in green are connected together to form a connected 4-core through the edges between the authors highlighted in red: "Csaba Csoma", "Iulian Iordachita", "Everette Clif Burdette", "Purang Abolmaesumi", "Mehdi Moradi", "Jerry L Prince",  and "Christos Davatzikos". 
\end{list} 

Hence, it is also difficult to apply the \component and \core models for effective structural diversity analysis on this complex \ego $G_{N(v^*)}$. This further shows the superiority of truss-based structural diversity model on the analysis of large-scale complex \egos. \eat{Moreover, we further compare 
\core and \component models with ours in Appendix E.}

\jinbin{
\stitle{Exp-11 (Top-1 results by \component and \core models)}: 
To further compare with \trussdiv, we use \component and \core methods to perform their diversity search under the same parameter setting ($k=5$ and $r=1$) on the DBLP network .  Figure~\ref{fig.egooftop} depicts the \egos of top-1 result of \component and \core respectively with eight and three identified social contexts. \component treats one component whose size is at least 5 as a social context. \core treats one maximal 5-core as a social context. Each identified social context is highlighted in green in Figure~\ref{fig.egooftop}. However, these social contexts are completely isolated in Figures~\ref{fig.egooftop}(a) and~\ref{fig.egooftop}(b), which are different from the connected social contexts by \trussdiv in Figure~\ref{fig.casestudy}. It further confirms that component-based and core-based models can find simple structure of isolated social contexts, but have limited decomposability to discover social contexts on complex networks.

\begin{table}[t]
\begin{center}\vspace*{-0.1cm}
\scriptsize
\caption[]{\textbf{Ego-network statistics of top-1 results on DBLP.}}\label{tab:topinfo}
\begin{tabular}{|c|c|c|c|c|c|c|}

\hline  \multirow{2}{*}{Methods} & Author & \multirow{2}{*}{$|V|$} & \multirow{2}{*}{$|E|$} & \multirow{2}{*}{Density} & \multirow{2}{*}{$|\context(v)|$} & Activated \\
 & Name (ego) & & &  &  & Probability \\
\hline \hline
\component & Ming Li & 130 & 344 & 2.64 & 8 &  0.44 \\ \hline
\core & Rui Li & 38 & 148 & 3.89  & 3 & 0.43 \\ \hline
\trussdiv & Gabor Fichtinger & 51 & 264 & \textbf{5.18} & 6 & \textbf{0.47} \\ \hline

\end{tabular}
\end{center}
 \vspace*{-0.4cm}
\end{table}

\stitle{Exp-12 (Quality Evaluation of Social Contexts)}: 
Table~\ref{tab:topinfo} reports the statistics of three \egos of top-1 result by \component, \core, and \trussdiv on DBLP. 
We report the author name of answers,  vertex size,  edge size, density, the number of social contexts (i.e., $|\context(v)|$), and  activated probability. 
We evaluate the activated probability of the center vertex $v^*$ influenced by its neighbors on its \ego. For each top-1 result, we construct a graph $H^*$ formed by the union of \ego $G_{N(v)}$ and $v*$ with incident edges $\{(v^*, u)\in E\}$. We assign the edge probability to 0.05 uniformly, and randomly select 10 influential seeds from $N(v)$. The top-1 result of \trussdiv achieves the highest activated probability of 0.47 on the average of 10,000 runs, which verifies the superiority of our truss-based structural diversity model. Moreover, the ego-network of ``Gabor Fichtinger '' by \trussdiv has the largest density of 5.18.

}




\eat{
\stitle{Exp-7 (Social contagion on local neighborhood networks)}: In this experiment, we test whether an individual gets affected more easily when it receives the influence from more social contexts. We evaluate the activated probability of a center vertex influenced by its neighbors on its \ego.  We apply \TSD on the DBLP network to compute one vertex $v^*$ with the largest truss-based structural diversity $score(v^*)=6$ for $k=5$, which is also described in the Exp-5. Then, we construct a graph $H^*$ formed by the union of \ego $G_{N(v)}$ and $v*$ with incident edges $\{(v^*, u)\in E\}$. Now, we select a set of influential seeds from $N(v)$. We test and compare two different methods of seed selection. One is to randomly select from $N(v)$, denoted by \random. Another is to select seeds evenly and randomly from each maximal connected $k$-truss in $G_{N(v)}$, denoted by \trussdiv. We perform the influence propagation using these seeds. Figures~\ref{fig.propagate_trusscom_5}(a) and~\ref{fig.propagate_trusscom_5}(b) report the number of activated times and activated rate of the two methods. The results are reported on average in the 10 times of runs. We can see that \trussdiv achieves higher activated numbers and activated rates than \random, which exhibits the effectiveness of truss-based structural diversity. 

}

\section{Related Work}\label{sec.related}

Our work is closely related to structural diversity search and $k$-truss mining and indexing.
\eat{Our work is  related to structural diversity search and $k$-truss mining.}


\subsection{Structural Diversity Search}
Social decisions can significantly depend on the social network structure \cite{fowler2010cooperative,dong2017structural}. Ugander et al. \cite{UganderBMK12} conducted extensive studies on the Facebook  to show that the contagion probability of an individual is strongly related to its structural diversity in the \ego. Motivated by \cite{UganderBMK12}, Huang et al. \cite{huang2013top} studies the problem of structural diversity search to find $k$ vertices with the highest structural diversity in graphs. To improve the efficiency of \cite{huang2013top}, Chang et al. \cite{chang2017scalable} proposes a scalable algorithm by enumerating each triangle at most once in constant time. Structural diversity search based on a different $k$-core model is further studied in \cite{XHuang15}. The $k$-truss-based structural diversity studied in this work is also called $k$-brace-based structural diversity \cite{UganderBMK12}. In addition, there also exist numerous studies on \emph{top-$k$ query processing} \cite{IlyasBS08,ZhangCM02,AgrawalGHI09,AngelK11,QinYC12} by considering diversity in the returned ranking results. 
However, the problem of structural diversity search based on $k$-truss model has not been investigated by any study mentioned above. 

\eat{
\subsection{Influence Maximization} 
Another related but different problem of influence maximization, is to target a set of $k$ seed vertices to maximize the expected spread of influence in the social network. Kempe et al. \cite{DBLP:conf/kdd/KempeKT03} is the first to formulate the problem of influence maximization as discrete optimizations. Several heuristic techniques \cite{chen2009efficient} \cite{chen2010scalable} \cite{goyal2011celf++} are proposed to address the problem of influence maximization under different cascading models. To improve efficiency and scalability  on large networks, sketches and reverse influence sampling are developed \cite{borgs2014maximizing,cohen2014sketch,tang2014influence,tang2015influence,huang2017revisiting}.  Real-time solutions for influence maximization on dynamic social stream are proposed in \cite{wang2017real}. Ke et al. \cite{DBLP:conf/sigmod/KeKC18} investigates to extend the influence maximization problem by finding the target seeds and relevant tags jointly. A recent tutorial \cite{aslay2018influence} presents a survey of state-of-the-art studies on influence maximization in online social networks. 
\jinbin{Totally different from above work, our problem is to find the $k$ vertices that most easily get affected, insteading of affecting others. }
}

\eat{
\subsection{K-Truss Mining and Indexing}
In the literature, there exist  a large number of studies on $k$-truss mining and indexing. As a cohesive subgraph, $k$-truss requires that each edge has at least $(k-2)$ triangles within this subgraph \cite{cohen2008}. Interestingly, several equivalent concepts of $k$-truss termed as different names are independently studied. For example, $k$-truss has been named  as the 
$k$-mutual-friend subgraph \cite{zhao2012large}, $k$-brace \cite{UganderBMK12}, and triangle $k$-core \cite{YZhang12}. The task of truss decomposition is to find the non-empty $k$-truss for all possible $k$'s in a graph. \cite{WangC12} proposes a fast in-memory algorithm for truss decomposition. In addition, truss decomposition has also been studied in various computing settings (e.g., external-memory algorithms \cite{WangC12}, MapReduce algorithms \cite{Cohen09,chen2014distributed}, and shared-memory parallel systems \cite{smith2017truss,shao2014efficient}) and different types of graphs (e.g., uncertain graphs \cite{zou2017truss}, 
directed graphs \cite{takaguchi2016cycle},  
and dynamic graphs \cite{YZhang12,huang2014querying}). 
\eat{Recently, several 
$k$-truss-based indexes are proposed for the efficient retrieval of communities. For example, \cite{huang2014querying} proposes a TCP-index for $k$-truss community search. \cite{akbas2017truss} proposes a new index Equi-Truss to improve TCP-index.}
Recently, several community models are built on the cohesive structure of $k$-truss \cite{huang2014querying,akbas2017truss,huang2015approximate,zheng2017finding,huang2017attribute}. Meanwhile, a number of 
$k$-truss-based indexes are proposed for the efficient retrieval of communities. For example, \cite{huang2014querying} proposes a TCP-index for $k$-truss community search. \cite{akbas2017truss} proposes a new index Equi-Truss to improve TCP-index.
One significant difference lies as follows. TCP-index and Equi-Truss both take the global trussness and triangle connectivity on the whole graph into consideration, while TSD-index is that only focuses on the local \egos without considering the constraint of triangle connectivity. A more detailed comparison can be found in  Appendix-C. In contrast to the above studies, $k$-truss-based structural diversity search is firstly studied in this paper. Leveraging  the micro-network analysis of \egos and edge trussness, we propose a novel tree-shaped structure of \TSDindex and efficient query processing algorithms to address our problem. 


}

\subsection{K-Truss Mining and Indexing}
In the literature, there exist a large number of studies on $k$-truss mining and indexing. As a cohesive subgraph, $k$-truss requires that each edge has at least $(k-2)$ triangles within this subgraph \cite{cohen2008}. Interestingly, several equivalent concepts of $k$-truss termed as different names are independently studied. For example, $k$-truss has been named  as the $k$-dense community \cite{saito2008extracting,gregori2011k}, $k$-mutual-friend subgraph \cite{zhao2012large}, $k$-brace \cite{UganderBMK12}, and triangle $k$-core \cite{YZhang12}. The task of truss decomposition is to find the non-empty $k$-truss for all possible $k$'s in a graph. Wang and Cheng \cite{WangC12} propose a fast in-memory algorithm for truss decomposition. In addition, truss decomposition has also been studied in various computing settings (e.g., external-memory algorithms \cite{WangC12}, MapReduce algorithms \cite{Cohen09,chen2014distributed}, and shared-memory parallel systems \cite{smith2017truss}) and different types of graphs (e.g., uncertain graphs \cite{zou2017truss,huang2016truss,EsfahaniW0T019}, directed graphs \cite{takaguchi2016cycle}, and dynamic graphs \cite{YZhang12,huang2014querying}). 
\jinbin{
\eat{Recently, several 
$k$-truss-based indexes are proposed for the efficient retrieval of communities. For example, \cite{huang2014querying} proposes a TCP-index for $k$-truss community search. \cite{akbas2017truss} proposes a new index Equi-Truss to improve TCP-index.}
Recently, several community models are built on the 
$k$-truss \cite{huang2014querying,akbas2017truss,zheng2017finding,huang2017attribute}. 
Meanwhile, a number of $k$-truss-based indexes (e.g., TCP-index \cite{huang2014querying} and Equi-Truss \cite{akbas2017truss}) are proposed for another problem of community search, which supports the efficient retrieval of communities.
A detailed comparison of truss-based indexes is made below.

\stitle{Truss-based Index Comparison}. 
We introduce and compare three different indexes based on $k$-truss, including our \TSDindex, TCP-index \cite{huang2014querying}, and Equi-Truss \cite{akbas2017truss}. Among them, 
TCP-index and Equi-Truss are developed for $k$-truss community search \cite{huang2014querying}. A $k$-truss community is a maximal connected $k$-truss such that all edges are triangle connected via a series of adjacent triangles within this community. Huang et al. \cite{huang2014querying} proposes a tree-shaped structure of TCP-index for efficiently finding $k$-truss communities. To speed up the discovery of $k$-truss communities, Akbas and Zhao  \cite{akbas2017truss} propose a novel indexing technique of Equi-Truss by compressing TCP-index into a more compact structure. 

Specifically, the major differences of our \TSDindex in contrast to state-of-the-art TCP-index \cite{huang2014querying}  and Equi-Truss \cite{akbas2017truss} are listed as follows. First, TCP-index and Equi-Truss take the global trussness and triangle connectivity on the whole graph into consideration, while TSD-index only focuses on the local neighborhood induced subgraph without considering the triangle constraint. Second, the index construction of \TSDindex costs much more expensive than those of TCP-index and Equi-Truss, in terms of their time complexities \cite{huang2014querying,akbas2017truss}. Last but not least, \TSDindex and TCP-index have tree-shaped structures with different edge weights, and more importantly the meaning of edge weights are substantially different.  For example, Figure~\ref{fig.tcp_tsd}(a) shows the graph $G$. Consider a vertex $q_1$ in $G$, Figures~\ref{fig.tcp_tsd}(b) and~\ref{fig.tcp_tsd}(c) respectively show the corresponding TCP-index of $q_1$ and \TSD-index of $q_1$. All edges have different weights in two indexes in Figures~\ref{fig.tcp_tsd}(b) and \ref{fig.tcp_tsd}(c). Consider an edge $(q_2, q_3)$ of the TCP-index in Figure~\ref{fig.tcp_tsd}(b),  indicates that $(q_2, q_3)$ will be involved in a 4-truss community as the global graph $G$. However, the edge $(q_2, q_3)$ of the \TSDindex in Figure~\ref{fig.tcp_tsd}(c),  indicates that $(q_2, q_3)$ will be involved in a maximal connected 2-truss in the \ego  $G_{N(q_1)}$.

\begin{figure}[t]
\centering \mbox{
\subfigure[Graph $G$]{\includegraphics[width=0.3\linewidth]{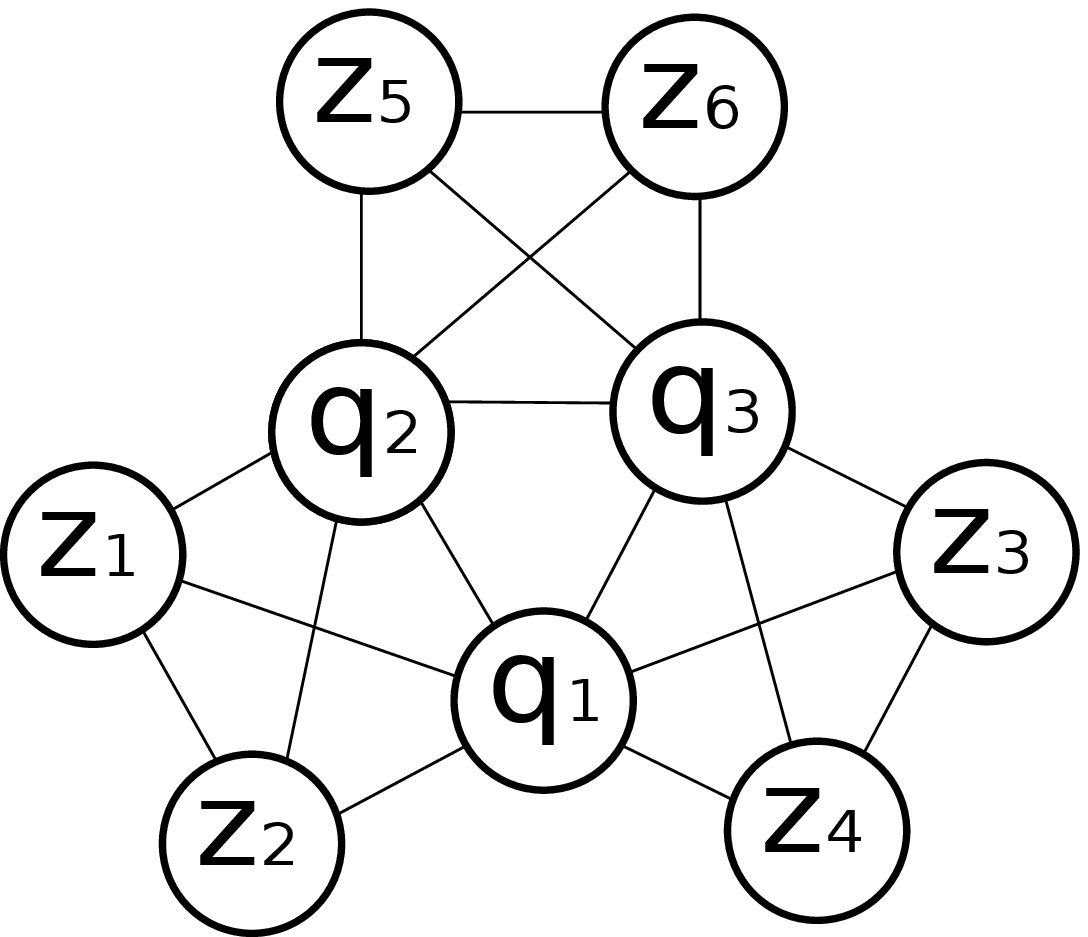}} \quad
\subfigure[TCP-Index of $q_1$]{\includegraphics[width=0.3\linewidth]{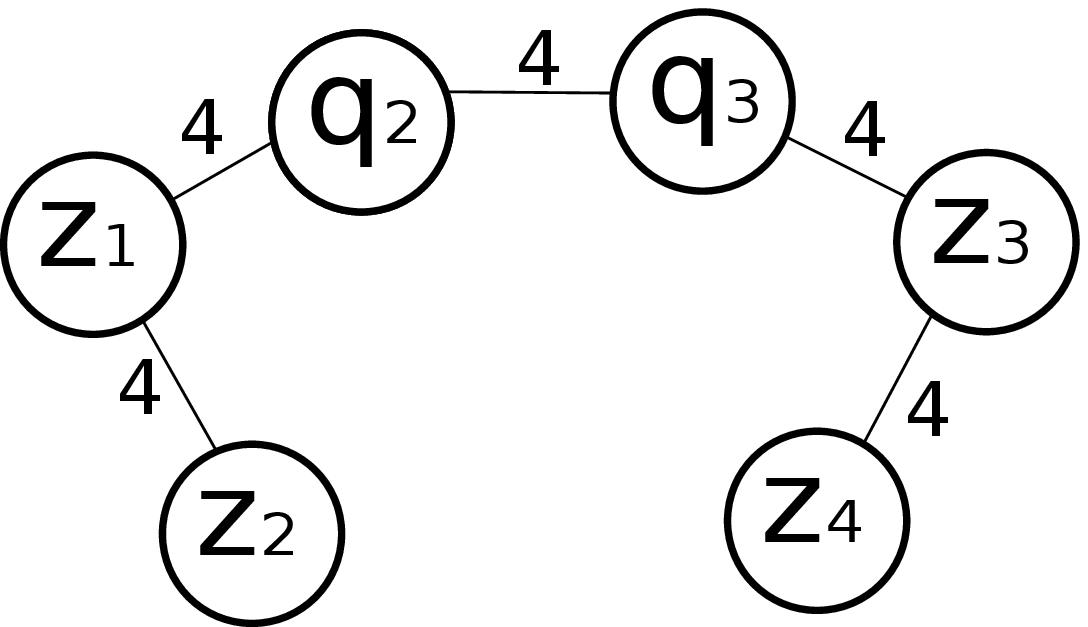}} \quad
\subfigure[TSD-Index of $q_1$]{\includegraphics[width=0.3\linewidth]{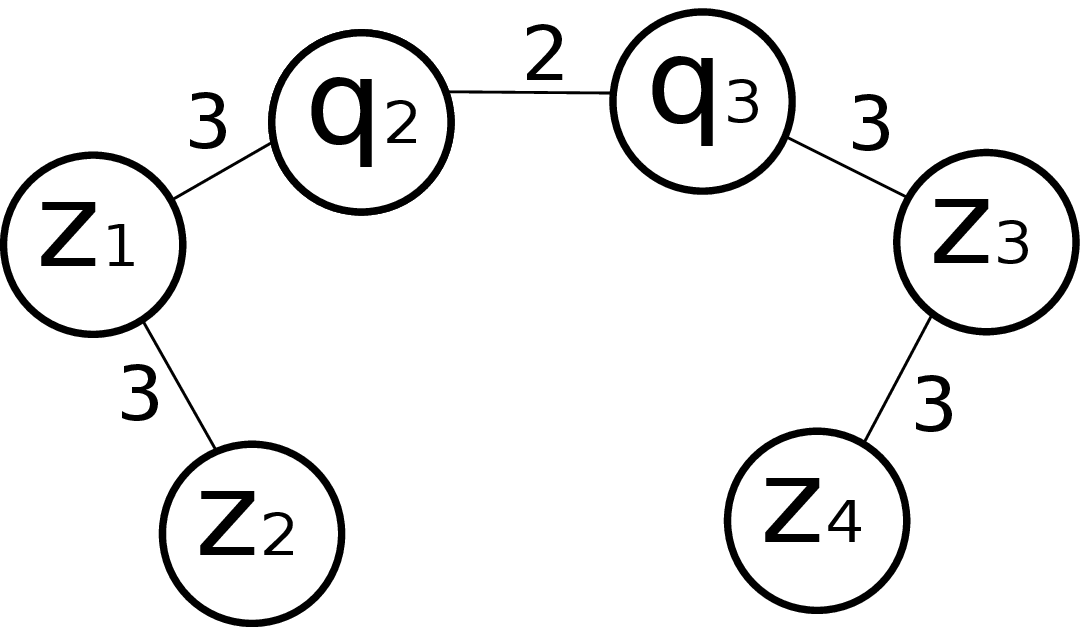}}
 }
\caption{Comparison: TSD-Index and TCP-Index}
\label{fig.tcp_tsd}
\end{figure}

 In contrast to the above studies, $k$-truss-based structural diversity search is firstly studied in this paper. Leveraging the micro-network analysis of \egos, we propose a novel tree-shaped structure of \TSDindex and efficient  algorithms to address our problem. 

 }


\section{Conclusions}\label{sec.con}
In this paper, we investigate the problem of truss-based structural diversity search over graphs. We propose a truss-based structural diversity model to discover social contexts, which has a strong decomposition to break up weak-tied social groups  in large-scale complex networks. We propose several efficient algorithms to solve the top-$r$ truss based structural diversity search problem. We first develop efficient techniques of graph sparsification and an upper bound for pruning. We also propose a well-designed and elegant \TSDindex for keeping the information of structural diversity which solves the problem in time linear to graph size. Moreover, we develop a new \GCT algorithm based on \ADVindex. 
Experiments also show the effectiveness and efficiency of our proposed truss-based structural diversity model and algorithms, against state-of-the-art component-based and core-based methods. 

\bibliographystyle{abbrv}
{\small
\bibliography{strucdiv}
}
\begin{IEEEbiography}[{\includegraphics[width=1in,height=1.25in,clip,keepaspectratio]{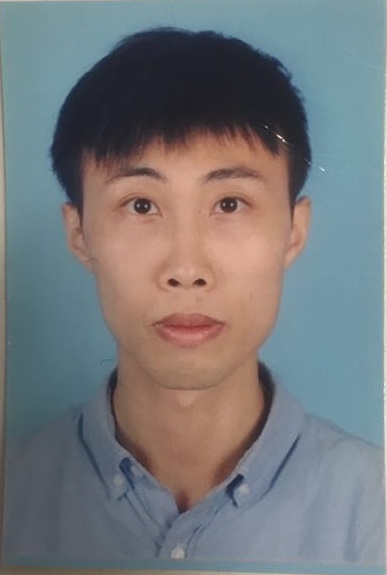}}]{Jinbin Huang}
 Jinbin Huang received his bachelor degree in Computer Science in South China University of Technology (SCUT). He is now a PhD student in Hong Kong Baptist University (HKBU). 
\end{IEEEbiography}
\vspace{-1cm}
\begin{IEEEbiography}[{\includegraphics[width=1in,height=1.25in,clip,keepaspectratio]{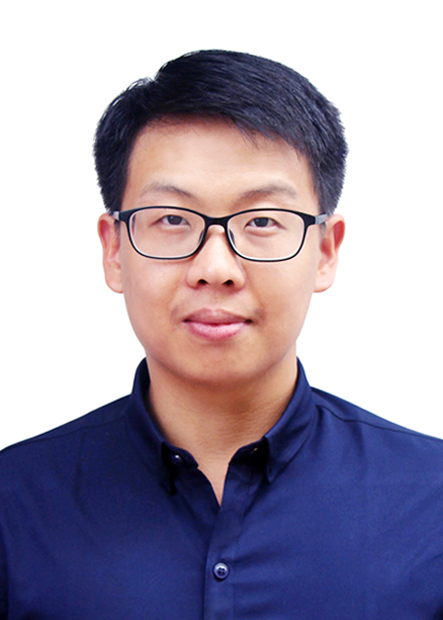}}]{Xin Huang}
Xin Huang received the PhD degree from the Chinese University of Hong Kong (CUHK) in 2014. He is currently an Assistant Professor at Hong Kong Baptist University. His research interests mainly focus on graph data management and mining.
\end{IEEEbiography}
\vspace{-1cm}
\begin{IEEEbiography}[{\includegraphics[width=1in,height=1.25in,clip,keepaspectratio]{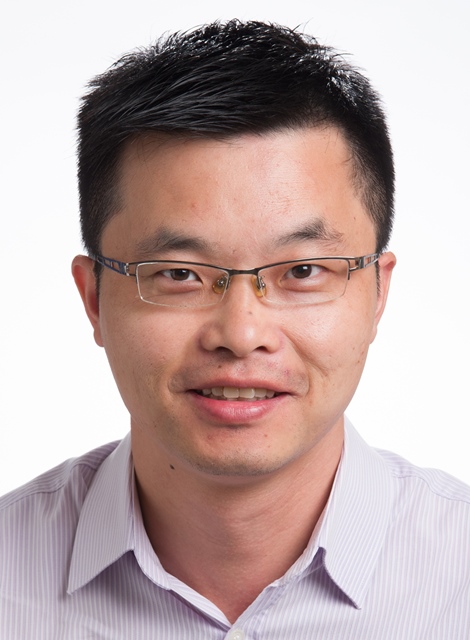}}]{Jianliang Xu}
Jianliang Xu received the Ph.D. degree from The Hong Kong University of Science and Technology. He is currently a Professor with the Department of Computer Science, Hong Kong Baptist University. 
He is an associate editor of the IEEE Transactions on Knowledge and Data
Engineering and the Proceedings of the VLDB Endowment 2018.
\end{IEEEbiography}


\end{document}